\documentclass[aos,preprint]{imsart}

\RequirePackage[OT1]{fontenc}
\RequirePackage{amsthm,amsmath}
\RequirePackage{natbib}
\RequirePackage[colorlinks,citecolor=blue,urlcolor=blue]{hyperref}

\startlocaldefs
\theoremstyle{plain}
\theoremstyle{exampstyle}

\newtheorem{thm}{Theorem}[section]

\newtheorem{example}[thm]{Example}

\newtheorem{theorem}{Theorem}[section]
\newtheorem{rem}[thm]{Remark}
\newtheorem{lemma}[theorem]{Lemma}
\newtheorem{proposition}[theorem]{Proposition}
\endlocaldefs

\providecommand{\customgenericname}{}
\newcommand{\newcustomtheorem}[2]{%
	\newenvironment{#1}[1]
	{%
		\renewcommand\customgenericname{#2}%
		\renewcommand\theinnercustomgeneric{##1}%
		\innercustomgeneric
	}
	{\endinnercustomgeneric}
}

\newcustomtheorem{customthm}{Theorem}
\newcustomtheorem{customlemma}{Lemma}

\hyphenation{constraint}
\hyphenation{splits}

\usepackage{amsmath,amssymb,amsfonts, amsthm, amsbsy, mathtools, epsfig}
\usepackage{subfigure}
\usepackage[usenames]{color}
\usepackage[figuresright]{rotating}
\usepackage{boxedminipage}
\usepackage[colorlinks,citecolor=blue,urlcolor=blue]{hyperref}
\usepackage{epstopdf}
\usepackage{natbib}
\usepackage{enumerate}
\usepackage{verbatim}
\usepackage{bbm}
\usepackage{xcolor}
\usepackage{bm}
\usepackage{mathrsfs,color,dsfont}
\usepackage{algorithm}
\usepackage{algorithmic}

\definecolor{DSgray}{cmyk}{0,1,0,0}

\newcommand{\beq}{\begin{equation}}
\newcommand{\eeq}{\end{equation}}
\newcommand{\beas}{\begin{eqnarray*}}
\newcommand{\eeas}{\end{eqnarray*}}
\newcommand{\bea}{\begin{eqnarray}}
\newcommand{\eea}{\end{eqnarray}}
\newcommand{\bei}{\begin{itemize}}
\newcommand{\eei}{\end{itemize}}
\newcommand{\ben}{\begin{enumerate}}
\newcommand{\een}{\end{enumerate}}
\newcommand{\bet}{\begin{theorem}}
\newcommand{\eet}{\end{theorem}}
\newcommand{\bel}{\begin{lemma}}
\newcommand{\eel}{\end{lemma}}
\newcommand{\bep}{\begin{proposition}}
\newcommand{\eep}{\end{proposition}}
\newcommand{\bed}{\begin{definition}}
\newcommand{\eed}{\end{definition}}
\newcommand{\bec}{\begin{corollary}}
\newcommand{\eec}{\end{corollary}}
\newcommand{\bex}{\begin{example}}
\newcommand{\eex}{\end{example}}

\newcommand{\al}{\alpha}

\newcommand{\argmin}{\mathop{\rm arg\min}}

\renewcommand{\vec}{\boldsymbol}
\def\0{\boldsymbol{0}}

\def\x{\boldsymbol{x}}

\def\R{\mathbb{R}}
\def\V{\boldsymbol{V}}
\def\D{\boldsymbol{D}}
\def\A{\boldsymbol{A}}

\def\x{\boldsymbol{x}}

\def\be{\boldsymbol{\beta}}

\def\C{\boldsymbol{C}}
\def\X{\boldsymbol{X}}

\def\SS{\boldsymbol{S}}
\def\thh{\boldsymbol{\theta}}

\def\0{\boldsymbol{0}}
\def\r{\boldsymbol{r}}
\def\U{\boldsymbol{U}}

\def\x{\boldsymbol{x}}

\def\be{\boldsymbol{\beta}}

\def\X{\boldsymbol{X}}
\def\v{\boldsymbol{v}}
\def\y{\boldsymbol{y}}
\def\e{\boldsymbol{e}}

\def\pr{\mathbb{P}} 
\def\ep{\mathbb{E}} 

\let\hat\widehat

\def\u{\boldsymbol{u}}
\def\v{\boldsymbol{v}}

\begin{document}

\begin{frontmatter}
\title{Quantile Regression Under Memory Constraint}
\runtitle{Quantile Regression Under Memory Constraint}

\begin{aug}
\author{\fnms{Xi}  \snm{Chen}\thanksref{m1}\ead[label=e1]{xchen3@stern.nyu.edu}},
\author{\fnms{Weidong}  \snm{Liu}\thanksref{t1,m2}\ead[label=e2]{weidongl@sjtu.edu.cn}},
\author{\fnms{Yichen}  \snm{Zhang}\thanksref{m1}\ead[label=e3]{yzhang@stern.nyu.edu}}
\thankstext{t1}{Research supported by NSFC, Grant No.11431006 and 11690013,  the Program for Professor of Special Appointment (Eastern Scholar) at Shanghai Institutions of Higher Learning,  Youth Talent Support Program, 973 Program (2015CB856004) and a grant from Australian Research Council.}
\runauthor{X. Chen, W. Liu and Y. Zhang}

\affiliation{New York University\thanksmark{m1} and Shanghai Jiao Tong University \thanksmark{m2}}

\address{X. Chen\\
Y. Zhang\\
Information, Operations and Management Sciences\\
Stern School of Business\\
New York University\\
New York, New York 10012\\
USA\\
\printead{e1}\\
\phantom{E-mail:\ }\printead*{e3}}

\address{W. Liu\\
Department of Mathematics\\
Institute of Natural Sciences and Moe-LSC\\
Shanghai Jiao Tong University\\
Shanghai\\
China\\
\printead{e2}}

\end{aug}

\begin{abstract}
This paper studies the inference problem in quantile regression (QR) for a large sample size $n$ but under a limited memory constraint, where the memory can only store a small batch of data of size $m$. A natural method is the na\"ive divide-and-conquer approach, which splits data into batches of size $m$, computes the local QR estimator for each batch, and then aggregates the estimators via averaging. However, this method only works when $n=o(m^2)$ and is computationally expensive. This paper proposes a computationally efficient method, which only requires an initial QR estimator on a small batch of data and then successively refines the estimator via multiple rounds of aggregations.
Theoretically,  as long as $n$ grows polynomially in $m$, we establish the asymptotic normality for the obtained estimator and show that our estimator with only a few rounds of aggregations achieves the same efficiency as the QR estimator computed on all the data. Moreover, our result allows the case that the dimensionality $p$ goes to infinity. The proposed method can also be applied to address the QR problem under distributed computing environment (e.g., in a large-scale sensor network) or for real-time streaming data.
\end{abstract}

\begin{keyword}[class=MSC]
\kwd[Primary ]{62F12}
\kwd[; secondary ]{62J02}
\end{keyword}

\begin{keyword}
\kwd{Quantile regression}
\kwd{sample quantile}
\kwd{divide-and-conquer}
\kwd{distributed inference}
\kwd{streaming data}
\end{keyword}

\end{frontmatter}

\section{Introduction}

The development of modern technology has enabled data collection of unprecedented size, which leads to large-scale datasets that cannot be fit into memory or are distributed in many machines over limited memory. For example, the memory of a personal computer only has a storage size in GBs while  the dataset on the hard disk could have a much larger size. In addition, in a sensor network, each sensor is designed to collect and store a limited amount of data, and computations are performed via communications and aggregations among sensors (see, e.g., \cite{Wang17sensor}). Other examples include high-speed data streams that are transient and arrive at the processor at a high speed.  In online streaming computation, the memory is usually limited as compared to the length of the data stream \citep{Gama2013,zhangqi2007}. Under  memory constraints in all these scenarios, classical statistical methods, which are developed under the assumption that the memory can fit all the data, are no longer applicable; thus, many estimation and inference methods need to be re-investigated. For example, suppose that there are $n$ samples for some very large $n$, a fundamental question in data analysis is as follows:

	\bigskip
\addtolength{\fboxsep}{5pt}
\begin{boxedminipage}{.9\textwidth}
How to calculate the sample quantiles of $n$  samples when the memory can only store $m$ samples with $n\gg m$?
\end{boxedminipage}
\bigskip

As one of the most popular interview questions from high-tech companies, this problem has attracted much attention from computer scientists over the last decade; see \cite{manku1998, greenwald2004power, zhangqi2007, guha2009} and the references therein.  However, this is mainly a computation problem with a fixed dataset, which does not involve any statistical modeling.


Motivated by this sample quantile calculation problem, we study a more general problem of quantile regression (QR) under memory constraints.
Quantile regression, which models the conditional quantile of the response variable given covariates, finds a wide range of applications to survival analysis (e.g., \cite{Wang:14:Sur,Xu:16:Sur}), health care (e.g., \cite{Sherwood:13,Luo:13:gap}), and economics (e.g., \cite{Belloni:17}).  
In the classical QR model, assume that there are $n$ \emph{i.i.d.} samples $\{(\X_i, Y_i)\}$ from the following model
\begin{equation}\label{eq:linear_model}
  Y_i=\X_i'\be(\tau)+\epsilon_i, \quad \text{for} \quad i=1,\ldots,n,
\end{equation}
where $\X_i'=(1, X_{i1}, \ldots, X_{ip})$ is the random covariate vector  with the dimension $p+1$ drawn from a common population $\X$. The error $\epsilon_i$ is an unobserved random variable satisfying $\mathbb{P}(\epsilon_i\leq 0|\X_i)=\tau$ for some specified $0<\tau<1$ (known as the quantile level). In other words, $\X_i'\be(\tau)$ is the $\tau$-th quantile of $Y_i$ given $\X_i$.
When all the $n$ samples can be fit into memory, one can estimate $\be(\tau)$ via the classical QR estimator \citep{koenker2005quantile},
\begin{equation}\label{eq:QR}
\widehat{\be}_{QR}=\argmin\limits_{\be\in\mathbb{R}^{p+1}}\sum\limits_{i=1}^n\rho_\tau(Y_i-\X_i'\be),
\end{equation}
where  $\rho_{\tau}(x)=x(\tau-I\{x\leq 0\})$ is the  asymmetric absolute deviation function (a.k.a. check function) and $I(\cdot)$ is the indicator function. However, when samples are distributed across many machines  or the sample size $n$ is extremely large and thus the samples cannot be fit into memory,  it is natural to ask the following question:

\bigskip
\addtolength{\fboxsep}{5pt}
\begin{boxedminipage}{.9\textwidth}
How to estimate and conduct inference about $\be(\tau)$ when the memory can only store $m$ samples with $n\gg m$?
\end{boxedminipage}
\bigskip



The divide-and-conquer (DC), as one of the most  important algorithms in computer science,  has been commonly adopted to deal with this kind of big data challenge. We below describe a general DC algorithm for statistical estimation. Specifically, we split the data indices $\{1,2,...,n\}$  into $N$ subsets $\mathcal{H}_{1},...,\mathcal{H}_{N}$ with equal size $m$ and $N=n/m$.   Correspondingly, the entire dataset $\{Y_{i},\X_{i},1\leq i\leq n\}$ is divided into $N$ batches $\mathcal{D}_{1},\ldots,\mathcal{D}_{N}$, where $\mathcal{D}_k=\{Y_i,\X_i,i\in \mathcal{H}_k\}$ for $1 \leq k \leq N$.
By swapping each batch of data $\mathcal{D}_k$ into the memory, one constructs a low dimension statistic $T_{k}=g_k(\mathcal{D}_{k})$  for $\mathcal{D}_k$ with some function $g_k(\cdot)$. Then the estimator $\widehat{\be}$ is obtained by the aggregation of $\{T_{k}\}_{k=1}^N$ (i.e., $\widehat{\be}=G(T_{1},\ldots,T_{N})$ for some aggregation function $G(\cdot)$).
In recent years, this DC framework has been widely adopted in distributed statistical inference (see, e.g., \cite{li2013statistical,chen2014split,battey2015distributed,zhao2016partially,shi2016massive,banerjee2016divide,volgushev2017distributed} and Section \ref{sec:related} for detailed descriptions).

In addition to memory-constrained estimation on a single machine (where the size of the dataset is much larger than memory size), another natural situation for using the DC framework comes from the application of large-scale wireless sensor networks (see, e.g.,
\cite{shrivastava2004medians,greenwald2004power, wain2006, huang2011sampling,Wang17sensor}).
In a sensor network with $N$ sensors,  the data are collected and stored in different sensors. Moreover, due to limited energy carried by sensors, communication cost is one of the main concerns in  data aggregation. The samples are not transferred to the base station or neighboring sensors directly. Instead, each sensor first summarizes the samples into a low dimensional statistic $T_{k}$, which can be transferred with a low communication cost.  Figure \ref{fig:sensor} visualizes a typical sensor network with data flows as a routing tree with the base station as the root.
An internal sensor node in the $i$-th layer receives  statistics $T_{(\cdot)}$  from its children nodes in $(i+1)$-th layer, and then combines received statistics with its own $T_{(\cdot)}$ and sends the resulting statistic to its parent in the $(i-1)$-th layer. The final estimator in the base station (or central node) can be computed by $\widehat{\be}=G(T_{1},\ldots,T_{N})$. 

\begin{figure}[!t]
\centering
  \includegraphics[width=0.4\textwidth]{./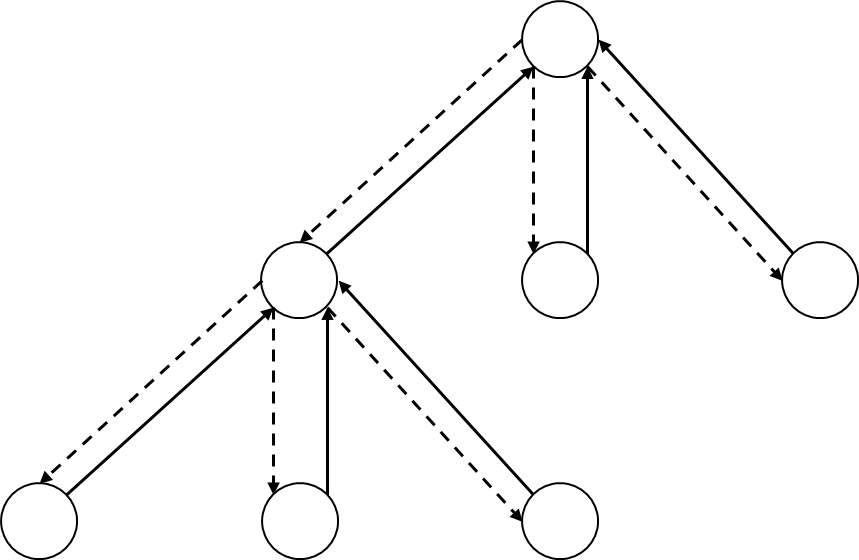}
  \caption{An illustration of tree-structured sensor network where the root is the base station. Each sensor collects its a small batch of data. The dashed lines indicate the prior information sent by the base station to all sensors (e.g., initializations) and the solid lines indicate the information flow (i.e., the paths for transferring local statistics $T_{(\cdot)}$).}
  \label{fig:sensor}
\end{figure}

A critical problem  in statistical DC framework is how to construct local statistics $g_k(\cdot)$ and aggregation function $G(\cdot)$.
In many existing studies, a typical choice of $g_k(\cdot)$ is to use the same estimator as the one designed for the estimation from the entire data. For example, in QR, one may choose
 \begin{eqnarray}\label{dc1}
\widehat{\be}_{QR,k}:=g_k(\mathcal{D}_{k})=\argmin\limits_{\be\in\mathbb{R}^{p+1}}\sum\limits_{i\in\mathcal{H}_{k}}\rho_\tau(Y_i-\X_i'\be),
\end{eqnarray}
and  a simple averaging  function $G(T_{1},\ldots,T_{N})= \frac{1}{N} \sum_{k=1}^N T_k$,  where $T_k=g_k(\mathcal{D}_{k})$ is the local statistic. We call this kind of DC methods the na\"ive-DC algorithm where the estimator is denoted by $\widehat{\be}_{ndc}=\sum_{k=1}^{N}\widehat{\be}_{QR,k}/N$.  Despite its popularity, the na\"ive-DC algorithm might fail when $n/m$ is large. For example, in a special case of quantile estimation (i.e., $p=0$),  it is straightforward to show that $\sqrt{n}|\widehat{\beta}_{ndc}-\beta(\tau)|\rightarrow \infty$ in probability when $n/m^{2}\rightarrow \infty$ (see Theorem \ref{thm:sample_quantile} in Appendix). Similar phenomenon occurs for general $p$, see   \cite{volgushev2017distributed}. In fact, the local QR  estimators $\widehat{\be}_{QR,k}$
are biased estimators with the bias $O(1/m)$. Although the averaging aggregation is useful for variance reduction, it is unable to reduce the bias, which  makes the na\"ive-DC fail when $n$ is large as compared to $m$. In  the DC framework,  bias reduction in $T_{k}$ is more critical than the variance reduction. This is a fundamental difference from many classical inference problems that require to balance the variance and bias (cf. nonparametric estimation). Furthermore, in the na\"ive-DC algorithm, we need to solve $N=n/m$ optimization problems, which could be computationally expensive.

The deficiency of the na\"ive-DC approach calls for a new DC scheme to achieve the following two important goals in distributed inference.
\begin{enumerate}
    \item The obtained estimator $\widehat{\be}$ should achieve the same statistical efficiency as merging all the data together under a weak condition on the sample size $n$ as a function of $m$. More precisely, it is desirable to remove the constraint $n=o(m^2)$ in na\"ive-DC so that the procedure can be applied to situations such as large-scale sensor networks, where the number of sensors $N=n/m$ is excessively large.
    \item The second goal is on the \emph{computational efficiency}. For example, the na\"ive-DC requires solving a non-smooth optimization for computing the local statistic $T_{k}=g_k(\mathcal{D}_k)$. Since there is no explicit formula for $g_k(\cdot)$, the computation is quite heavy (especially considering each sensor has a limited computational power).  
\end{enumerate}

This paper develops new constructions of $g_k(\cdot)$ and $G(\cdot)$ with a multi-round aggregation scheme in  Algorithm \ref{algo:dc-leqr}, which simultaneously  achieves the  two goals referred above. Our method is applicable to both scenarios of small memory on a single machine and large-scale sensor networks.  Instead of using the local QR estimation as in \eqref{dc1}, we adopt a smoothing technique in the literature (see, e.g., \cite{horowitz1998bootstrap, whang2006smoothed, wang2012corrected, pang2012variance, wu2015smoothed}), and propose a new estimator for QR called \emph{linear estimator of QR (LEQR)}, which serves as the cornerstone for our DC approach.  Our LEQR has an explicit formula in the form of direct sums of the transformation of $\{Y_{i}, \X_i\}$, which is  quite different from the optimization-based QR estimator in \eqref{dc1}. It is also worthwhile noting that the linearity is the most desirable property  in the DC framework for both theoretical development and computation efficiency.

The high-level description of the proposed multi-round DC approach is provided as follows. Our method only needs to compute an initial QR estimator $\widehat{\be}_0$  based on a  small part of samples (e.g., $\mathcal{D}_{1}$). Based on $\widehat{\be}_0$, for each batch of data $\mathcal{D}_k$, we compute local statistics $\{T_k\}$ using the proposed LEQR. The local statistics are in a simple form of weighted sums of $\X_i$ and $\X_i \X_i'$. The aggregation function is constructed by adding up the local statistics and then solving a linear system, which gives the first-round estimator $\widehat{\be}^{(1)}$. Now, we can repeat this DC algorithm  using $\widehat{\be}^{(1)}$ as the initial estimator. After $q$ iterations, we denote our final estimator by $\widehat{\be}^{(q)}$.  Theoretically,  under some conditions on the growth rate of $p \rightarrow \infty$ as a function $m$ and $n$, we first establish the Bahadur representation of $\widehat{\be}^{(q)}$ and show that the Bahadur remainder term achieves a nearly optimal rate (up to a log-factor) when $q$ satisfies some mild conditions (see \eqref{eq:q} and  Theorem \ref{thm:beta}). Furthermore, as long as $n=o(m^{A})$ for some constant $A$, the final estimator $\widehat{\be}^{(q)}$ achieves the same asymptotic efficiency as the QR estimator \eqref{eq:QR} computed on the entire data (see Theorem \ref{thm:clt}).


The new DC approach is particularly suitable for QR in sensor networks in which communication cost is one of the major concerns. The proposed procedure  only requires  $O(p^2)$ bits communication between any two sensors. We also highlight  two other important applications of our method:
\begin{enumerate}

\item Our method can be adapted to make inference for online streaming data, which arrives at the processor at a high speed.  Our method provides a sequence of successively refined estimators of $\be(\tau)$ for streaming data and can deal with an arbitrary length of data stream.  The online quantile estimation problem (which is a special case of QR with $p=0$) for streaming data has been extensively studied in computer science literature (see, e.g.,  \cite{munro1980, zhangqi2007, guha2009, wang2003exp}). However, these works mainly focus on developing approximations to the sample quantile, which are insufficient to obtain limiting distribution results for the purpose of inference. We extend the quantile estimation to the more general QR problem and provide the asymptotical normality result for the proposed online  estimator  (see Theorem \ref{th4.3}).

 \item Our method also serves as an efficient optimization solver for  classical QR on a single machine.   As compared to the standard interior-point method for solving the QR estimator in \eqref{eq:QR} that requires the computational complexity of $O(n^{1.25} p^3 \log n)$ \citep{portnoy1997}, our approach requires $O(m^{1.25} p^3\log m + np^2+p^{3})$ since it only solves an optimization on a small batch of data for construction initial estimator. Therefore, our method is computationally more efficient. 


\end{enumerate}
We will illustrate them in Section \ref{sec:est} after we provide the detailed description of the method.

\subsection{Organization and notations}

The rest of the paper is organized as follows. In Section \ref{sec:related}, we review the related literature on recent works on distributed estimation and inference. Section \ref{sec:method} describes the proposed inference procedure for QR under memory constraints. Section \ref{sec:theory} presents the theoretical results.  In Section \ref{sec:simu}, we demonstrate the performance of the proposed inference procedure by simulated experiments, followed by conclusions in Section \ref{sec:con}. The proofs and additional experimental results are provided in Appendix.

In the QR model in \eqref{eq:linear_model}, let $F(\cdot|\vec{x})$ and $f(\cdot|\vec{x})$  denote the CDF and PDF of $\epsilon$ conditioning on $\X=\vec{x}$, respectively, throughout the paper. Then, for any $\vec{x}$, we have $F(0|\vec{x})=\tau$.  \label{page:Omega}For two sequences of real numbers $f(n)$ and $g(n)$, let $f(n)=\Omega (g(n))$ denote that $f$ is bounded below by $g$ (up to constant factor) asymptotically. For a set of random variables $X_n$ and a corresponding set of constants $a_n$, $X_n=O_p(a_n)$ means that $X_n/a_n$ is stochastically bounded and $X_n=o_p(a_n)$ means that $X_n/a_n$ converges to zero in probability as $n$ goes to infinity. For a real number $c$, we will use $\lfloor c \rfloor$ to denote largest integer less than or equal to  $c$.
 Finally, denote the Euclidean norm for a vector $\x\in\R^p$ by $\|\x\|_2$, and denote the spectral norm for a matrix $\X$ by $\|\X\|$.

\section{Related works}
\label{sec:related}

The explosive growth of data presents new challenges for many classical statistical  problems. In recent years, a large body of literature has emerged on studying estimation and inference problems under memory constraints or in distributed environments (please see the references described below as well as other works such as \cite{kleiner2014scalable,wang2013parallelizing,Wang:15:MCMC}).  Examples include but are not limited to  density parameter estimation \citep{li2013statistical},  generalized linear regression with non-convex penalties \citep{chen2014split},
kernel ridge regression \citep{zhang2015divide},  high-dimensional sparse linear regression \citep{lee2017communication}, high-dimensional generalized linear models \citep{battey2015distributed}, semi-parametric partial linear models \citep{zhao2016partially}, QR processes \citep{volgushev2017distributed}, $M$-estimators with cubic rate \citep{shi2016massive}, and some non-standard problems where rates of convergence are slower than $n^{1/2}$ and limit distributions are non-Gaussian \citep{banerjee2016divide}.
All these results rely on averaging, where the global estimator is the average of the local estimators computed on each batch of data. For the averaging estimators to achieve the same asymptotic distribution for inference as pooling the data together, it usually requires the number of batches (i.e., the number of machines) to be $o(m)$ (i.e., $n=o(m^2)$).  However, in some applications such as sensor networks and  streaming data, the number of batches can be large.

To address the challenge, instead of using one-shot aggregation via averaging, recent works by \cite{jordan2016communication} and \cite{wang2016efficient} proposed iterative methods with multiple rounds of aggregations, which relaxes the condition $n=o(m^2)$. These methods have been applied to $M$-estimator and Bayesian inference. Their framework is based on an approximate Newton algorithm \citep{Shamir:14} and thus requires the twice-differentiability of the loss function. However, the QR loss is non-differentiable and thus their approach cannot be utilized. More detailed discussions on these two works will be provided in Remark \ref{rem:comp}. \cite{wain2006} proposed a multi-round decentralized quantile estimation algorithm under a  restrictive communication-constrained setup. However, their method cannot be applied to solve QR problems.

 There is a large body of literature on estimation and inference for QR and its variant (e.g., censored QR). We will not be able to provide a detailed survey here and we refer the readers to \cite{koenker2005quantile,quantilehandbook} for more background knowledge and recent development of QR. However, it is worth noting that the smoothing idea in QR literature has been adopted for developing our linear estimator for QR (LEQR in Section \ref{sec:linear}), which serves as the cornerstone of our method.  The idea of smoothing the non-smooth QR objective goes back to \cite{horowitz1998bootstrap}, where he studied the bootstrap refinement in inference in quantile models. Since the smoothing idea overcomes the difficulty in higher-order expansion of the scores associated with the QR objective, it plays an important role in solving various QR problems. For example, \cite{wang2012corrected} and \cite{wu2015smoothed} proposed different smoothed objectives to determine the corrected scores under the presence of covariate measurement errors for QR and censored QR. \cite{Galvao16smooth} proposed a fix-effects estimator for the smoothed QR in linear panel data models and derived the corresponding limiting distribution.  \cite{pang2012variance} proposed an induced-smoothing idea for estimating the variance of inverse-censoring-probability weighted estimator in \cite{Bang02median}.  \cite{whang2006smoothed} considered the problem of inference using the empirical likelihood mehod for QR and demonstrated that the smoothed empirical likelihood can help achieve higher-order refinements (i.e., $O(n^{-1})$ of the coverage error). The smoothing idea can also facilitate the computation, especially for the first-order optimization methods (see, e.g., \cite{zheng2011gradient}). We also adopt the smoothing idea for constructing our LEQR estimator (see Section \ref{sec:linear}), which heavily relies on the first-order optimality condition of the objective. Instead of using the smoothing technique for computing a one-stage estimator as in existing literature, our use of the smoothing technique enables successive refinement of the LEQR estimator (see Propositions \ref{prop1} and \ref{prop2}).

\section{Methodology}
\label{sec:method}

In this section, we introduce the proposed method. We start with a new linear type estimator for quantile regression, which serves as an important building block for our inference approach.

\subsection{A  linear type  estimator of quantile regression}
\label{sec:linear}
We first propose a linear type estimator for quantile regression, which is named as LEQR (Linear Estimator for Quantile Regression). 
Recall the classical quantile regression estimator from \eqref{eq:QR}. Note that for the quantile regression, the loss function $\rho_{\tau}(x)=x(\tau-I\{x\leq 0\})=x(I\{x> 0\}+\tau-1)$ is non-differentiable. Using the smoothing idea (see the literature surveyed in Section \ref{sec:related}), we approximate the indicator factor $I\left\{x>0\right\}$ with a smooth function $H(x/h)$, where $h \rightarrow 0$ is the bandwidth. With this approximation, we replace $\rho_{\tau}(x)$ in quantile regression by $K_h(x)=x(H(x/h)+\tau-1)$ and define
\begin{eqnarray}
\label{eq:SQR}
\widehat{\be}_h=\argmin_{\be\in\mathbb{R}^{p+1}}\sum\limits_{i=1}^n K_h(Y_i-\X_i'\be).
\end{eqnarray}
Now the right hand side   in (\ref{eq:SQR}) is differentiable and we note that $\frac{\mathrm{d} K_h(x)}{\mathrm{d} x}= H(x/h)+\tau-1+(x/h)H'(x/h)$.  Here  the function $H(u)$ is a smooth approximation of the indicator factor $I\{x>0\}$ satisfying $H(u)=1$ when $u\geq 1$ and $H(u)=0$ when $u\leq -1$ (see more details on the condition of $H(\cdot)$ in Condition (C3) on page \pageref{page:condition}). For example, one may choose $H(u)$ for $u \in[-1,1]$  to be the integral of a smooth kernel function with support on $[-1,1]$, e.g., a biweight (or quartic) kernel (see \eqref{eq:H}). We further note that \eqref{eq:SQR} is a non-convex optimization and thus it is difficult to compute the minimizer $\widehat{\be}_h$. However, it is not a concern since \eqref{eq:SQR} is only introduced for the motivation purpose. The proposed LEQR estimator, which is  explicitly defined in \eqref{eq:beta_0} below, does not require to solve \eqref{eq:SQR}.

Since $H(\cdot)$ is a smooth function, by the first-order optimality condition, the solution $\widehat{\be}_{h}$ in \eqref{eq:SQR} satisfies (see Theorem 2.6 in \cite{Beck:14})
\begin{equation}\label{eq:beta_h_org}
\sum\limits_{i=1}^n \X_i\Big\{H\Big(\dfrac{Y_i-\X'_i\widehat{\be}_h}{h}\Big)+\tau-1+\dfrac{Y_i-\X_i'\widehat{\be}_h}{h}H'\Big(\dfrac{Y_i-\X'_i\widehat{\be}_h}{h}\Big)\Big\}=0.
\end{equation}
From \eqref{eq:beta_h_org} we can express $\widehat{\be}_h$ by
{
\begin{eqnarray}\label{eq:beta_h}
\widehat{\be}_h & = & \left(\sum\limits_{i=1}^n \X_i\X_i' \dfrac1h H'\Big(\dfrac{Y_i-\X'_i\widehat{\be}_h}{h}\Big)\right)^{-1} \times \\
&& \left[\sum\limits_{i=1}^n \X_i\Big\{H\Big(\dfrac{Y_i-\X'_i\widehat{\be}_h}{h}\Big)+\tau-1+\dfrac{Y_i}{h}H'\Big(\dfrac{Y_i-\X'_i\widehat{\be}_h}{h}\Big)\Big\}\right]. \nonumber
\end{eqnarray}
}

However, there is no closed-form expression of $\widehat{\be}_{h}$ from this fixed point equation. 
Instead  of using $\widehat{\be}_{h}$ on the right hand side of (\ref{eq:beta_h}), we replace $\widehat{\be}_{h}$ by a consistent initial estimator $\widehat{\be}_{0}$, which leads to the proposed LEQR:
\begin{eqnarray}\label{eq:beta_0}
\widehat{\be}&=&\left(\sum\limits_{i=1}^n \X_i\X_i' \dfrac1h  H'\Big(\dfrac{Y_i-\X'_i\widehat{\be}_0}{h}\Big)\right)^{-1} \times \\
&& \left[\sum\limits_{i=1}^n \X_i\Big\{H\Big(\dfrac{Y_i-\X'_i\widehat{\be}_0}{h}\Big)+\tau-1+\dfrac{Y_i}{h}H'\Big(\dfrac{Y_i-\X'_i\widehat{\be}_0}{h}\Big)\Big\}\right].\nonumber
\end{eqnarray}



Allowing the choice of initial estimator $\widehat{\be}_{0}$ is crucial for our inference procedure under the memory constraint. While it is difficult to solve the  fixed point equation in \eqref{eq:beta_h} when all data cannot be loaded into memory,
we are still able to compute an initial estimator using only a batch of samples, e.g., $\mathcal{D}_{1}$. Given the initial estimator $\widehat{\be}_{0}$, the LEQR in \eqref{eq:beta_h} only depends  on sums of $\X_i$ and $\X_i\X_i'$, which can be easily implemented via a divide-and-conquer scheme (see Section \ref{sec:est} for details). Comparing to the na\"ive-DC method that needs to solve $N$ optimization problems on each batch of data, LEQR only needs to solve one optimization, and thus is computationally  more efficient.  Moreover, our estimator in \eqref{eq:beta_0} is essentially solving a linear equation system, and we do not need to explicitly compute a matrix inversion. There are a number of efficient methods for solving a linear system numerically, such as conjugate gradient method \citep{hestenes1952methods} and stochastic variance reduced gradient method \citep{johnson2013accelerating}.  In our simulation studies, we use the conjugate gradient method, and due to space limitations, more detailed explanations of this method are provided in Appendix \ref{sec:linear_sys}.

\subsection{Divide-and-conquer LEQR}
\label{sec:est}

Based on LEQR, we now introduce a divide-and-conquer LEQR for estimating $\be(\tau)$.   For each batch of data $\mathcal{D}_k$ for $1\leq k\leq N$, let us define the following quantities.
{\small
\begin{eqnarray}\label{eq:U_k}
\U_{k}&=&\sum_{i\in\mathcal{H}_{k}}\X_i\left\{H\left(\dfrac{Y_i-\X'_i\widehat{\be}_0}{h}\right)+\tau-1+\dfrac{Y_i}{h}H'\left(\dfrac{Y_i-\X'_i\widehat{\be}_0}{h}\right)\right\},\\
\V_{k}&=&\sum_{i\in\mathcal{H}_{k}} \X_i\X_i'\frac{1}{h}H'\left(\dfrac{Y_i-\X'_i\widehat{\be}_0}{h}\right).\nonumber
\end{eqnarray}
\normalfont
}
The inference procedure is presented in Algorithm \ref{algo:dc-leqr}.  Using our theory in Theorem \ref{thm:beta} and \ref{thm:clt}, for the $g$-th iteration, we choose the bandwidth $h=h_g= \max(\sqrt{p/n},(p/m)^{2^{g-2}})$ for $1\leq g\leq q$. 


\begin{algorithm}[!t]
    \caption{{\small Divide-and-conquer(DC) LEQR}}
    \label{algo:dc-leqr}
    \hspace*{\algorithmicindent} \hspace{-0.72cm}  {\textbf{Input:} Data batches $\mathcal{D}_k$ for $k=1,\ldots, N$, the number of iterations/aggregations $q$, quantile level $\tau$, smooth function $H$, a sequence of bandwidths $h_g$ for $g=1,\ldots, q$.  }

    \begin{algorithmic}[1]

     \FOR{$g=1,2 \ldots, q$}
          \IF{$g=1$}
              \STATE Calculate the initial estimator (for the first iteration) based on $\mathcal{D}_{1}$:
                 \begin{eqnarray}\label{ag0}
                        \widehat{\be}_{0}=\argmin\limits_{\be\in\mathbb{R}^{p+1}}\sum\limits_{i\in\mathcal{H}_{1}}\rho_\tau(Y_i-\X_i'\be).
                 \end{eqnarray}
           \ELSE
               \STATE Set the initial estimator to be the estimator from the previous iteration:
               \[
                    \widehat{\be}_0=\widehat{\be}^{(g-1)}
               \]
             \ENDIF
         \FOR{$k=1,\dots, N$}
            \STATE Swap data $\mathcal{D}_k$ into the memory  and compute $(\U_{k},\V_{k})$ according to \eqref{eq:U_k} using the bandwidth $h:=h_g$.
            \STATE   Compute and maintain the sums $(\sum_{j=1}^{k}\U_{j},\sum_{j=1}^{k}\V_{j})$ in the memory (and delete $(\U_{k},\V_{k})$).
         \ENDFOR
         \STATE Compute the estimator $\widehat{\be}^{(g)}$:
            \begin{equation}\label{ag1}
                \widehat{\be}^{(g)}=\Big(\sum_{k=1}^{N}\V_{k}\Big)^{-1}\Big(\sum_{k=1}^{N}\U_{k}\Big).
            \end{equation}
       \ENDFOR

    \end{algorithmic}

   \hspace{-7cm} \textbf{Output:}  The final estimator $\widehat{\be}^{(q)}$.
\end{algorithm}

%


Algorithm \ref{algo:dc-leqr} can not only be used under the memory constraint, but also be applied to distributed setting, to reduce computational cost for classical quantile regression, and to deal with streaming data. We illustrate these important applications as follows.
\begin{enumerate}
\item
{\em Quantile regression in large-scale sensor networks.}  Algorithm \ref{algo:dc-leqr} is directly applicable to distributed sensor network with $N$ sensors, where each sensor collects a batch of data $\mathcal{D}_k$ (for $k=1,\ldots,N$). The base station first broadcasts the initial estimator $\widehat{\be}_{0}$  computed on $\mathcal{D}_1$ in \eqref{ag0} to all sensors (see Figure \ref{fig:sensor} for an illustration). Then each sensor computes $(\U_k, \V_k)$ locally, which will be transferred from bottom to the base station. In particular, each sensor $k$ only keeps the \emph{summation} of $(\U_{\cdot}, \V_{\cdot})$ from all its children nodes  and its own $(\U_k, \V_k)$ and then transfers the summed statistics to its parent.
After receiving $\left(\sum_{k=1}^{N}\U_{k}, \sum_{k=1}^{N}\V_{k}\right)$, the base station will compute $\widehat{\be}^{(g)}$ for $g=1$ in \eqref{ag1}. Then this distributed procedure can be repeated for $g=2,\ldots, q$.

\item
{\em Computational reduction of quantile regression.}   Algorithm \ref{algo:dc-leqr} can also be utilized as an efficient solver for  classical quantile regression  on a single machine.   For the ease of illustration, let us assume the quantile regression estimator in \eqref{eq:QR} (or \eqref{dc1}) is solved by the standard interior-point method.  
Using the standard interior-point method, the initial estimator requires the computational complexity $O(m^{1.25} p^3\log m)$ \citep{portnoy1997}.  The computation of $\sum_{k=1}^N \U_k$ and $\sum_{k=1}^N \V_k$ require $O(np)$ and $O(np^{2})$, respectively.
Therefore, the computational complexity for $\widehat{\be}^{(q)}$ is at most  $O(m^{1.25} p^3\log m + np^2 +p^{3})$, where $O(p^3)$ comes from the inversion of $\sum_{k=1}^N \V_k$.  This greatly saves the computational cost as compared to the interior-point method for  computing quantile regression estimator on the entire data, which requires a  complexity of $O(n^{1.25} p^3\log n)$.

 \begin{algorithm}[!t]
    \caption{{\small Online LEQR}}
    \label{algo:online_LEQR}
\raggedright
  \hspace*{\algorithmicindent} \hspace{-0.72cm}  {\textbf{Initialization:} For the first $m$ samples $\{Y_{i}, \X_{i}, -m+1\leq i\leq 0\}$, compute the initial standard quantile regression estimator   $\widehat{\be}[{0}]$. Then
  constructs the initial $(\U(0), \V(0))$ based on $\widehat{\be}[{0}]$ according to \eqref{eq:U_k} with $h=(p/m)^{1/2}$.

\textbf{Parameter Setup:} Define $a_{2k-1}=2^{k-1}+1/2$ and $a_{2k}=2^{k-1}+3/4$ for $k\geq 1$ and $a_{0}=-\infty$. Let $s_l=\lfloor m^{a_{l-1}}\rfloor+1$ and $r_l=\lfloor  m^{a_{l}}\rfloor$ for any $l\geq 1$ and let $r_0=0$.  Define a sequence of bandwidths $h_{1}=(p/m)^{1/2}$ and $h_{l}=(p/m^{a_{l-1}})^{1/2}$ for $l\geq 2$.
  }

  \begin{algorithmic}[1]
    \FOR{each interval $l=1,2\ldots, $}
        \FOR{indices in the interval $l$, $j=s_l, s_l+1, \ldots, r_l$}
            \STATE Receive an online sample  $(Y_j, \X_j)$.
            \STATE Compute
                \begin{align*}
                 \widetilde{\U}(j) & =\X_j\Bigl\{H\Big(\dfrac{Y_j-\X'_j\widehat{\be}[r_{l-1}])}{h_{l}}\Big)+\tau-1+\dfrac{Y_j}{h_{l}}H'\Big(\dfrac{Y_j-\X'_j\widehat{\be}[r_{l-1}]}{h_{l}}\Big)\Bigr\}
                 \\
                 \widetilde{\V}(j) & =\X_j\X_j'\frac{1}{h_{l}}H'\Big(\dfrac{Y_j-\X'_j\widehat{\be}[r_{l-1}]}{h_{l}}\Big),
                \end{align*}
                where $\widehat{\be}[r_{l-1}]$ is the estimator computed upto the end of $(l-1)$-th interval.
            \STATE Update $(\U(j),\V(j))$ by
    \begin{align*}\label{eq:online_U_j}
    \U(j) & \triangleq \sum_{i=s_l}^j \widetilde{\U}(j) = \begin{cases}
           \widetilde{\U}(j) & \quad \text{if}\; j=s_l\\
           \U(j-1)+ \widetilde{\U}(j)
           & \quad \text{if}\; j> s_l
         \end{cases}\\
     \V(j) & \triangleq \sum_{i=s_l}^j \widetilde{\V}(j) = \begin{cases}
           \widetilde{\V}(j) & \quad \text{if}\; j=s_l\\
           \V(j-1)+ \widetilde{\V}(j)
           & \quad \text{if}\; j> s_l
         \end{cases}
    \end{align*}

            \STATE Calculate $(\U(r_{l-1})+\U(j),  \V(r_{l-1})+\V(j))$ and compute the online quantile regression estimator at time $j$:
             \begin{eqnarray}\label{online}
                 \widehat{\be}[j]=(\V(r_{l-1})+\V(j))^{-1}(\U(r_{l-1})+\U(j)).
             \end{eqnarray}

              \STATE Remove $(\widetilde{\U}(j),\widetilde{\V}(j)), (\U(j-1),\V(j-1))$ from the memory.
        \ENDFOR
        \STATE Remove $(\U(r_{l-1}),\V(r_{l-1}))$ from the memory and only keep  $\widehat{\be}[r_{l}]$, $(\U(r_{l}), \V(r_{l}))$ in the memory.
    \ENDFOR

  \end{algorithmic}

\end{algorithm}

\item {\em Online quantile regression for streaming data.}
To deal with online streaming data, it is critical to design a one-pass algorithm since streaming data are transient. 
To this end, based on Algorithm \ref{algo:dc-leqr}, we develop a new one-pass algorithm (see Algorithm \ref{algo:online_LEQR}) that provides a sequence of successively refined estimators.

For streaming data, we divide the data into intervals $\{(s_l,r_l)\}_{l=1}^\infty$. The starting and ending positions of the $l$-th interval are chosen as
\[
s_l=\lfloor m^{a_{l-1}}\rfloor+1,\text{ and }r_l=\lfloor  m^{a_{l}}\rfloor
\]
for $l\geq 1$ where $a_{2k-1}=2^{k-1}+1/2$ and $a_{2k}=2^{k-1}+3/4$ for $k\geq 1$ and $a_{0}=-\infty$. The intervals are chosen to ensure that the sample size of the $l$-th interval $n_l$ is approximately $n_{l-2}^2$.  As we will show in the proof of Theorem \ref{th4.3}, if an initial estimator is computed from $m$ samples, there will be no improvement of the online LEQR estimator after $m^2$ fresh samples and this is the ending point of an interval where we compute a new initial estimator.

 In  Algorithm \ref{algo:online_LEQR}, for each interval $l$ and each $j$ such that  $s_l \leq j\leq r_l$, the memory only maintains $\widehat{\be}[r_{l-1}]$, $(\U(r_{l-1}), \V(r_{l-1}))$, $(\U(j), \V(j))$, and $\widehat{\be}[j]$. We note that $(\U(r_{l-1}), \V(r_{l-1}))$ are the weighted sums of $\X_i$ and $\X_i\X_i'$ for $s_{l-1}\leq i \leq r_{l-1}$ and $(\U(j), \V(j))$ can be easily updated from $(\U(j-1), \V(j-1))$ in an online fashion. Therefore, except for an $O(m)$ space for deriving the initial estimator $\widehat\be[0]$, the online LEQR only requires $O(p^2)$ memory, which is independent on $n$.

\label{page:online}

In Theorem \ref{th4.3}, we will show that the online LEQR algorithm achieves the same statistical efficiency  for any $l$ and $j$ as the standard quantile regression estimator when merging all the streaming data together. Also,
the asymptotic normality of $\sqrt{m+j}(\widehat{\be}[j]-\be(\tau))$ holds uniformly in $1\leq j\leq m^{A}$ for any constant $A>0$ (note that $m+j$ is the sample size until time $j$).

\end{enumerate}

\section{Theoretical results}
\label{sec:theory}

In this section, we provide a Bahadur representation of $\widehat{\be}^{(q)}$, based on which we derive the  asymptotic normality result for DC LEQR $\widehat{\be}^{(q)}$ and the online LEQR $\widehat{\be}[j]$.  We also discuss adaptive choices of bandwidth and extensions to heterogeneous settings.

\subsection{Asymptotics for DC LEQR}\label{sec:DC_leqr}
We note that \eqref{eq:beta_0} can be equivalently written as:
\begin{equation}\label{eq:key}
\widehat{\be}-\be(\tau)=\D_{n,h}^{-1}\A_{n,h},
\end{equation}
where
\begin{eqnarray}\label{eq:AD}
\A_{n,h}&=&\dfrac{1}{n}\sum\limits_{i=1}^n \X_i\bigg\{H\Big(\dfrac{Y_i-\X'_i\widehat{\be}_0}{h}\Big)+\tau-1+\dfrac{\epsilon_i}{h}H'\Big(\dfrac{Y_i-\X'_i\widehat{\be}_0}{h}\Big)\bigg\},\cr
\D_{n,h}&=&\dfrac{1}{nh}\sum\limits_{i=1}^n \X_i\X_i'H'\Big(\dfrac{Y_i-\X'_i\widehat{\be}_0}{h}\Big) \nonumber.
\end{eqnarray}

We first state some regularity conditions for our theoretical development and then give Propositions \ref{prop1}-\ref{prop2} on the  expansions of $\A_{n,h}$ and $\D_{n,h}$. We assume the model \eqref{eq:linear_model} with $n$ \emph{i.i.d.} samples $\{(\X_i, Y_i)\}$ (the non-i.i.d. case will be discussed in the next subsection). Let $f(\cdot|\X)$ be the conditional density function of the noise $\epsilon$ given $\X$. Further, we define
$
\D=\ep(\X\X'f(0|\X)).
$
{\begin{enumerate}
\item[(C1)]  The conditional density function $f(\cdot|\X)$ is Lipschitz continuous (i.e., $|f(x|\X)-f(y|\X)|\leq L|x-y|$ for any $x,y\in\R$ and some constant $L>0$).  Also, assume that $0<c_{1}\leq \lambda_{\min}(\D)\leq\lambda_{\max}(\D)\leq c_{2}<\infty$ for some constants $c_{1},c_{2}$.

\item[(C2)] Let the smoothing function $H(\cdot)$ satisfy $H(u)=1$ if $u\geq 1$ and $H(u)=0$ if $u\leq -1$. Further, suppose $H(\cdot)$ is twice differentiable and its second derivative $H^{(2)}(\cdot)$ is bounded. Moreover, we assume the bandwidth $h=o(1)$.


\item[(C3)] Assume that $p=o\left(nh/(\log n)\right)$ and $\sup_{\|\thh\|_{2}=1}\ep e^{\eta(\thh'\X)^{2}}<\infty$ for some $\eta>0$.

\item[(C3$^{*}$)]  Assume for some $\kappa>0$, $p=o\left((n^{1-4\kappa}h/\log n)^{1/3}\right)$. Suppose that $\sup_{j}\ep |X_{1,j}|^{a}<\infty$
for some $a\geq 2/\kappa$ and $\sup_{\|\thh\|_2=1}\ep (\thh{'}\X)^{4}<\infty$.

\end{enumerate}
\label{page:condition}

Condition (C1) contains a standard eigenvalue condition related to covariates $\X$ and the smoothness of the conditional density function $f$. 
Condition (C2) is a mild condition on $H$ for smooth approximation.

 Condition (C3) and (C3$^*$) illustrate the relationship between the dimension $p$ and sample size $n$ and the moment condition on covariates $\X$. Either one of them leads to our theoretical results in Propositions \ref{prop1}-\ref{prop2}. As compared to Condition (C3$^*$), Condition (C3) is weaker in terms of the relationship of $p$ and $n$, but requires a stronger moment condition on $\X$.

Under these conditions, we have the following Proposition \ref{prop1} and \ref{prop2} for the asymptotic behavior of $\A_{n,h}$ and $\D_{n,h}$.
\begin{proposition}\label{prop1} Suppose we have an initial estimator $\hat{\be}_{0}$ with $\|\hat{\be}_{0}-\be(\tau)\|_2=O_{\pr}(a_{n})$ with $a_{n}=O(h)$. Assume that (C1), (C2), and (C3) (or (C3$^{*}$)) hold. We have
\begin{eqnarray*}
\Bigl\|\A_{n,h}-\frac{1}{n}\sum_{i=1}^{n}\X_{i}(I\{\epsilon_{i}\geq 0\}+\tau-1)\Bigr\|_2=O_{\pr}\Big{(}\sqrt{\frac{ph\log n}{n}}+a^{2}_{n}+h^{2}\Big{)}.
\end{eqnarray*}
\end{proposition}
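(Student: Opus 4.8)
The plan is to Taylor-expand the smooth pieces of $\A_{n,h}$ around the true value $\be(\tau)$, replacing $\widehat\be_0$ by $\be(\tau)$ at a cost controlled by $\|\widehat\be_0-\be(\tau)\|_2=O_\pr(a_n)$, and then to show that the ``smoothed'' leading term matches the unsmoothed score $\frac1n\sum_i\X_i(I\{\epsilon_i\ge0\}+\tau-1)$ up to the claimed error. Write $r_i=Y_i-\X_i'\widehat\be_0 = \epsilon_i - \X_i'(\widehat\be_0-\be(\tau))$ and abbreviate $\delta=\widehat\be_0-\be(\tau)$. The summand of $\A_{n,h}$ is $\X_i\,\psi_h(r_i,\epsilon_i)$ with $\psi_h(r,\epsilon)=H(r/h)+\tau-1+(\epsilon/h)H'(r/h)$. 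First I would decompose
\[
\A_{n,h}-\tfrac1n\textstyle\sum_i\X_i(I\{\epsilon_i\ge0\}+\tau-1)
= \underbrace{\tfrac1n\textstyle\sum_i\X_i\big[\psi_h(r_i,\epsilon_i)-\psi_h(\epsilon_i,\epsilon_i)\big]}_{\mathrm{(I)}}
+ \underbrace{\tfrac1n\textstyle\sum_i\X_i\big[\psi_h(\epsilon_i,\epsilon_i)-(I\{\epsilon_i\ge0\}+\tau-1)\big]}_{\mathrm{(II)}} .
\]

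For term (II), note $\psi_h(\epsilon,\epsilon)-I\{\epsilon\ge0\}-\tau+1 = [H(\epsilon/h)-I\{\epsilon\ge0\}] + (\epsilon/h)H'(\epsilon/h)$, which is supported on $|\epsilon|\le h$ and bounded. I would split (II) into its conditional mean and a mean-zero fluctuation. The conditional mean is $\int (\text{bounded, supported on }[-h,h])\, f(\epsilon|\X)\,d\epsilon$; using $f(\epsilon|\X)=f(0|\X)+O(L|\epsilon|)$ (Condition (C1)) and the moment/kernel properties of $H$ from (C2), the $f(0|\X)$-part integrates against the bracket; one checks (integration by parts on the $H'$ piece) that its contribution is $O(h^2)$ pointwise in $\X$, while the Lipschitz remainder is also $O(h^2)$ — this gives the $h^2$ term. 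For the fluctuation part, each summand has variance $O(h)$ (length-$h$ support) times $\ep\|\X\|^2$-type factors; a Bernstein/Rosenthal argument over the $p$ coordinates, using the sub-Gaussian design of (C3) (or the moment bound of (C3$^*$)), yields the $\sqrt{ph\log n/n}$ term. This is essentially the same bookkeeping as in the companion Propositions, so I would cite the relevant moment/maximal inequalities rather than reprove them.

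For term (I), I would Taylor-expand $\psi_h(r_i,\epsilon_i)$ in its first argument around $\epsilon_i$: $\psi_h(r_i,\epsilon_i)-\psi_h(\epsilon_i,\epsilon_i) = \partial_1\psi_h(\epsilon_i,\epsilon_i)(r_i-\epsilon_i) + \tfrac12\partial_1^2\psi_h(\xi_i,\epsilon_i)(r_i-\epsilon_i)^2$ with $r_i-\epsilon_i=-\X_i'\delta$, and $\|\delta\|_2=O_\pr(a_n)$. The derivatives bring down factors of $1/h$ and $1/h^2$ respectively but are supported on $|\cdot|\le h$, so $\partial_1\psi_h = O(1/h)\mathbf 1_{|\cdot|\le h}$ and $\partial_1^2\psi_h=O(1/h^2)\mathbf 1_{|\cdot|\le h}$ (using boundedness of $H^{(2)}$ from (C2)). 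The quadratic remainder is then $O_\pr(h^{-2}\cdot a_n^2\cdot \frac1n\sum \|\X_i\|\,(\thh'\X_i)^2\mathbf 1_{|\xi_i|\le h})$; since the indicator has conditional probability $O(h)$, averaging gives $O_\pr(a_n^2/h)=O_\pr(a_n^2)$ because $a_n=O(h)$ — wait, that is $O(a_n^2/h)$, and with $a_n=O(h)$ this is $O(a_n)$, too crude, so more carefully one keeps the indicator's $O(h)$ mass to land at $O_\pr(a_n^2)$; this refinement is the point where the hypothesis $a_n=O(h)$ is actually used. The linear term $-\frac1n\sum_i\X_i\,\partial_1\psi_h(\epsilon_i,\epsilon_i)\,\X_i'\delta$ factors as $-(\frac1n\sum_i \X_i\X_i'\partial_1\psi_h(\epsilon_i,\epsilon_i))\delta$; its conditional expectation matrix is $O(1)$ (the $1/h$ is killed by the $O(h)$ support mass, with the leading piece proportional to $\int(\ldots)f$, again $O(1)$), so this term is $O_\pr(\|\delta\|_2)=O_\pr(a_n)$ — still too crude. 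The fix is to not bound $\delta$ alone but to combine this linear term with the analogous expansion hidden in the definition: in fact the genuinely useful estimate comes from centering $\partial_1\psi_h(\epsilon_i,\epsilon_i)$ at its conditional mean so that the $O(1)$ deterministic matrix multiplying $\delta$ is exactly cancelled by a compensating term, leaving only $a_n^2$, $h^2$ and the stochastic $\sqrt{ph\log n/n}$. I would organize (I) as (conditional-mean matrix)$\times\delta$ plus a mean-zero piece plus the quadratic remainder; the conditional-mean matrix applied to $\delta$ is itself $O(a_n)$ but its \emph{leading} (i.e. $h\to0$) part is $-\D\cdot 0$...

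To keep this honest: the correct accounting is that $\ep[\partial_1\psi_h(\epsilon,\epsilon)\mid\X]=O(h)$ (not $O(1)$), because $\partial_1\psi_h(\epsilon,\epsilon)=\frac1hH'(\epsilon/h)+\frac1hH'(\epsilon/h)+\frac{\epsilon}{h^2}H''(\epsilon/h)$ is $O(1/h)$ on a set of conditional mass $O(h)$, and moreover its integral against $f(0|\X)$ vanishes by the boundary conditions $H(\pm1)\in\{0,1\}$, $H'(\pm1)=0$ — so only the Lipschitz correction of size $O(h)\cdot O(h)=O(h^2)$... no, $O(1/h)$ on mass $O(h^2)$ after the cancellation survives as $O(h)$, times $\|\delta\|_2=O(a_n)$, giving $O(a_n h)=O(h^2)$ since $a_n=O(h)$. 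Thus the main obstacle, and the part needing care, is precisely this double-smallness bookkeeping in term (I): one must exploit the boundary conditions on $H$ to cancel the naive $O(1/h)$ and $O(1)$ contributions, then use $a_n=O(h)$ to absorb the survivors into $a_n^2$ and $h^2$. The mean-zero parts of both (I) and (II) are routine by a Bernstein inequality over the $p$ coordinates under (C3)/(C3$^*$) and contribute the $\sqrt{ph\log n/n}$ term. Everything else is uniform control of empirical processes of the form $\frac1n\sum \X_i g_h(\epsilon_i,\X_i'\delta)$ over $\|\delta\|_2\le C a_n$, handled by a standard chaining/peeling argument identical to the one used for Proposition~\ref{prop2}.
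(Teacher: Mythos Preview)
Your overall picture (bias $O(h^2+a_n^2)$ from a cancellation, stochastic term $\sqrt{ph\log n/n}$ from concentration) is right, but the execution has a real gap in term~(I). The Lagrange remainder $\tfrac12\partial_1^2\psi_h(\xi_i,\epsilon_i)(\X_i'\delta)^2$ cannot be bounded down to $O_\pr(a_n^2)$ in absolute value: pointwise $|\partial_1^2\psi_h|=O(h^{-2})$ on $\{|\xi_i|\le h\}$, and the indicator has conditional mass $O(h)$, so after averaging you get $O(a_n^2/h)$, which under $a_n=O(h)$ is only $O(a_n)$---exactly the ``too crude'' bound you noticed. Your claimed refinement (``more carefully one keeps the indicator's $O(h)$ mass to land at $O_\pr(a_n^2)$'') does not exist: the arithmetic $h^{-2}\cdot a_n^2\cdot h=a_n^2/h\neq a_n^2$ is not fixable by being more careful with the same pointwise bound. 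What is true is that the \emph{signed} conditional expectation $\ep_*[R_2]$ is $O(h^2+(\X'\Delta)^2)$, but extracting this requires computing $\ep_*[\psi_h(\epsilon-\X'\Delta,\epsilon)]$ directly rather than bounding a Lagrange remainder---which is precisely the paper's route. (There is also a smaller issue: your second-order expansion needs $H^{(3)}$, which (C2) does not assume; and your formula for $\partial_1\psi_h(\epsilon,\epsilon)$ has a stray extra $h^{-1}H'$.)

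The paper avoids the (I)/(II) split altogether. For any fixed $\al$ with $\|\al-\be(\tau)\|\le a_n$, it computes $\ep_*[\cdot]$ by substituting $u=(\epsilon-\X'\Delta)/h$ and expanding $f(hu+\X'\Delta\mid\X)$ (Lipschitz in (C1)) around $0$:
\[
\ep_*\!\Big[H\Big(\tfrac{\epsilon-\X'\Delta}{h}\Big)\Big]-(1-\tau)=-f(0\mid\X)\Big(\X'\Delta+h\!\int_{-1}^{1}\! xH'(x)\,dx\Big)+O\big(h^2+(\X'\Delta)^2\big),
\]
\[
\ep_*\!\Big[\tfrac{\epsilon}{h}H'\Big(\tfrac{\epsilon-\X'\Delta}{h}\Big)\Big]=f(0\mid\X)\Big(\X'\Delta+h\!\int_{-1}^{1}\! xH'(x)\,dx\Big)+O\big(h^2+(\X'\Delta)^2\big).
\]
The first-order terms in both $h$ and $\X'\Delta$ cancel \emph{simultaneously}, giving $|\ep C_{n,h,j}(\al)|\le C(h^2+\|\al-\be(\tau)\|_2^2)$ directly---no Taylor on $H$, only on $f$. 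The deviation $C_{n,h,j}(\al)-\ep C_{n,h,j}(\al)$ is then controlled uniformly in $\al$ by (i) an $\varepsilon$-net on the sphere to pass from $\|\C_{n,h}\|_2$ to finitely many directions, (ii) a grid on the ball $\|\al-\be(\tau)\|\le a_n$, and (iii) Bernstein's inequality under (C3$^*$) or the Cai--Liu exponential inequality under (C3), using that each summand has conditional second moment $O(h)$. This is what you gesture at in your last sentence, but it is not an afterthought: it is where the entire argument lives once you compute the mean correctly.
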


\begin{proposition} \label{prop2}  Suppose the conditions in Proposition \ref{prop1} hold.  We have
\begin{eqnarray*}
\|\D_{n,h}-\D\|=O_{\pr}\Big{(}\sqrt{\frac{p\log n}{nh}}+a_{n}+h\Big{)}.
\end{eqnarray*}
\end{proposition}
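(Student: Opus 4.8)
\medskip
\noindent\textbf{Proof proposal for Proposition \ref{prop2}.}
The plan is to recenter $\D_{n,h}$ at a $\be$-dependent population matrix. For $\be\in\R^{p+1}$ put $\de=\be(\tau)-\be$ and $\D(\be):=h^{-1}\ep[\X\X'H'((Y-\X'\be)/h)]=h^{-1}\ep[\X\X'H'((\epsilon+\X'\de)/h)]$, so that $\D_{n,h}-\D=\big(\D_{n,h}-\D(\widehat\be_0)\big)+\big(\D(\widehat\be_0)-\D(\be(\tau))\big)+\big(\D(\be(\tau))-\D\big)$; I will bound these three brackets by $O_{\pr}(\sqrt{p\log n/(nh)})$, $O_{\pr}(a_n)$, and $O(h)$ respectively. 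I will repeatedly use that (C1) already forces $f(\cdot\mid\X)$ to be uniformly bounded: if $f(x_0\mid\X)=M$ then $f\ge M/2$ on an interval of length $M/L$ about $x_0$, so $M^2/(2L)\le\int f(\cdot\mid\X)=1$ and hence $\sup_{x,\X}f(x\mid\X)\le\sqrt{2L}$.

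\emph{The two bias terms.} Conditioning on $\X$, substituting $u=t/h$, and using that $H'$ is supported on $[-1,1]$ with $\int_{-1}^{1}H'(u)\,du=H(1)-H(-1)=1$, one gets $h^{-1}\ep[H'(\epsilon/h)\mid\X]=\int_{-1}^{1}H'(u)f(uh\mid\X)\,du=f(0\mid\X)+r(\X)$ with $|r(\X)|\le Lh\int|H'(u)|\,du$ by the Lipschitz bound in (C1); hence $\|\D(\be(\tau))-\D\|\le\sup_{\|\thh\|_2=1}\ep[(\thh'\X)^2|r(\X)|]=O(h)$, the needed second moment of $\X$ being finite under either (C3) or (C3$^*$). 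The same substitution together with $|f(uh-\X'\de\mid\X)-f(uh\mid\X)|\le L|\X'\de|$ gives, for \emph{every} $\be$, $\|\D(\be)-\D(\be(\tau))\|\le L\big(\int|H'(u)|\,du\big)\sup_{\|\thh\|_2=1}\ep[(\thh'\X)^2|\X'\de|]\le C\|\de\|_2$, by Cauchy--Schwarz ($\ep[(\thh'\X)^2|\X'\de|]\le(\ep(\thh'\X)^4)^{1/2}(\de'\ep[\X\X']\de)^{1/2}$) and the moment/eigenvalue bounds in (C1) and (C3)/(C3$^*$). Since this bound is deterministic in $\be$, evaluating it at $\be=\widehat\be_0$ and invoking $\|\widehat\be_0-\be(\tau)\|_2=O_{\pr}(a_n)$ disposes of the middle term.

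\emph{The concentration term.} Since $\|\widehat\be_0-\be(\tau)\|_2=o_{\pr}(1)$, it is enough to bound $\sup_{\|\be-\be(\tau)\|_2\le r}\|\widehat{\D}_n(\be)-\D(\be)\|$ over a fixed radius $r$, where $\widehat{\D}_n(\be):=\frac1{nh}\sum_{i=1}^n\X_i\X_i'H'((Y_i-\X_i'\be)/h)$. Fixing $\be$ and a unit vector $\u$, the scalar $\u'(\widehat{\D}_n(\be)-\D(\be))\u$ is an average of i.i.d.\ centered variables $h^{-1}(\u'\X_i)^2H'((Y_i-\X_i'\be)/h)$; conditioning on $\X$, substituting, and using the uniform bound on $f$, their second moment is $h^{-1}\ep[(\u'\X)^4\int_{-1}^{1}H'(u)^2f(uh-\X'\de\mid\X)\,du]\le C/h$ uniformly in $\u,\be$. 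Under (C3), $(\u'\X)^2$ has exponential tails, so each summand is sub-exponential with scale $O(1/h)$ and the (variance-based) Bernstein inequality gives $\pr(|\u'(\widehat{\D}_n(\be)-\D(\be))\u|>t)\le2\exp(-c\,nht^2)$ for $t$ bounded away from $\infty$. Under (C3$^*$) I would first truncate $\X_i$ to $\X_i\mathbf 1\{\|\X_i\|_2\le n^{\kappa}\sqrt p\}$: this leaves $\widehat{\D}_n$ unchanged with probability $1-O(np\,n^{-a\kappa})=1-o(1)$ since $a\kappa\ge2$, changes the mean by $O(h^{-1}\sqrt p\,n^{-1})=o(\sqrt{p\log n/(nh)})$, keeps the second moment $O(1/h)$, and makes the summands bounded by $O(p\,n^{2\kappa}/h)$; Bernstein then yields the \emph{same} tail $2\exp(-c\,nht^2)$ as long as the variance term dominates, i.e.\ $\sqrt{p\log n/(nh)}\lesssim 1/(p\,n^{2\kappa})$, which is exactly the strengthened rate $p=o((n^{1-4\kappa}h/\log n)^{1/3})$ in (C3$^*$).

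\emph{Passing to the supremum, and the main difficulty.} I would then take a $\tfrac12$-net of the unit sphere in $\R^{p+1}$ ($\le5^{p+1}$ points) and an $n^{-C}$-net of the ball $\{\|\be-\be(\tau)\|_2\le r\}$ ($\le(C'n^{C})^{p+1}$ points, $C$ large); since within an $n^{-C}$-cell the variation of $H'((Y-\X'\be)/h)$ is at most $\|H''\|_\infty\|\X\|_2n^{-C}/h$ by the mean value theorem and (C2), the within-cell fluctuation of $\widehat{\D}_n(\be)-\D(\be)$ is $O_{\pr}(n^{-C}p^{3/2}/h^2)$, which is negligible. The combined net has $\log$-cardinality $O(p\log n)$, so a union bound with $t=C''\sqrt{p\log n/(nh)}$ forces the failure probability to $0$ once $C''$ is large enough, using $p\log n/(nh)\to0$ under (C3) (resp.\ the analogous rate under (C3$^*$)); hence the concentration term is $O_{\pr}(\sqrt{p\log n/(nh)})$ and the proposition follows by combining the three pieces. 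I expect this concentration step to be the crux: $\widehat\be_0$ is \emph{not} independent of the sample entering $\D_{n,h}$, which forces the uniform-in-$\be$ net argument, and under (C3$^*$) the summands have large sup-norm $O(p\,n^{2\kappa}/h)$, so the truncation level must be tuned so that the Bernstein variance term still dominates at the target rate --- precisely the reason (C3$^*$) upgrades $p=o(nh/\log n)$ to $p=o((n^{1-4\kappa}h/\log n)^{1/3})$.
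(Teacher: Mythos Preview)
Your proposal is correct and follows essentially the same strategy as the paper's proof: a $\tfrac12$-net on the sphere to reduce the spectral norm to quadratic forms, a fine net on the ball $\{\|\be-\be(\tau)\|_2\le a_n\}$ to handle the data-dependence of $\widehat\be_0$, a Bernstein-type concentration bound (the paper uses the Cai--Liu exponential inequality under (C3) and coordinate-wise truncation plus Bernstein under (C3$^*$), but your sub-exponential Bernstein and norm-based truncation are equivalent), and the same conditional-density expansion for the bias. The only cosmetic difference is that the paper merges your two bias pieces into a single computation of $\ep D_{n,h,j}(\al)-\v_j'\D\v_j=O(h+\|\al-\be(\tau)\|_2)$ rather than splitting off $\D(\be(\tau))-\D$ and $\D(\widehat\be_0)-\D(\be(\tau))$ separately.
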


 Combining Propositions \ref{prop1} and \ref{prop2} with \eqref{eq:key} and with some algebraic manipulations, we have
\begin{eqnarray}\label{eq:rn00}
\hat{\be}-\be(\tau)=\frac{\D^{-1}}{n}\sum_{i=1}^{n}\X_{i}(I\{\epsilon_{i}\geq 0\}+\tau-1)+\r_{n}
\end{eqnarray}
with
\begin{eqnarray}\label{eq:rn0}
\|\r_{n}\|_2=O_{\pr}\Big{(}\sqrt{\frac{p^{2}\log n}{n^{2}h}}+\sqrt{\frac{ph\log n}{n}}+a^{2}_{n}+h^{2}\Big{)}.
\end{eqnarray}

By choosing the bandwidth $h$ shrinking at an appropriate rate, the dominating term of \eqref{eq:rn0} is $a_n^2$, which means that one round of aggregation enables a refinement of the estimator with its bias reducing from $a_n$ to $a_n^2$ (note that $\|\hat{\be}_{0}-\be(\tau)\|_2=O_{\pr}(a_{n})$). Therefore, an iterative refinement of the initial estimator will successively improve the estimation accuracy. The effect of bias reduction is mainly due to the term $\frac{Y_i}{h}H'\Big(\frac{Y_i-\X'_i\widehat{\be}_0}{h}\Big)$ in \eqref{eq:beta_0}, which is induced by the smoothing technique (please see more details in the proof of Proposition \ref{prop1}).

The previous discussions only involve one round of aggregation. } Now we are ready to present the theoretical results for our DC LEQR $\widehat\be^{(q)}$ in Algorithm \ref{algo:dc-leqr} with multiple rounds of aggregations. By a recursive argument based on \eqref{eq:rn00}, we establish the following Bahadur representation.

\begin{theorem}\label{thm:beta} Assume the initial estimator $\hat{\be}_{0}$ in \eqref{ag0}   satisfies $\|\hat{\be}_{0}-\be(\tau)\|_2=O_{\pr}(\sqrt{p/m})$. Let
$h_{g}= \max(\sqrt{p/n},(p/m)^{2^{g-2}})$ for $1\leq g\leq q$. Assuming that (C1), (C2), and (C3) (or (C3$^{*}$)) hold  with $h=h_q$ and $p$ also satisfies $p=O(m/(\log n)^2)$,  then we have
\begin{eqnarray}\label{eq:bah_beta}
\hat{\be}^{(q)}-\be(\tau)=\frac{\D^{-1}}{n}\sum_{i=1}^{n}\X_{i}(I\{\epsilon_{i}\geq 0\}+\tau-1)+\r_{n}
\end{eqnarray}
with
\begin{eqnarray}\label{eq:rn}
\|\r_{n}\|_2=O_{\pr}\Bigl(\sqrt{\frac{ph_{q}\log n}{n}}+\Big{(}\frac{p}{m}\Big{)}^{2^{q-1}}\Bigr).
\end{eqnarray}
\end{theorem}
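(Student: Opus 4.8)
The plan is to iterate the one-round bound \eqref{eq:rn00}--\eqref{eq:rn0} exactly $q$ times, tracking how the initial-error rate $a_n$ improves at each round and how the bandwidth $h_g$ is tuned accordingly. First I would observe that at round $g$ the ``initial estimator'' fed into \eqref{eq:beta_0} is $\widehat{\be}^{(g-1)}$, whose error rate I will call $a^{(g-1)}$; for $g=1$ we have $a^{(0)}=\sqrt{p/m}$ by hypothesis. Applying \eqref{eq:rn00} with $h=h_g$ and $a_n=a^{(g-1)}$ gives
\begin{equation*}
\widehat{\be}^{(g)}-\be(\tau)=\frac{\D^{-1}}{n}\sum_{i=1}^n\X_i(I\{\epsilon_i\ge 0\}+\tau-1)+\r_n^{(g)},
\end{equation*}
with $\|\r_n^{(g)}\|_2=O_\pr\bigl(\sqrt{p^2\log n/(n^2 h_g)}+\sqrt{p h_g\log n/n}+(a^{(g-1)})^2+h_g^2\bigr)$. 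Since the leading stochastic term has norm $O_\pr(\sqrt{p/n})$, this shows $a^{(g)}$ can be taken as the maximum of $\sqrt{p/n}$ and the remainder bound; the key mechanism is that the $(a^{(g-1)})^2$ term squares the bias each round, so from $a^{(0)}=(p/m)^{1/2}$ we get (ignoring for the moment the bandwidth-dependent noise terms) $a^{(g)}\sim (p/m)^{2^{g-1}}$ down to the floor $\sqrt{p/n}$. That is precisely the second term in \eqref{eq:rn}.

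Next I would verify that the choice $h_g=\max(\sqrt{p/n},(p/m)^{2^{g-2}})$ keeps all the \emph{other} terms in the round-$g$ remainder dominated by $\max(\sqrt{p/n},(a^{(g-1)})^2)$, so that the recursion really does collapse to the two advertised terms. The point of setting $h_g$ of the same order as $a^{(g-1)}=(p/m)^{2^{g-2}}$ (when that exceeds $\sqrt{p/n}$) is that then $h_g^2\asymp (a^{(g-1)})^2$, and one checks $\sqrt{p h_g\log n/n}\le \sqrt{p\log n/n}\cdot (p/m)^{2^{g-3}}$, which is smaller than $(a^{(g-1)})^2=(p/m)^{2^{g-1}}$ once $p=O(m/(\log n)^2)$ — this is exactly where that side condition is used. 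Similarly $\sqrt{p^2\log n/(n^2 h_g)}=\sqrt{p\log n/n}\cdot\sqrt{p/(n h_g)}$, and since $h_g\ge\sqrt{p/n}$ the factor $\sqrt{p/(nh_g)}\le (p/n)^{1/4}$, so this term is also negligible. At the last round $g=q$ the bandwidth may hit the floor $\sqrt{p/n}$, and then the term $\sqrt{p h_q\log n/n}$ is no longer dominated by $(p/m)^{2^{q-1}}$; this is why \eqref{eq:rn} retains $\sqrt{p h_q\log n/n}$ explicitly rather than absorbing it. A clean way to organize this is an induction on $g$: the inductive hypothesis is $\|\widehat{\be}^{(g-1)}-\be(\tau)\|_2=O_\pr(a^{(g-1)})$ with $a^{(g-1)}=\max(\sqrt{p/n},(p/m)^{2^{g-2}})$, the inductive step applies Propositions~\ref{prop1}--\ref{prop2} (whose hypothesis $a_n=O(h)$ is met because $h_g\asymp a^{(g-1)}$, and whose conditions (C1)--(C3)/(C3$^*$) hold at $h=h_q\le h_g$), and one must separately check that the \emph{stronger} smallness condition $p=o(nh_q/\log n)$ or its (C3$^*$) analogue — stated as an assumption of the theorem — indeed propagates the needed bounds at every $g\le q$, which follows since $h_g\ge h_q$.

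I expect the main obstacle to be bookkeeping rather than any deep idea: one must be careful that Propositions~\ref{prop1}--\ref{prop2} are invoked with a \emph{fixed} set of conditions (stated with $h=h_q$), yet are applied at each round with $h=h_g\ge h_q$, so I need to argue monotonicity — that (C1)--(C3) at $h_q$ imply the versions at $h_g$ for $g\le q$ — and that the $O_\pr$ bounds compose over the finitely many (at most $q$, a constant) rounds without the implicit constants blowing up. A second delicate point is that Proposition~\ref{prop1}'s conclusion has the centered leading term $\frac1n\sum\X_i(I\{\epsilon_i\ge 0\}+\tau-1)$ \emph{independent of the round}, so when I substitute the round-$(g-1)$ expansion into round $g$ I am not accumulating fresh leading-order terms — only the remainders $\r_n^{(g)}$ chain up, and the $(a^{(g-1)})^2$ feedback is on the remainder only. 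Making that substitution rigorous (i.e., that an $O_\pr(a^{(g-1)})$ bound on $\|\widehat\be^{(g-1)}-\be(\tau)\|_2$ suffices as input to Proposition~\ref{prop1}, which only asks for the \emph{rate} of the initial estimator, not its Bahadur form) is the crux, and it is exactly what makes the recursion legitimate. Once these points are in place, evaluating the recursion at $g=q$ yields \eqref{eq:bah_beta}--\eqref{eq:rn}.
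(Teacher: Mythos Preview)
Your proposal is correct and follows essentially the same induction-on-$q$ argument as the paper: apply Propositions~\ref{prop1}--\ref{prop2} at each round with $a_n=a^{(g-1)}=h_g$, use the leading term $\frac{\D^{-1}}{n}\sum\X_i(I\{\epsilon_i\ge 0\}+\tau-1)$ (of order $O_\pr(\sqrt{p/n})$ by Lemma~\ref{lemma2.3}) to close the recursion, and iterate. One bookkeeping point: your claim that $\sqrt{ph_g\log n/n}\le (p/m)^{2^{g-1}}$ is not the right comparison; the paper instead observes directly that $p=O(m/(\log n)^2)$ gives $h_g\log n\le h_1\log n=\sqrt{p/m}\,\log n=O(1)$, so $\sqrt{ph_g\log n/n}=O(\sqrt{p/n})$, which is already absorbed in your floor $\sqrt{p/n}$ and makes the induction go through cleanly.
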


The classical initial estimator based on QR in \eqref{ag0} will satisfy $\|\widehat{\be}_0-\be(\tau)\|_2=O_{\pr}(\sqrt{p/m})$ under some regularity conditions; see \cite{HE2000120}. According to our choice of bandwidth, we have $h_1=\sqrt{p/m}$ and thus the convergence rate of the initial estimator is $O(h_1)$, which satisfies the condition in Proposition \ref{prop1}.  Furthermore, we note that  any initial estimator $\hat{\be}_{0}$ with $\|\hat{\be}_{0}-\be(\tau)\|_2=O_{\pr}(\sqrt{p/m})$ can be used in the first iteration and the same Bahadur representation in Theorem \ref{thm:beta} holds.

The condition $p=O(m/(\log n)^2)$ in Theorem \ref{thm:beta} ensures that $\sqrt{p/m}=o(1)$, which implies the consistency of the initial estimator (and the $1/(\log n)^2$ factor is used for balancing the terms in  $\|\r_{n}\|_2$ in \eqref{eq:rn0}). 
This condition on  $p$ cannot be implied by either (C3) or (C3$^{*}$); and we choose to present (C3) (or (C3$^{*}$)) at the beginning since they are the minimum requirements to obtain Propositions \ref{prop1}--\ref{prop2}.
On the other hand,   the condition $p=o(nh_q/\log n)$ in (C3) will be satisfied if $p=o(n/(\log n)^2)$ since $h_q \geq \sqrt{p/n}$. Therefore, we can simply impose a stronger condition $p=o(m/(\log n)^{2})$ that unifies the condition $p=O(m/(\log n)^2)$  and that in (C3).

\begin{rem}[Nearly optimal rate of the Bahadur remainder term]\label{rem:optimal}
\emph{From Theorem \ref{thm:beta}, as long as
\begin{equation}\label{eq:q}
q\geq 2 +\log\{\log (\sqrt{p/n})/\log (p/m)\}/\log 2,
\end{equation}
the bandwidth for the $q$-th iteration is $h_q= \sqrt{p/n}$.
Then the first term in the right hand side of  \eqref{eq:rn} is $(p/n)^{3/4} \sqrt{\log n}$ and the second term is bounded by $p/n$, which is dominated by the first term.
Therefore, the Bahadur remainder term $\r_n$ of our method achieves a nearly optimal rate
 \begin{equation}\label{eq:classic_QR}
 \|\r_n\|_2=O_{\pr}\left((p/n)^{3/4}(\log n)^{1/2}\right).
 \end{equation}
In fact, for classical QR estimator $\widehat{\be}_{QR}$ and fixed $p$, 
it is known that the rate $n^{-3/4}$ can not be improved except for a $\log n$ term \citep{koenker2005quantile}. Note that in a common scenario when $n=O(m^A)$ and $p=O(m^{\delta})$ for some constants $A \geq 1$ and $0<\delta<1$, the right hand side of (\ref{eq:q}) is bounded by a constant. Therefore, a constant number of rounds of aggregations is sufficient to obtain a nearly optimal rate in Bahadur representation.}
\end{rem}



Applying the central limit theorem to Theorem \ref{thm:beta}, we obtain the asymptotic distribution of $\widehat{\be}^{(q)}-\be(\tau)$ as follows.

\begin{theorem}\label{thm:clt}
Suppose that all the conditions in Theorem \ref{thm:beta} hold. Further, assume that $n=O(m^{A})$ for some constant $A \geq 1$, $p=o(\min\{n^{1/3}/(\log n)^{2/3},m^{\delta}\})$ for some $0<\delta<1$ and the number of iterations $q$ satisfies \eqref{eq:q}. By choosing the bandwidth sequence $h_{g}= \max(\sqrt{p/n},(p/m)^{2^{g-2}})$ for $1 \leq g \leq q$, for any $\v\in \R^{p}$ with $\v\neq 0$,
\begin{equation}\label{eq:clt}
\frac{n^{1/2}\v'\left(\hat{\be}^{(q)}-\be(\tau)\right)}{\sqrt{\v'\textbf{D}^{-1}\ep [\X\X']\D^{-1}\v}}\Rightarrow N(0,\tau(1-\tau))
\end{equation}
as $m, n\rightarrow \infty$.
\end{theorem}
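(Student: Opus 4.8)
The plan is to deduce Theorem~\ref{thm:clt} from the Bahadur representation in Theorem~\ref{thm:beta} together with a central limit theorem for the leading linear term. First I would invoke Theorem~\ref{thm:beta}: under the stated conditions (which include all the hypotheses of Theorem~\ref{thm:beta} plus the strengthened requirements $n=O(m^A)$, $p=o(\min\{n^{1/3}/(\log n)^{2/3},m^\delta\})$, and the lower bound \eqref{eq:q} on $q$), we have
\begin{equation*}
\hat{\be}^{(q)}-\be(\tau)=\frac{\D^{-1}}{n}\sum_{i=1}^{n}\X_{i}(I\{\epsilon_{i}\geq 0\}+\tau-1)+\r_{n},
\end{equation*}
with $\|\r_n\|_2=O_\pr\bigl(\sqrt{ph_q\log n/n}+(p/m)^{2^{q-1}}\bigr)$. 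By Remark~\ref{rem:optimal}, the hypothesis \eqref{eq:q} forces $h_q=\sqrt{p/n}$, so $\|\r_n\|_2=O_\pr\bigl((p/n)^{3/4}(\log n)^{1/2}\bigr)$. Hence $\sqrt{n}\,\v'\r_n=O_\pr\bigl(\|\v\|_2\,p^{3/4}n^{-1/4}(\log n)^{1/2}\bigr)$, and the condition $p=o(n^{1/3}/(\log n)^{2/3})$ makes $p^{3/4}n^{-1/4}(\log n)^{1/2}=o(1)$; after normalizing by $\sqrt{\v'\D^{-1}\ep[\X\X']\D^{-1}\v}$, which is bounded below by a positive constant times $\|\v\|_2$ thanks to the eigenvalue bounds in (C1), the remainder contribution is $o_\pr(1)$ and thus asymptotically negligible.

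Next I would handle the leading term $S_n:=n^{-1/2}\v'\D^{-1}\sum_{i=1}^n \X_i(I\{\epsilon_i\ge0\}+\tau-1)$, which is a normalized sum of i.i.d.\ mean-zero random variables $\xi_i:=\v'\D^{-1}\X_i(I\{\epsilon_i\ge0\}+\tau-1)$. Here one uses $\ep(I\{\epsilon_i\ge0\}\mid\X_i)=1-\tau$, so $\ep\xi_i=0$; and $\var(I\{\epsilon_i\ge0\}\mid\X_i)=\tau(1-\tau)$, so $\var(\xi_i)=\tau(1-\tau)\,\v'\D^{-1}\ep[\X\X']\D^{-1}\v$, matching the claimed normalizing constant. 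Since $p$ (and hence the vector $\v$) is allowed to grow with $n$, I would apply a triangular-array CLT, e.g.\ the Lindeberg--Feller theorem, rather than the classical i.i.d.\ CLT: it suffices to verify the Lindeberg condition for the array $\{\xi_i/\sqrt{n\var(\xi_i)}\}$. This follows from a uniform integrability / moment bound obtained from the sub-Gaussian-type tail condition $\sup_{\|\thh\|_2=1}\ep e^{\eta(\thh'\X)^2}<\infty$ in (C3) (or the $(4+\epsilon)$-moment condition under (C3$^*$)), which gives a uniform bound on, say, $\ep|\xi_i|^{2+\delta'}/(\var\xi_i)^{1+\delta'/2}$ independent of $\v$; combined with $\var\xi_i\asymp\|\v\|_2^2$ the Lyapunov ratio tends to zero. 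Therefore $S_n/\sqrt{\tau(1-\tau)\,\v'\D^{-1}\ep[\X\X']\D^{-1}\v}\Rightarrow N(0,1)$.

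Finally, combining the two pieces via Slutsky's theorem yields \eqref{eq:clt}. I expect the main obstacle to be the bookkeeping needed because $p\to\infty$: one must ensure all the ``$O_\pr$'' and CLT statements hold uniformly in the growing dimension, in particular that (i) the normalizing quantity $\v'\D^{-1}\ep[\X\X']\D^{-1}\v$ stays bounded away from $0$ and $\infty$ up to the factor $\|\v\|_2^2$ (from (C1) and the moment bound on $\X$), and (ii) the Lyapunov/Lindeberg ratio for the linear term genuinely vanishes under the stated growth rate $p=o(n^{1/3}/(\log n)^{2/3})$, which is exactly the rate making the Bahadur remainder asymptotically negligible after scaling by $\sqrt n$. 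The smoothing-bias control and the recursion are already absorbed into Theorem~\ref{thm:beta}, so no further work on those is needed here.
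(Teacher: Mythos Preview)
Your proposal is correct and follows essentially the same approach as the paper: invoke Theorem~\ref{thm:beta}, show that the condition $p=o(n^{1/3}/(\log n)^{2/3})$ together with $h_q=\sqrt{p/n}$ makes $\sqrt{n}\,\r_n$ asymptotically negligible, and apply the Lindeberg--Feller central limit theorem to the leading i.i.d.\ sum. The paper's own proof is a single sentence (``Theorem~\ref{thm:clt} follows directly from Theorem~\ref{thm:beta} and the Lindeberg--Feller central limit theorem''), so your write-up simply fills in the details that the paper leaves implicit.
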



Note that to establish central limit theorem, we need $p=o(n^{1/3}/(\log  n)^{2/3})$ in Theorem \ref{thm:clt}, which ensures that $\|\r_n\|_2 = o(1/\sqrt{n})$ (see \eqref{eq:rn}). Therefore, as the first term in \eqref{eq:bah_beta} is an average of $n$ \emph{i.i.d.} zero-mean random vectors, the reminder term $\r_n$ is dominated by each coordinate of the first term in \eqref{eq:bah_beta}. For the classical QR estimator $\hat{\be}_{QR}$ in \eqref{eq:QR} (assuming all data is pooled together), the corresponding  condition should be $p=o(n^{1/3}/(\log n)^{2/3})$, see \cite{HE2000120}. This is the same with our condition except for the term $m^{\delta}$ which is required for the consistency of the initial estimator in our method. We also note that the conditions $n=O(m^{A})$ and $p=o(m^{\delta})$ ensure that the number of required iterations $q$ from \eqref{eq:q} is a constant (see Remark \ref{rem:optimal}). Therefore, we only need to perform a constant number of aggregations as $m,n \rightarrow \infty$.



Theorem \ref{thm:clt} shows that $\widehat{\be}^{(q)}$ achieves the same asymptotic efficiency as $\widehat{\be}_{QR}$ in (\ref{eq:QR}) computed directly on all the samples.  When $p$ is fixed, as compared to the na\"ive-DC that also achieves \eqref{eq:clt} but under the condition $n=o(m^2)$, our approach removes the restriction on the relationship of $m$ and $n$ by applying multiple rounds of aggregations. It is also important to  note that the required number of rounds $q$ in \eqref{eq:q} is usually quite small even with a large dimension $p$. 

 Given \eqref{eq:clt}, we only need consistent estimators of $\D$ and $\ep [\X\X']$ to construct confidence interval of $\v'\be(\tau)$ for any given  $\v$. It is  natural to use $\D_{n,h}$ and $\frac{1}{n}\sum_{i=1}^{n}\X_{i}\X'_{i}$ to estimate $\D$ and $\ep [\X\X']$, respectively. These estimators can be easily implemented under memory constraint by averaging the local sample estimators on each batch of data. The proofs of Propositions \ref{prop1}, \ref{prop2} and Theorems \ref{thm:beta}, \ref{thm:clt} are provided in Appendix \ref{sec:proof}.

  \begin{rem}[Data-adaptive choices of bandwidth]\label{rem:adaptive}
\emph{In practice, one could use the bandwidth $h_{g}=c_g\max(\sqrt{p/n},(p/m)^{2^{g-2}})$ with a scaling constant $c_g$ to further improve the empirical performance. An intuitive data-adaptive way of choosing $c_g$ is provided as follows. Given a set of candidate choices for $c_g$ (e.g., $\{c_1,\ldots, c_L\}$), we choose the best $c_g$ by minimizing
     \begin{equation}\label{eq:score}
       S(c):=\Big{\|}\frac{1}{n}\sum_{i=1}^{n}\X_{i}(I\{Y_i-\X_{i}'\hat{\be}^{(g)}_{c}\geq 0\}+\tau-1)\Big{\|}_2,
     \end{equation}
     where $\hat{\be}^{(g)}_{c}$ denotes the estimator in the $g$-th round of aggregation with the constant $c$ in the bandwidth. That is,      $c_g= \argmin_{c\in \{c_1,\ldots, c_L\}} S(c)$. In a distributed setting, the method only requires a small amount of communication. More specifically, given $\hat{\be}^{(g)}_{c}$, each machine $k$ returns $\sum_{i \in \mathcal{H}_k}\X_{i}(I\{Y_i-\X_{i}'\hat{\be}^{(g)}_{c}\geq 0\}+\tau-1)$ (i.e., an $O(p)$ vector) to the center for computing $S(c)$. We also evaluate the performance of our algorithm with the use of data-adaptive bandwidth in Section \ref{sec:adaptive}.}

\emph{It is worthwhile noting that the bandwidth tuning is not a critical issue for our algorithm (in contrast to many other smoothed QR estimators) since our estimator is constructed via multiple rounds of aggregations. Even using an inaccurate constant in bandwidth (as long as the bandwidths for different rounds shrink at the right rate of $(p/m)^{2^{g-2}}$), our method can achieve good performance by simply performing more rounds of aggregations. In Section \ref{sec:adaptive}, we will provide simulation studies to show that our algorithm is insensitive to the scaling constant in the bandwidth.}
\end{rem}

\begin{rem}[Discussions with related literature]\label{rem:comp}
\emph{Note that our estimator $\hat{\be}$ can be written as
{\small
\begin{eqnarray*}
\hat{\be}=\hat{\be}_{0}+\D^{-1}_{n,h} \left[\frac{1}{n}\sum\limits_{i=1}^n \X_i\Big\{H\Big(\dfrac{Y_i-\X'_i\widehat{\be}_0}{h}\Big)+\tau-1+\dfrac{Y_i-\X'_i\hat{\be}_{0}}{h}H'\Big(\dfrac{Y_i-\X'_i\widehat{\be}_0}{h}\Big)\Big\}\right].
\end{eqnarray*}
\normalfont
}
This formula is closely related to the estimator for quantile regression considered in \cite{pang2012variance}, where they introduced the estimator (non-censored version)
\begin{eqnarray*}
\hat{\be}_{PLW}=\hat{\be}_{0}+\A^{-1}_{n} \left[\frac{1}{n}\sum\limits_{i=1}^n \X_i\Big\{H\Big(\dfrac{Y_i-\X'_i\widehat{\be}_0}{\sqrt{\X'_{i}W\X_{i}}}\Big)+\tau-1\Big\}\right],
\end{eqnarray*}
where
$
\A_{n}=\frac{1}{n}\sum\limits_{i=1}^n \frac{\X_i\X_i'}{\sqrt{\X'_{i}W\X_{i}}}H'\Big(\frac{Y_i-\X'_i\widehat{\be}_0}{\sqrt{\X'_{i}W\X_{i}}}\Big)
$
and $W$ is a weight matrix with order  $O(1/n)$.
\cite{pang2012variance} applied the smoothing to the original score function $\sum_{i=1}^{n}\X_{i}(I\{Y_{i}-\X'_{i}\widehat{\be}_{0}\geq 0\}+\tau-1)$, while our estimator comes from the smoothing to the loss function $\rho_{\tau}(\cdot)$ and hence we have an additional term $\frac{Y_i-\X'_i\hat{\be}_{0}}{h}H'\Big(\frac{Y_i-\X'_i\widehat{\be}_0}{h}\Big)$.   Note that this term plays a key role in reducing the bias  induced by the initial estimator $\hat{\be}_{0}$  to $\|\hat{\be}_{0}-\be(\tau)\|^{2}_{2}$. As pointed above, this allows our estimator to have a successive improvement on the estimation accuracy by iteratively updating the initial estimator. 
}

\emph{Moreover, the recent work by  \cite{jordan2016communication} and \cite{wang2016efficient} also proposed iterative approaches in distributed setting for successive refinement of an estimator. However, there are a few key differences between our DC LEQR and their approaches. First, their results
require the loss function to have Lipschitz continuous second-order derivative, which is not satisfied by the original quantile loss function. Even if we replace the indicator function $I\{x\geq 0\}$ in the quantile loss by the smoothed version $H(x/h)$, the second derivative of the loss function will not satisfy their conditions (e.g., Assumption PD in \cite{jordan2016communication} requires that the ``Lipschitz constant'' of the second derivative has a uniform upper bound). Furthermore, our results allow $p\rightarrow \infty$ in the inference problem without $\ell_1$-regularization.}
\end{rem}

\begin{rem}[General heterogenous case]
\emph{We further consider a more general heterogenous case where $\{\X_i,\epsilon_i\}$'s are independent, but not identically distributed from the model \eqref{eq:linear_model}. Due to  space limitations, this heterogenous case is relegated to Appendix \ref{sec:het}.}
\end{rem}

\subsection{Asymptotics for Online LEQR}
The next theorem gives the limiting behavior of the online LEQR in Algorithm \ref{algo:online_LEQR}. 

\begin{theorem}\label{th4.3} Suppose that (C1)-(C3) hold and $p=o(m/(\log m)^{2})$. We have for any $A>0$ and  uniformly in $1\leq j\leq m^{A}$,
\begin{eqnarray}\label{online-LEQR}
\hat{\be}[j]-\be(\tau)&=&\frac{\D^{-1}}{m+j}\sum_{i=-m+1}^{j}\X_{i}\Big{(}I\{\epsilon_{i}\geq 0\}+\tau-1\Big{)}+\r_{m,j},
\end{eqnarray}
where
\begin{eqnarray}\label{rmj}
\|\r_{m,j}\|_2=O_{\pr}\Big(\sqrt{\frac{p}{m+j}}\Big{\{}\Big{(}\frac{p}{m}\Big{)}^{1/4}\sqrt{\log m}+\frac{\sqrt{p}}{m^{1/4}}\Big{\}}\Big).
\end{eqnarray}
Furthermore, when  $p=o(m^{1/4})$, we have for any $\v\in \R^{p}$ with $\v\neq 0$,
\begin{equation}\label{online-LEQR1}
\frac{(m+j)^{1/2}\v'\left(\hat{\be}[j]-\be(\tau)\right)}{\sqrt{\v'\textbf{D}^{-1}\ep [\X\X']\D^{-1}\v}}\Rightarrow N(0,\tau(1-\tau))
\end{equation}
as $m\rightarrow \infty$.
\end{theorem}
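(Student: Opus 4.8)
The plan is to deduce Theorem \ref{th4.3} from the one‑round expansions \eqref{eq:rn00}--\eqref{eq:rn0} (equivalently Propositions \ref{prop1}--\ref{prop2}) by an induction along the intervals $\{(s_l,r_l)\}_{l\ge 1}$ of Algorithm \ref{algo:online_LEQR}. The structural fact I would exploit is that, for $j$ in interval $l$, $\hat\be[j]=(\V(r_{l-1})+\V(j))^{-1}(\U(r_{l-1})+\U(j))$ has exactly the algebraic form of the LEQR \eqref{eq:beta_0}, except that the statistic is split into two blocks with \emph{different} initializers and bandwidths: the carried block $(\U(r_{l-1}),\V(r_{l-1}))$ is the LEQR statistic on $\{s_{l-1},\dots,r_{l-1}\}$ built from $(\hat\be[r_{l-2}],h_{l-1})$, and the current block $(\U(j),\V(j))$ is the LEQR statistic on $\{s_l,\dots,j\}$ built from $(\hat\be[r_{l-1}],h_l)$. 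For $l=1$ both blocks share the initializer $\hat\be[0]$ and the bandwidth $h_1=(p/m)^{1/2}$, so $\hat\be[j]$ \emph{is} the LEQR on all $m+j$ samples $\{-m+1,\dots,j\}$ with an initializer of rate $\sqrt{p/m}=h_1$; I obtain \eqref{online-LEQR}--\eqref{rmj} on interval $1$ directly from \eqref{eq:rn00}--\eqref{eq:rn0} with $n=m+j$, after checking (using $m+j\le m+r_1\asymp m^{3/2}$) that each term of \eqref{eq:rn0} is dominated by the right side of \eqref{rmj}: the $\sqrt{ph_1\log n/n}$ term produces the $(p/m)^{1/4}\sqrt{\log m}$ factor and the $a_n^2+h_1^2\asymp p/m$ term produces the $\sqrt p\,m^{-1/4}$ factor.

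For the induction step I first record $m+r_l\asymp m^{a_l}$, $h_l\asymp\sqrt{p/m^{a_{l-1}}}$ and $h_{l-1}\asymp\sqrt{p/m^{a_{l-2}}}$, so that the inductive hypothesis $\|\hat\be[r_{l-1}]-\be(\tau)\|_2=O_{\pr}(\sqrt{p/m^{a_{l-1}}})$ (and the analogue for $\hat\be[r_{l-2}]$) matches the corresponding bandwidth and makes Proposition \ref{prop1}'s initializer condition $a_n=O(h)$ hold for both blocks. Writing $n_{\mathrm{eff}}=|\{s_{l-1},\dots,j\}|$ and
\[
\hat\be[j]-\be(\tau)=\Bigl(\tfrac1{n_{\mathrm{eff}}}(\V(r_{l-1})+\V(j))\Bigr)^{-1}\tfrac1{n_{\mathrm{eff}}}\bigl(\U(r_{l-1})+\U(j)-(\V(r_{l-1})+\V(j))\be(\tau)\bigr),
\]
I apply Proposition \ref{prop2} to each $\V$-block (each per‑block average $\tfrac1{|\cdot|}\V(r_{l-1})$, $\tfrac1{|\cdot|}\V(j)$ converges to $\D$, hence so does their $n_{\mathrm{eff}}$-weighted average, with the stated errors) and Proposition \ref{prop1} to each $\U$-block (passing from the $Y_i$- to the $\epsilon_i$-form of $\A_{n,h}$ through $\tfrac{Y_i-\X_i'\be(\tau)}{h}H'=\tfrac{\epsilon_i}{h}H'$ and $\V(\cdot)\be(\tau)$). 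Collecting the two blocks gives $\hat\be[j]-\be(\tau)=\D^{-1}\tfrac1{n_{\mathrm{eff}}}\sum_{i=s_{l-1}}^{j}\X_i(I\{\epsilon_i\ge0\}+\tau-1)+\widetilde\r_{m,j}$, with $\widetilde\r_{m,j}$ assembled from \eqref{eq:rn0}-type terms evaluated block-by-block at $n_{\mathrm{eff}}\asymp m+j$, $h\in\{h_{l-1},h_l\}$, $a_n\in\{\|\hat\be[r_{l-2}]-\be(\tau)\|_2,\|\hat\be[r_{l-1}]-\be(\tau)\|_2\}$. The final maneuver replaces $\tfrac1{n_{\mathrm{eff}}}\sum_{i=s_{l-1}}^{j}$ by $\tfrac1{m+j}\sum_{i=-m+1}^{j}$: the change of normalization ($\tfrac1{n_{\mathrm{eff}}}=\tfrac1{m+j}(1+O(m^{-1/4}))$) and the omitted head $\tfrac1{m+j}\sum_{i=-m+1}^{s_{l-1}-1}\X_i(I\{\epsilon_i\ge0\}+\tau-1)$ (of norm $O_{\pr}(\sqrt{p(m+m^{a_{l-2}})}/(m+j))$) are folded into $\r_{m,j}$, using $a_{l-1}-a_{l-2}\ge 1/4$; taking $j=r_l$ reads off the rate of $\hat\be[r_l]$ and closes the induction.

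The step I expect to be the crux is precisely this bookkeeping: after substituting $h_l=\sqrt{p/m^{a_{l-1}}}$ and the inductive rates one must verify, uniformly in $s_l\le j\le r_l$ and in $l$, that every error above is $O_{\pr}\bigl(\sqrt{p/(m+j)}\{(p/m)^{1/4}\sqrt{\log m}+\sqrt p\,m^{-1/4}\}\bigr)$ — this is exactly where the specific exponents $a_{2k-1}=2^{k-1}+\tfrac12$, $a_{2k}=2^{k-1}+\tfrac34$ (which force $n_l\approx n_{l-2}^2$, so that one LEQR refinement per interval keeps the squared initializer bias below the statistical error $\sqrt{p/(m+j)}$) and the hypothesis $p=o(m/(\log m)^2)$ are used, with the $\sqrt p\,m^{-1/4}$ factor being the footprint of the $a_n^2+h_1^2\asymp p/m$ term of the first interval. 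Uniformity in $j$ within an interval needs a maximal inequality (Doob's inequality, or a union bound with the sub‑Gaussian/polynomial‑moment tails from (C3)/(C3$^{*}$)) applied to the partial sums $\U(j),\V(j)$; uniformity in $l$ is free because $a_l$ grows geometrically in $l$, so covering $[1,m^A]$ takes only $O(\log A)$ intervals.

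The asymptotic normality \eqref{online-LEQR1} then follows from \eqref{online-LEQR}: when $p=o(m^{1/4})$ we have $\sqrt p\{(p/m)^{1/4}\sqrt{\log m}+\sqrt p\,m^{-1/4}\}=o(1)$, hence $(m+j)^{1/2}\v'\r_{m,j}=o_{\pr}(1)$ uniformly in $1\le j\le m^A$, while the leading term $(m+j)^{1/2}\v'\D^{-1}\tfrac1{m+j}\sum_{i=-m+1}^{j}\X_i(I\{\epsilon_i\ge0\}+\tau-1)$ is a normalized sum of i.i.d.\ mean‑zero variables $\v'\D^{-1}\X_i(I\{\epsilon_i\ge0\}+\tau-1)$ with variance $\tau(1-\tau)\,\v'\D^{-1}\ep[\X\X']\D^{-1}\v$ (using $\pr(\epsilon_i\ge0\mid\X_i)=1-\tau$); the Lindeberg condition is checked from the moment assumption in (C3), and the Lindeberg--Feller central limit theorem together with Slutsky's theorem gives \eqref{online-LEQR1}.
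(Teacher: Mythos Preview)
Your proposal is correct and follows essentially the same route as the paper: induction along the intervals $\{(s_l,r_l)\}$, block-by-block control of the two pieces $(\U(r_{l-1}),\V(r_{l-1}))$ and $(\U(j),\V(j))$ via the one-round expansions, uniform control in $j$ by the exponential tail bounds underlying Propositions~\ref{prop1}--\ref{prop2} (the paper invokes Lemma~\ref{lem:cailiu} and Lemma~\ref{lemma2.3} directly rather than the propositions as black boxes, but the substance is identical), and Lindeberg--Feller for \eqref{online-LEQR1}. One small caveat: the paper's derivation in fact leaves the leading term in the form $\frac{\D^{-1}}{j-\lfloor m^{a_{l-2}}\rfloor}\sum_{i=s_{l-1}}^{j}\X_i(I\{\epsilon_i\ge 0\}+\tau-1)$ and does not carry out the conversion to $\frac{1}{m+j}\sum_{i=-m+1}^{j}$; your ``omitted head'' bound $\sqrt{(m+m^{a_{l-2}})/(m+j)}$ is only $O(m^{-1/8})$ at $j=s_l$ when $a_{l-1}-a_{l-2}=1/4$ (which recurs infinitely often), so ``$a_{l-1}-a_{l-2}\ge 1/4$'' alone does not fold it under \eqref{rmj} for bounded $p$. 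This discrepancy affects neither \eqref{online-LEQR1} nor the rate $\|\hat\be[j]-\be(\tau)\|_2=O_{\pr}(\sqrt{p/(m+j)})$, which is why neither your sketch nor the paper dwells on it.
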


We note that $m+j$ is the total number of used samples (including the samples for initialization) up to time $j$.
From (\ref{online-LEQR}),  we have $\|\widehat{\be}[j]-\be(\tau)\|_2=O_{\pr}(\sqrt{p/(m+j)})$ for any $1\leq j\leq m^{A}$  when $\r_{m,j}$ is dominated by the first term. By Theorem 1 in \cite{siegmund1969},
we can further obtain $\|\widehat{\be}[j]-\be(\tau)\|_2=O_{\pr}\Big{(}\sqrt{\frac{p\log\log m}{m+j}}\Big{)}$ uniformly for $1\leq j\leq m^{A}$.  To establish the asymptotic distribution in \eqref{online-LEQR1}, we  need $p=o(m^{1/4})$, which ensures that $\|\r_{m,j}\|_2 = o(1/\sqrt{m+j})$ (see \eqref{rmj}).

\section{Simulations}
\label{sec:simu}
In this section, we provide simulation studies to illustrate the performance of DC LEQR for constructing confidence intervals for  QR  problems. We generate data from a linear regression model
\begin{equation}\label{eq:simu_model}
Y_i=\vec{X}_i'\be+\epsilon_i,\quad i=1,2,\dots,n,
\end{equation}
where  $\X_i=(1, X_{i1}, \ldots, X_{ip})'\in \mathbb{R}^{p+1}$ is a random covariate vector. Here,  $(X_{i1}, \ldots, X_{ip})'$  follows a multivariate uniform distribution $\mathrm{Unif}([0,1]^p)$ with $\text{Corr}(X_{ij},X_{ik})=0.5^{|j-k|}$ for $1\leq j\neq k\leq p$. Please refer to \cite{falk1999simple} for the construction of such a multivariate uniform distribution. The regression coefficient vector $\be=\mathbf{1}_{p+1}$. The errors $\epsilon_i$'s are generated independently from the following three distributions:
\begin{enumerate}[(1)]
\item homoscedastic normal, $\epsilon_i\sim N(0,1)$;
\item heteroscedastic normal, $\epsilon_i\sim N(0,(1+0.3X_{i1})^2)$;
\item exponential, $\epsilon_i\sim\mathrm{Exp}(1)$.
\end{enumerate}
For each quantile level $\tau$, we compute the corresponding true QR coefficient vector $\be(\tau)$ in the QR model \eqref{eq:linear_model} by shifting $\epsilon_i$ such that $\pr(\epsilon_i\leq 0|\X_i)=\tau$:
\begin{enumerate}[(1)]
\item homoscedastic normal, $\be(\tau)=\be+\Phi^{-1}(\tau)\e_1$;
\item heteroscedastic normal, $\be(\tau)=\be+\Phi^{-1}(\tau)\e_1+0.3 \Phi^{-1}(\tau)\e_2$;
\item exponential, $\be(\tau)=\be+F_{\mathrm{exp}}^{-1}(\tau)\e_1$;
\end{enumerate}
Here $\Phi$ and $F_{\mathrm{exp}}$ are the cumulative distribution function of standard normal distribution and exponential distribution with parameter $1$. The vector $\e_i$ (for $i=1,\ldots, p+1$) is the $(p+1)$-dimensional canonical vector with the $i$-th element being one and all the other elements being zero.

We use the integral of a biweight (or quartic) kernel as the smoothing function $H$:
\begin{equation}\label{eq:H}
H(v)=\left\{
\begin{array}{ll}
0&\text{if }v\leq -1,\\
\frac{1}{2}+\frac{15}{16}\left(v-\frac23v^3+\frac15 v^5\right)&\text{if }|v| < 1,\\
1&\text{if }v \geq 1.
\end{array}
\right.
\end{equation}
It is easy to see that it satisfies Condition (C2).

\subsection{Coverage rates}
\label{sec:cover}

\begin{table}[!t]
\centering
    \caption{Coverage rates, bias, and variance of DC LEQR v.s. QR All and na\"ive-DC when $n$ varies from $m^{1.6}$ to $m^{3}$. Noises $\epsilon_i$'s are generated from homoscedastic normal distribution. Dimension $p=15$. Batch size $m=100$. Quantile level $\tau \in \{0.1,0.5,0.9\}$.}
\label{table:bias_1}\label{table:coverage}
\begin{tabular}{ll|crr|crr}
\hline
&$\log_m(n)$&Coverage&Bias ~~~  & Var ~~~~~ &Coverage&Bias~~~  & Var ~~~~~\\
&&Rate&($\times 10^{-2}$) & ($\times 10^{-4}$)&Rate&($\times 10^{-2}$) & ($\times 10^{-4}$)\\ \hline
&          & \multicolumn{3}{c|}{DC LEQR $q=4$} &   \multicolumn{3}{c}{DC LEQR $q=5$} \\
\hline
\multicolumn{2}{l|}{$\tau=0.1$}  &&&&&& \\
           & 1.6 & 0.954  & 0.38  & 20.81 & 0.953    & 0.26  & 21.14 \\
           & 2.0 & 0.956  & 0.13  & 3.04  & 0.953    & 0.09  & 3.05  \\
           & 2.4 & 0.946  & -0.02 & 0.52  & 0.949    & -0.02 & 0.52  \\
           & 3.0 & 0.942  & 0.00  & 0.04  & 0.943    & 0.00  & 0.03  \\
\multicolumn{2}{l|}{$\tau=0.5$}  &&&&&& \\
           & 1.6 & 0.943  & 0.00  & 12.07 & 0.938    & 0.03  & 12.01 \\
           & 2.0 & 0.947  & 0.02  & 1.81  & 0.947    & 0.02  & 1.81  \\
           & 2.4 & 0.944  & 0.00  & 0.29  & 0.945    & 0.00  & 0.29  \\
           & 3.0 & 0.951  & 0.00  & 0.02  & 0.952    & 0.00  & 0.02  \\
\multicolumn{2}{l|}{$\tau=0.9$} &&&&&& \\
           & 1.6 & 0.938  & -0.45 & 21.37 & 0.940    & -0.36 & 21.77 \\
           & 2.0 & 0.942  & -0.05 & 3.65  & 0.932    & -0.02 & 3.65  \\
           & 2.4 & 0.960   & -0.02 & 0.51  & 0.959    & -0.02 & 0.51  \\
           & 3.0 & 0.952  & 0.00  & 0.04  & 0.955    & 0.00  & 0.03  \\
\hline
&&\multicolumn{3}{c|}{QR All} &\multicolumn{3}{c}{Na\"ive-DC}     \\
\hline
\multicolumn{2}{l|}{$\tau=0.1$}  &&&&&& \\
           & 1.6 & 0.948  & 0.15  & 23.04 & 0.638    & 7.86  & 13.82 \\
           & 2.0 & 0.949  & 0.04  & 3.21  & 0.000    & 7.96  & 1.93  \\
           & 2.4 & 0.952  & 0.03  & 0.50  & 0.000    & 7.97  & 0.31  \\
           & 3.0 & 0.953  & 0.01  & 0.03  & 0.000    & 7.95  & 0.02  \\
\multicolumn{2}{l|}{$\tau=0.5$}  &&&&&& \\
           & 1.6 & 0.954  & -0.20 & 11.16 & 0.978    & 0.46  & 8.71  \\
           & 2.0 & 0.951  & -0.01 & 1.68  & 0.968    & 0.40  & 1.32  \\
           & 2.4 & 0.950  & 0.02  & 0.28  & 0.916    & 0.36  & 0.21  \\
           & 3.0 & 0.930  & 0.00  & 0.02  & 0.222    & 0.35  & 0.01  \\
\multicolumn{2}{l|}{$\tau=0.9$}  &&&&&& \\
           & 1.6 & 0.942  & -0.16 & 23.12 & 0.945    & 3.35  & 14.36 \\
           & 2.0 & 0.946  & -0.04 & 3.30  & 0.531    & 3.46  & 2.20  \\
           & 2.4 & 0.947  & 0.02  & 0.52  & 0.000    & 3.45  & 0.35  \\
           & 3.0 & 0.944  & 0.00  & 0.03  & 0.000    & 3.41  & 0.02  \\
\hline
\end{tabular}
\end{table}

\begin{table}[!t]
\centering
    \caption{Coverage rates, bias, and variance of DC LEQR v.s. QR All and na\"ive-DC when $n$ varies from $m^{1.6}$ to $m^{3}$. Noises $\epsilon_i$'s are generated from heteroscedastic normal distribution. Dimension $p=15$. Batch size $m=100$. Quantile level $\tau \in \{0.1,0.5,0.9\}$.}
\label{table:bias_2}\label{table:coverage}
\begin{tabular}{ll|crr|crr}
\hline
&$\log_m(n)$&Coverage&Bias ~~~  & Var ~~~~~ &Coverage&Bias~~~  & Var ~~~~~\\
&&Rate&($\times 10^{-2}$) & ($\times 10^{-4}$)&Rate&($\times 10^{-2}$) & ($\times 10^{-4}$)\\ \hline
&          & \multicolumn{3}{c|}{DC LEQR $q=4$} &   \multicolumn{3}{c}{DC LEQR $q=5$} \\
\hline
\multicolumn{2}{l|}{$\tau=0.1$}  &&&&&& \\
           & 1.6 & 0.941  & 0.62  & 30.11 & 0.940    & 0.56  & 30.24 \\
           & 2.0 & 0.942  & 0.03  & 4.67  & 0.938    & -0.04 & 4.79  \\
           & 2.4 & 0.947  & 0.01  & 0.72  & 0.947    & 0.00  & 0.70  \\
           & 3.0 & 0.952  & 0.00  & 0.06  & 0.950    & 0.00  & 0.05  \\
\multicolumn{2}{l|}{$\tau=0.5$}  &&&&&& \\
           & 1.6 & 0.941  & -0.19 & 16.14 & 0.939    & -0.18 & 16.22 \\
           & 2.0 & 0.946  & 0.02  & 2.48  & 0.944    & 0.02  & 2.48  \\
           & 2.4 & 0.940   & 0.01  & 0.40  & 0.940    & 0.01  & 0.40  \\
           & 3.0 & 0.947  & 0.00  & 0.03  & 0.947    & 0.00  & 0.02  \\
\multicolumn{2}{l|}{$\tau=0.9$} &&&&&& \\
           & 1.6 & 0.925  & -0.61 & 31.55 & 0.926    & -0.48 & 31.46 \\
           & 2.0 & 0.943  & -0.05 & 4.81  & 0.947    & -0.01 & 4.73  \\
           & 2.4 & 0.955  & -0.03 & 0.67  & 0.956    & -0.02 & 0.67  \\
           & 3.0 & 0.953  & -0.01 & 0.09  & 0.957    & 0.00  & 0.04  \\
\hline
&&\multicolumn{3}{c|}{QR All} &\multicolumn{3}{c}{Na\"ive-DC}     \\
\hline
\multicolumn{2}{l|}{$\tau=0.1$}  &&&&&& \\
           & 1.6 & 0.951  & 0.25  & 31.53 & 0.362    & 11.95 & 17.03 \\
           & 2.0 & 0.959  & 0.10  & 4.42  & 0.000    & 11.92 & 2.71  \\
           & 2.4 & 0.937  & -0.01 & 0.79  & 0.000    & 11.88 & 0.44  \\
           & 3.0 & 0.954  & 0.01  & 0.04  & 0.000    & 11.91 & 0.02  \\
\multicolumn{2}{l|}{$\tau=0.5$}  &&&&&& \\
           & 1.6 & 0.950  & -0.09 & 16.70 & 0.976    & 0.16  & 11.99 \\
           & 2.0 & 0.948  & 0.03  & 2.38  & 0.974    & 0.29  & 1.81  \\
           & 2.4 & 0.943  & -0.03 & 0.40  & 0.944    & 0.33  & 0.28  \\
           & 3.0 & 0.960  & 0.00  & 0.02  & 0.320    & 0.36  & 0.02  \\
\multicolumn{2}{l|}{$\tau=0.9$}  &&&&&& \\
           & 1.6 & 0.942  & -0.32 & 31.16 & 0.972    & 2.10  & 19.16 \\
           & 2.0 & 0.946  & -0.13 & 4.68  & 0.882    & 2.00  & 2.94  \\
           & 2.4 & 0.954  & 0.00  & 0.72  & 0.303    & 1.99  & 0.45  \\
           & 3.0 & 0.946  & 0.00  & 0.05  & 0.000    & 2.01  & 0.03  \\
\hline
\end{tabular}
\end{table}

\begin{table}[!t]
\centering
    \caption{Coverage rates, bias, and variance of DC LEQR v.s. QR All and na\"ive-DC when $n$ varies from $m^{1.6}$ to $m^{3}$. Noises $\epsilon_i$'s are generated from exponential distribution. Dimension $p=15$. Batch size $m=100$. Quantile level $\tau \in \{0.1,0.5,0.9\}$.}
\label{table:bias_3}
\begin{tabular}{ll|crr|crr}
\hline
&$\log_m(n)$&Coverage&Bias ~~~  & Var ~~~~~ &Coverage&Bias~~~  & Var ~~~~~\\
&&Rate&($\times 10^{-2}$) & ($\times 10^{-4}$)&Rate&($\times 10^{-2}$) & ($\times 10^{-4}$)\\ \hline
&          & \multicolumn{3}{c|}{DC LEQR $q=4$} &   \multicolumn{3}{c}{DC LEQR $q=5$} \\
\hline
\multicolumn{2}{l|}{$\tau=0.1$}  &&&&&& \\
           & 1.6 & 0.880   & 0.71  & 1.19  & 0.916    & 0.45  & 0.90  \\
           & 2.0 & 0.920   & 0.05  & 0.42  & 0.942    & 0.02  & 0.22  \\
           & 2.4 & 0.915  & 0.00  & 0.13  & 0.931    & 0.02  & 0.09  \\
           & 3.0 & 0.907  & 0.04  & 0.33  & 0.931    & -0.04 & 0.58  \\
\multicolumn{2}{l|}{$\tau=0.5$}  &&&&&& \\
           & 1.6 & 0.933  & -0.12 & 7.76  & 0.931    & -0.09 & 7.75  \\
           & 2.0 & 0.937  & -0.05 & 1.59  & 0.939    & -0.04 & 1.12  \\
           & 2.4 & 0.928  & 0.01  & 0.51  & 0.933    & 0.00  & 0.19  \\
           & 3.0 & 0.934  & 0.00  & 0.39  & 0.941    & -0.01 & 0.04  \\

\multicolumn{2}{l|}{$\tau=0.9$} &&&&&&  \\
           & 1.6 & 0.931  & -0.43 & 61.03 & 0.933    & -0.21 & 58.31 \\
           & 2.0 & 0.908  & 0.00  & 10.75 & 0.916    & 0.12  & 10.37 \\
           & 2.4 & 0.906  & -0.05 & 1.95  & 0.915    & 0.03  & 1.52  \\
           & 3.0 & 0.885  & -0.02 & 0.15  & 0.917    & 0.01  & 0.10  \\
\hline
&&\multicolumn{3}{c|}{QR All} &\multicolumn{3}{c}{Na\"ive-DC}     \\
\hline
\multicolumn{2}{l|}{$\tau=0.1$}  &&&&&& \\
           & 1.6 & 0.945  & 0.14  & 0.87  & 0.422    & 1.98  & 1.11  \\
           & 2.0 & 0.957  & 0.02  & 0.12  & 0.001    & 1.99  & 0.16  \\
           & 2.4 & 0.958  & 0.00  & 0.02  & 0.000    & 1.99  & 0.03  \\
           & 3.0 & 0.944  & 0.00  & 0.00  & 0.000    & 2.00  & 0.00  \\
\multicolumn{2}{l|}{$\tau=0.5$}  &&&&&& \\
           & 1.6 & 0.959  & 0.11  & 7.16  & 0.799    & 3.49  & 5.32  \\
           & 2.0 & 0.944  & 0.02  & 1.15  & 0.070    & 3.46  & 0.93  \\
           & 2.4 & 0.948  & 0.01  & 0.18  & 0.000    & 3.46  & 0.15  \\
           & 3.0 & 0.953  & 0.00  & 0.01  & 0.000    & 3.45  & 0.01  \\
\multicolumn{2}{l|}{$\tau=0.9$}  &&&&&& \\
           & 1.6 & 0.952  & 0.07  & 65.66 & 0.798    & 11.20 & 36.00 \\
           & 2.0 & 0.944  & 0.16  & 10.14 & 0.010    & 11.66 & 5.76  \\
           & 2.4 & 0.953  & 0.00  & 1.59  & 0.000    & 11.68 & 0.96  \\
           & 3.0 & 0.948  & 0.02  & 0.10  & 0.000    & 11.69 & 0.06 \\
\hline
\end{tabular}
\end{table}

We compute the DC LEQR $\widehat{\be}^{(q)}$ in \eqref{ag1} and measure the performance of  $\widehat{\be}^{(q)}$ in terms of the statistical inference. In particular, we report average coverage rates of the confidence interval of $\v_0'\be(\tau)$, where $\v_0=(p+1)^{-1/2}\mathbf{1}_{p+1}$. We set the nominal coverage probability $1- \alpha_0$ to $95\%$.  From Theorem \ref{thm:clt}, an oracle $(1- \alpha_0)$-th confidence interval for $\v_0'\be(\tau)$ is given by
\begin{equation}\label{eq:oracle}
\v_0'\widehat{\be}^{(q)}\pm n^{-1/2}\sqrt{\tau(1-\tau)\v_0'\D^{-1}\ep\left[\vec{X}\vec{X}'\right]\D^{-1}\v_0}z_{\alpha_0/2},
\end{equation}
where $z_{\alpha_0/2}$ is the $(1-\alpha_0/2)$-quantile of the standard normal distribution. To construct the confidence interval, we estimate $\D$ and $\ep\left[\vec{X}\vec{X}'\right]$ by $\D_{n,h}$ and $\frac{1}{n} \sum_{i=1}^n \X_i \X_i'$, respectively. There are two major advantages of this approach. First, since $\D_{n,h}$ has already been obtained in computing DC LEQR, we estimate $\D$ without any extra computation. Second, both $\D_{n,h}$ and $\frac{1}{n} \sum_{i=1}^n \X_i \X_i'$ are in the form of summation over $n$ terms, which can be easily computed in a distributed setting with little communication cost. As we will show in Table \ref{table:sesd_adaptive}, the proposed estimator is very close to the truth. The scaling constant in bandwidth is simply set to one (as in our theorems) and more detailed experiments on the sensitivity analysis of the scaling constant is provided in Section \ref{sec:adaptive}. We report  empirical coverage rates as an average of 1000 independent runs of the simulations.

In Tables \ref{table:bias_1}--\ref{table:bias_3}, we present the empirical coverage rates of our DC LEQR estimator, the na\"ive-DC estimator, and the oracle QR estimator in \eqref{eq:QR} computed on all data points (denoted as QR All) for three different noise models.  More precisely, we generate the error from one of three distributions (i.e., homoscedastic normal for Table \ref{table:bias_1}, heteroscedastic normal for Table \ref{table:bias_2}, and exponential for Table \ref{table:bias_3}) and consider three different quantile levels $\tau=0.1,0.5,0.9$.
In our experiment, we set $m=100$,  $p=15$ and vary $n$ form $m^{1.6}$ to $m^3$ (i.e., $\log_m(n)$ from 1.6 to 3).  From \eqref{eq:q}, it is easy to see that we need number of aggregations $q \geq 4$. Thus, we report the performance of DC LEQR $\widehat{\be}^{(q)}$ for $q=4$ and $q=5$.   We also report the case of $p=3$ in Appendix \ref{sec:p_15}. 

As one can see from Tables \ref{table:bias_1}--\ref{table:bias_3}, for most of the settings, the coverage rates of our DC LEQR are  close to the nominal level of $95\%$ after $4$ rounds of aggregations ($q=4$). The coverage performance becomes quite stable for $q=5$ iterations. On the other hand, for the na\"ive-DC estimator, the coverage rates are quite low in most settings, especially when $n$ is  larger than $m^2$.

Note that for na\"ive-DC and QR All, we use the same estimator of the limiting variance in \eqref{eq:oracle} as in our DC LEQR when $q=4$. More precisely, we use $\D_{n,h}$ computed in the $4$-th iteration to estimate $\D$ in \eqref{eq:oracle} when constructing the confidence intervals of na\"ive-DC and QR All estimators. We will show in Table \ref{table:sesd_adaptive} below that the proposed estimator of the limiting variance performs well. In fact, we also use the true limiting variance to construct the confidence interval and the coverage rates for all the methods are almost the same.    

In  addition,  we  also  report  the  simulation study  with  large  dimension $(p=  1000)$. The  results  and  analysis  are relegated  to Appendix \ref{sec:large_p}. From the results, we can infer that the  coverage rates get better as the iterative refinement proceeds. In particular, the coverage rates are close to the nominal level 95\% after $4$ iterations when the dimension $p=1000$. In summary, when the dimension $p$ is large, the proposed DC LEQR algorithm still achieves desirable performance with a small number of iterations.



\subsection{Bias and variance analysis}\label{sec:bv}
To see the improvement of DC LEQR over na\"ive-DC when $n$ is excessively larger than the subset size $m$, we also report the mean bias and variance of our proposed DC LEQR, na\"ive-DC, and QR All in Tables \ref{table:bias_1}--\ref{table:bias_3}. The mean bias and variance of $\v_0'\widehat{\be}$ are based on 1000 independent runs of simulations.


From Tables \ref{table:bias_1}--\ref{table:bias_3}, the bias of our method is quite small while the na\"ive-DC approach has a much larger bias regardless of the sample size $n$. For the variance, it decays with  the rate $1/n$ as $n$ goes large for all methods.
For most cases of using na\"ive-DC, as $\log_m(n)$ exceeding $2$, the squared bias becomes comparable or larger than the variance, which explains the reason of the failure of na\"ive-DC when $n$ is large as compared to $m$. On the other hand,  the bias of our proposed DC LEQR is similar to that of QR All and much smaller than that of na\"ive-DC.

\subsection{Sensitivity analysis and data-adaptive choice of the bandwidth}\label{sec:adaptive}
In this section, we show the empirical performance of the data-adaptive choice of bandwidth in Remark \ref{rem:adaptive} and the sensitivity of  the scaling constant in bandwidth. Due to space limitations, we report $\tau=0.1$, homoscedastic normal noise case as an example.  More noise cases (e.g., heteroscedastic normal and exponential cases) are relegated to Appendix \ref{sec:supp_sens}, and observations are similar to the homoscedastic normal case.
\begin{table}[!t]
\centering
\caption{Coverage rates for DC LEQR with iterations $q=1,2,3,4,5$ for different choices of the scaling constant $c$. Noises $\epsilon_i$'s are generated from homoscedastic normal distribution. Dimension $p=15$. Batch size $m=100$.  Sample size $n$ varies from $m^{1.6}$ to $m^{3}$.  Quantile level $\tau=0.1$.}
\label{table:adaptive}
\begin{tabular}{lc|ccccc}
\hline
        & $\log_m(n)$ & $q=1$ & $q=2$ & $q=3$ & $q=4$ & $q=5$\\
\hline
$c=1$    &     &          &          &          &      &     \\
 & 1.6 & 0.642 & 0.909 & 0.949 & 0.954 & 0.953\\
 & 2.0 & 0.311 & 0.911 & 0.956 & 0.956  &0.953\\
 & 2.4 & 0.077 & 0.427 & 0.877 & 0.946 &0.949\\
 & 3.0 & 0.000 & 0.231 & 0.611 & 0.942 &0.943\\
\hline
$c=3$    &     &          &          &  &        &          \\
 & 1.6 & 0.711 & 0.907 & 0.951 & 0.949 &0.944\\
 & 2.0 & 0.303 & 0.891 & 0.947 & 0.946 &0.946\\
 & 2.4 & 0.111 & 0.521 & 0.889 & 0.952 &0.949\\
 & 3.0 & 0.000 & 0.306 & 0.687 & 0.953 &0.950\\
\hline
$c=5$    &     &          &          &          &          \\
 & 1.6 & 0.644 & 0.900 & 0.959 & 0.950 &0.950\\
 & 2.0 & 0.196 & 0.849 & 0.942 & 0.952 &0.951\\
 & 2.4 & 0.079 & 0.469 & 0.815 & 0.944 &0.944\\
 & 3.0 & 0.000 & 0.120 & 0.588 & 0.947 &0.949\\
\hline
$c=10$   &     &          &          &          &          \\
 & 1.6 & 0.597 & 0.814 & 0.955 & 0.949 &0.947\\
 & 2.0 & 0.129 & 0.729 & 0.944 & 0.951 &0.951\\
 & 2.4 & 0.000 & 0.402 & 0.777 & 0.952 &0.949\\
 & 3.0 & 0.000 & 0.011 & 0.513 & 0.950 &0.946\\
\hline
data-adaptive &     &          &          &          &          \\
 & 1.6 & 0.724 & 0.929 & 0.946 & 0.949 &0.948\\
 & 2.0 & 0.336 & 0.914 & 0.947 & 0.954 &0.949\\
 & 2.4 & 0.187 & 0.564 & 0.912 & 0.941 &0.944\\
 & 3.0 & 0.000 & 0.385 & 0.710 & 0.954 &0.951\\
\hline
\end{tabular}
\end{table}

Table \ref{table:adaptive} shows  coverage rates of the DC LEQR with $q=1,2,3,4,5$ iterations. Similar to the setting in Tables \ref{table:bias_1}--\ref{table:bias_3}, we choose $m=100$,  $p=15$, and $n$ varies from $m^{1.6}$ to $m^3$. We report the performance of DC LEQR using different fixed constants $c=1,3,5,10$ in bandwidth $h_g$ for $1\leq g \leq q$ (using the same constant in all iterations) as well as our data-adaptive choice of bandwidth. For the data-adaptive bandwidth, we choose the best scaling constant from a list of 1000 equally spaced constants from a very small number (0.1) to a large one (100) according to Remark \ref{rem:adaptive}.   Note that 
different scaling constants will be chosen for different iterations. As one can see, for $q=1$ and $q=2$, the adaptive method indeed achieves better coverage than other scaling constants. On the other hand, when $q \geq 3$, all different choices of scaling constants lead to coverage rates close to the nominal level of 95\%. This experiment suggests that for our proposed iterative aggregation approach, even when a sub-optimal scaling constant is used, one can still achieve good performance by performing more iterations.
\begin{table}[!t]
\centering
\caption{Square root of the ratio of the estimated variance and the true limiting variance  using  different choices of the scaling constant $c$ in bandwidths. Noises $\epsilon_i$'s are generated from homoscedastic normal distribution. Dimension $p=15$. Batch size $m=100$.  Sample size $n$ varies from $m^{1.6}$ to $m^{3}$.  Quantile level $\tau=0.1$. }
\label{table:sesd_adaptive}
\begin{tabular}{lc|ccccc}
\hline
        & $\log_m(n)$ & $q=1$ & $q=2$ & $q=3$ & $q=4$ &$q=5$\\
\hline
$c=1$    &     &          &          &          &     &     \\
 & 1.6 & 1.05 & 1.10 & 1.11 & 1.07 &1.05\\
 & 2.0 & 1.14 & 1.09 & 0.99 & 1.04 &1.03\\
 & 2.4 & 1.27 & 1.24 & 1.12 & 0.99 &1.00\\
 & 3.0 & 1.12 & 1.07 & 1.03 & 1.00 &1.01\\
\hline
$c=3$    &     &          &          &  &        &          \\
 & 1.6 & 1.14 & 1.01 & 0.99 & 0.99 &1.00\\
 & 2.0 & 1.06 & 1.04 & 1.02 & 1.00 &1.01\\
 & 2.4 & 1.08 & 1.03 & 1.03 & 1.01 &1.01\\
 & 3.0 & 1.04 & 1.00 & 0.99 & 0.99 &0.99\\
\hline
$c=5$    &     &          &          &          &          \\
 & 1.6 & 1.09 & 1.04 & 0.99 & 0.98 &0.99\\
 & 2.0 & 1.11 & 1.12 & 1.07 & 1.03 &1.04\\
 & 2.4 & 1.04 & 1.07 & 1.00 & 1.00&1.01 \\
 & 3.0 & 1.07 & 1.02 & 1.01 & 1.01 &1.01\\
\hline
$c=10$   &     &          &          &          &          \\
 & 1.6 & 0.74 & 0.84 & 0.89 & 0.94 &0.96\\
 & 2.0 & 0.86 & 0.86 & 0.92 & 0.96 &0.99\\
 & 2.4 & 0.90 & 0.91 & 0.96 & 0.99 &1.01\\
 & 3.0 & 0.88 & 0.90 & 0.98 & 1.00 &1.00\\
\hline
data-adaptive &     &          &          &          &          \\
 & 1.6 & 1.01 & 1.06 & 1.03 & 1.01 &1.00\\
 & 2.0 & 1.03 & 1.03 & 1.01 & 0.98 &0.97\\
 & 2.4 & 1.01 & 1.06 & 1.01 & 1.00 &1.01\\
 & 3.0 & 1.01 & 1.04 & 1.02 & 1.01 &1.01\\
\hline
\end{tabular}
\end{table}

 Moreover, to investigate the sensitivity of the scaling constant in terms of  variance estimation, we also present the square root of the ratio of the estimated variance of our approach versus the true limiting variance, i.e., 
\begin{equation}\label{eq:ratio}
  \frac{\sqrt{\v_0'\D_{n,h}^{-1}\frac{1}{n}\sum_{i=1}^n\left[\X_i\X'_i\right]\D_{n,h}^{-1}\v_0}}{\sqrt{\v_0'\D^{-1}\ep\left[\vec{X}\vec{X}'\right]\D^{-1}\v_0}},
\end{equation}
where $\D_{n,h}$ is computed for iterations $q=1,2,3,4,5$ and for each fixed $q$, the bandwidth $h_g=c \max(\sqrt{p/n},(p/m)^{2^{g-2}})$ for $1 \leq g \leq q$. In Table \ref{table:sesd_adaptive}, we report the performance of the variance estimation with different choices of the scaling constant $c$ of the bandwidth $h$ in constructing $\D_{n,h}$ .

From Table \ref{table:sesd_adaptive}, the ratio is very close to $1$ when $q=4$ or $5$. Therefore, when $q$ is large, the proposed variance estimator is a reliable one in the distributed setting. Moreover, we notice  that the ratio is very stable for different choices of the scaling constant $c$, which illustrates the robustness of the estimator.


\begin{table}[!t]
\caption{Bias ($\times 10^{-2})$, variance ($\times 10^{-4})$, coverage rates (nominal level $95\%$) and computation time ($\times 100$ seconds) of DC LEQR for different $q$ versus the standard QR estimator on the entire data (QR All), and na\"ive-DC. Noises $\epsilon_i$'s are generated from homoscedastic normal distribution. Dimension $p=15$. Quantile level $\tau=0.1$. }
\label{table:time}
\centering
\begin{tabular}{llrrrrrr}
\hline
                               &          & \multicolumn{4}{c}{DC LEQR}   & QR~~    & Na\"ive~  \\
                               &          & $q=1$   & $q=2$   & $q=3$   & $q=4$   &  All~~   & DC~~  \\
\hline
\multicolumn{2}{l}{$m=100$, $n=10^6$}         &          &          &          &          &          \\
 & Bias & 6.112 & 0.865 & 0.038 & -0.020 & -0.020 & 7.947\\
 & Variance & 35.464 & 1.637 & 0.037 & 0.035 & 0.031 & 0.24\\
 & Coverage & 0.001 & 0.314 & 0.940 & 0.942 & 0.942 & 0.000\\
 & Time & 0.409 & 0.821 & 1.233 & 1.643 & 8.015 & 2.421\\
\hline
\multicolumn{2}{l}{$m=500$, $n=10^6$}         &          &          &          &          &          \\
 & Bias & 1.334 & 0.029 & -0.008 & -0.010 & -0.009 & 0.214\\
 & Variance & 0.642 & 0.042 & 0.029 & 0.029 & 0.029 & 0.029\\
 & Coverage & 0.087 & 0.914 & 0.947 & 0.951 & 0.951 & 0.132\\
 & Time & 0.499 & 0.993 & 1.488 & 1.982 & 7.909 & 2.549\\
\hline
\multicolumn{2}{l}{$m=500$, $n=10^7$}         &          &          &          &          &          \\
 & Bias & 1.171 & 0.046 & -0.005 & -0.005 & -0.005 & 0.237\\
 & Variance & 0.885 & 0.016 & 0.002 & 0.002 & 0.002 & 0.002\\
 & Coverage & 0.024 & 0.642 & 0.943 & 0.948 & 0.947 & 0.000\\
 & Time & 4.555 & 9.106 & 13.648 & 18.188 & 143.521 & 25.289\\	
\hline
\multicolumn{2}{l}{$m=1000$, $n=10^7$}          &          &          &          &          &          \\
 & Bias & 0.786 & 0.016 & -0.007 & -0.007 & -0.007 & 0.140\\
 & Variance & 0.167 & 0.002 & 0.003 & 0.003 & 0.003 & 0.003\\
 & Coverage & 0.014 & 0.897 & 0.952 & 0.947 & 0.947 & 0.013\\
 & Time & 4.547 & 9.087 & 13.625 & 18.164 & 137.425 & 25.353\\
\hline
\end{tabular}
\end{table}

\subsection{Computation efficiency}
\label{sec:comput}
We further conduct experiments to illustrate the computation efficiency of our algorithm for different $m$ and $n$ with $\tau=0.1$, $p=15$ and $\epsilon_i \sim N(0,1)$. We compare the computation time of DC LEQR versus that of na\"ive-DC as well as QR All in Table \ref{table:time}. We  report the bias and variance and the coverage rates for reference of the performance of the estimators.

First of all, the computation time of DC  LEQR is about as twice faster than that of the QR All, especially when $n$ is large. It is also faster than the na\"ive-DC and with a much better coverage when $n$ is much larger than $m$. Moreover, the time of our algorithm grows almost linearly in both $n$ and $q$, which is consistent with the computation time analysis in Section \ref{sec:est}. In contrast, we observe that the time of QR grows faster than a linear function in the sample size $n$. We also observe that, for each fixed $n$, the value $m$ has little effect on the computation time of DC LEQR. For na\"ive-DC, the squared bias in these cases dominate the variance, so the coverage rates are far below the nominal level. In the meantime, DC LEQR has around $95\%$ coverage in all four cases after $2$ iterations, and shows a similar behavior of bias and variance as  in Table \ref{table:bias_1}.

Recently, researchers have developed new optimization techniques based on  alternating direction method of multiplier (ADMM)  for solving QR problems (see, e.g., \cite{yu2017parallel,gu2017admm}). We further conduct comparisons to the ADMM approach and the details are provided in Appendix \ref{sec:admm}.

Finally, we also conduct simulation studies for online LEQR and the results are presented in Appendix \ref{sec:online_exp}.

\section{Conclusions and future works}
\label{sec:con}

In this paper, we propose a novel inference approach for quantile regression under the memory constraint. The proposed method achieves the same asymptotic efficiency as the quantile regression estimator using all the data. Furthermore, it allows a weak condition on the sample size $n$ as a function of memory size $m$ and is computationally attractive. One  key insight from this work is that na\"ively splitting data and averaging local estimators could be sub-optimal. Instead, the iterative refinement idea can lead to much improved performance for some inference problems in distributed environments.


In some applications, one would expect a weaker assumption on the distribution of data where the data could be correlated.  It would be an interesting future direction to study the problem of inference for correlated data in distributed settings.  Moreover, for the online problem, it is also interesting to consider the case where the model (e.g., $\be(\tau)$) is evolving over time. In this case, some exponential decaying techniques to down weight historical data might be useful.


In the future, we would also like to further explore this idea to other QR problems under memory constraints or in a distributed setup, e.g., $\ell_1$-penalized high-dimensional quantile regression (see, e.g., \cite{belloni2011l1,wang2012quantile,fan2016multitask}) and censored quantile regression (see, e.g., \cite{wang2009locally,kong2013global,volgushev2014censored,Leng:14,He:17:censored}).

\section*{Acknowledgements}

The authors are very grateful to three anonymous referees and the associate editor for their detailed and constructive comments that considerably improved the quality of this paper.

\bibliographystyle{imsart-nameyear}
\bibliography{ref}
\newpage
\appendix

\section{Proof of technical results}
\label{sec:proof}
\subsection{Proof of Proposition \ref{prop1}}
Let
\begin{eqnarray*}
	\C_{n,h}&=&\A_{n,h}-\frac{1}{n}\sum_{i=1}^{n}\X_{i}(I\{\epsilon_{i}\geq 0\}+\tau-1)\cr
	&=&\dfrac{1}{n}\sum\limits_{i=1}^n \X_i\left\{H\left(\dfrac{Y_i-\X_i'\widehat{\be}_0}{h}\right)-I\{\epsilon_{i}\geq 0\}+\dfrac{\epsilon_i}{h}H'\left(\dfrac{Y_i-\X_i'\widehat{\be}_0}{h}\right)\right\}.
\end{eqnarray*}
Note that $\|\C_{n,h}\|_2=\sup_{\v\in \mathbb{R}^{p}, \|\v\|_2=1}|\v'\C_{n,h}|$.

Let $S^{p-1}_{1/2}$ be a $1/2$ net of the unit sphere $S^{p-1}$ in the Euclidean distance in $\mathbb{R}^{p}$. By the proof of Lemma 3 in \cite{cai2010optimal}, we have $d_{p}:=$Card$(S^{p-1}_{1/2})\leq 5^{p}$. Let $\v_{1},...,\v_{d_{p}}$ be the centers of the $d_{p}$ elements in the net. Therefore for any $\v$ in $S^{p-1}$, we have $\|\v-\v_{j}\|_2\leq 1/2$ for some $j$. Therefore, $\|\C_{n,h}\|_2\leq \sup_{j\leq d_{p}}|\v_{j}'\C_{n,h}|+\|\C_{n,h}\|_2/2$. That is,  $\|\C_{n,h}\|_2\leq 2\sup_{j\leq d_{p}}|\v_{j}'\C_{n,h}|$.

For $\al\in\R^p$, denote
\begin{eqnarray*}
	C_{n,h,j}(\al)&=&\dfrac{1}{n}\sum\limits_{i=1}^n \v_{j}'\X_i\left\{H\left(\dfrac{Y_i-\X_i'\al}{h}\right)-I\{\epsilon_i\geq 0\}+\dfrac{\epsilon_i}{h}H'\left(\dfrac{Y_i-\X_i'\al}{h}\right)\right\}\cr
	&=:&\dfrac{1}{n}\sum\limits_{i=1}^n \v_{j}'\X_i H_{h}(\al).
\end{eqnarray*}
When $\|\hat{\be}_{0}-\be(\tau)\|_2\leq a_{n}$, we have $\|\C_{n,h}\|_2\leq 2\sup_{j\leq d_{p}}\sup_{\|\al-\be(\tau)\|_2\leq a_{n}}|C_{n,h,j}(\al)|.$
Denote $\be(\tau)=(\beta_{1},...,\beta_{p})'$. For every $i$, we divide the interval $[\beta_{i}-a_{n},\beta_{i}+a_{n}]$ into $n^{M}$ small subintervals and each has length $2a_{n}/n^{M}$, where $M$ is a large positive number. Therefore, there exists a set of points in $\R^{p}$, $\{\al_{k}, 1\leq k\leq n^{Mp}\}$, such that for any $\al$ in the ball $\|\al-\be(\tau)\|_2\leq a_{n}$, we have $\|\al-\al_{k}\|_2\leq 2\sqrt{p}a_{n}/n^{M}$ for some $1\leq k\leq n^{Mp}$.
Let $\Delta (\al)=\al-\be(\tau)$. We have
\begin{eqnarray}\label{pr2}
	&&\dfrac{\epsilon_i}{h}H'\left(\dfrac{Y_i-\X_i'\al}{h}\right) \cr
	&&\quad=\dfrac{\epsilon_i-\X'_{i}\Delta(\al)}{h}H'\Big(\dfrac{\epsilon_i-\X_i'\Delta(\al)}{h}\Big)+\dfrac{\X'_{i}\Delta(\al)}{h}H'\Big(\dfrac{\epsilon_i-\X_i'\Delta(\al)}{h}\Big).
\end{eqnarray}
Note that $xH'(x)$ and $H'(x)$ are Lipschitz continuous and $H'(x)$ is bounded. We have for $\|\al-\be(\tau)\|_2\leq a_{n}$,
\begin{eqnarray*}
	&&\Big{|}\dfrac{\epsilon_i}{h}H'\left(\dfrac{Y_i-\X_i'\al}{h}\right)-\dfrac{\epsilon_i}{h}H'\left(\dfrac{Y_i-\X_i'\al_{k}}{h}\right)\Big{|}\cr
	&&\quad\leq Ch^{-1}\|\X_{i}\|_2\|\al-\al_{k}\|_2+Ch^{-2}\|\X_{i}\|_2^{2}\|\Delta(\al)\|_2\|\al-\al_{k}\|_2\cr
	&&\quad\leq Ch^{-1}\sqrt{p}a_{n}n^{-M}\|\X_{i}\|_2+Ch^{-2}\sqrt{p}a^{2}_{n}n^{-M}\|\X_{i}\|_2^{2}.
\end{eqnarray*}
Similarly,
\begin{eqnarray*}
	\Big{|}H\left(\dfrac{Y_i-\X_i'\al}{h}\right)-H\left(\dfrac{Y_i-\X_i'\al_{k}}{h}\right)\Big{|}\leq Ch^{-1}\sqrt{p}a_{n}n^{-M}\|\X_{i}\|_2.
\end{eqnarray*}
Therefore, we have
\begin{eqnarray*}
	&&\sup_{j}\sup_{\|\al-\be(\tau)\|_2\leq a_{n}}|C_{n,h,j}(\al)|-\sup_{j}\sup_{k}|C_{n,h,j}(\al_{k})|\cr
	&&\quad\leq C\frac{\sqrt{p}a_{n}}{n^{M+1}h}\sum_{i=1}^{n}\|\X_{i}\|_2^{2}+
	C\frac{\sqrt{p}a^{2}_{n}}{n^{M+1}h^{2}}\sum_{i=1}^{n}\|\X_{i}\|_2^{3}.
\end{eqnarray*}
Since $\max_{i,j}\ep|X_{i,j}|^{3}<\infty$, by letting $M$ large enough, we have
\begin{eqnarray}\label{pr1}
	\sup_{j}\sup_{\|\al-\be\|_2\leq a_{n}}|C_{n,h,j}(\al)|-\sup_{j}\sup_{k}|C_{n,h,j}(\al_{k})|=O_{\pr}(n^{-\gamma})
\end{eqnarray}
for any $\gamma>0$.
It is enough to show that $\sup_{j}\sup_{k}|C_{n,h,j}(\al_{k})|$ satisfies the bound in the proposition.

We first prove the proposition under (C3$^{*}$).
Let $\X_{i}=(X_{i,1},..,X_{i,p})'$, $\hat{X}_{i,j}=X_{i,j}I\{|X_{i,j}|\leq n^{\kappa}\}$ and $\hat{\X}_{i}=(\hat{X}_{i,1},...,\hat{X}_{i,p})'$.
Then
\begin{eqnarray*}
	\sup_{\|\thh\|_2=1}\ep (\thh'\hat{\X}_{i})^{4}&\leq&8\sup_{\|\thh\|_2=1}\ep (\thh'\X_{i})^{4}+8\sup_{\|\thh\|_2=1}\Big(\ep\sum\limits_{j=1}^p |\theta_jX_{i,j}|I\{|X_{i,j}|\geq n^{\kappa}\}\Big)^4\cr
	&\leq& 8\sup_{\|\thh\|_2=1}\ep (\thh'\X_{i})^{4}+8p\sum_{j=1}^{p}\ep (X_{i,j})^{4}I\{|X_{i,j}|\geq n^{\kappa}\}\leq C.
\end{eqnarray*}
Define
{\small\begin{eqnarray}\label{eq:hha}
		\hat{C}_{n,h,j}(\al)&=&\dfrac{1}{n}\sum\limits_{i=1}^n \v_{j}'\hat{\X}_i\left\{H\Big(\dfrac{\epsilon_i-\hat{\X}_i'\Delta(\al)}{h}\Big)-I\{\epsilon_i\geq 0\}+\dfrac{\epsilon_i}{h}H'\Big(\dfrac{\epsilon_i-\hat{\X}_i'\Delta(\al)}{h}\Big)\right\}\cr
		&=:&\dfrac{1}{n}\sum\limits_{i=1}^n \v_{j}'\hat{\X}_i\hat{H}_{h}(\al).
	\end{eqnarray}
}%
We have
\begin{eqnarray}
	\label{eq:p_n_ratio}
	\pr\Big{(}\sup_{j}\sup_{k}|C_{n,h,j}(\al_{k})|\neq \sup_{j}\sup_{k}|\hat{C}_{n,h,j}(\al_{k})|\Big{)}&\leq& \sum_{i=1}^{n}\sum_{j=1}^{p}\pr(|X_{i,j}|\geq n^{\kappa}) \cr
	\leq \sum_{i=1}^{n}\sum_{j=1}^{p}\dfrac{\ep(|X_{i,j}|^{2/\kappa})}{n^2}&=&O(p/n)=o(1).
\end{eqnarray}
Let $\ep_{*}(\cdot)$ denote the conditional expectation given $\{\X_{i},1\leq i\leq n\}$. By Condition (C1), since the conditional density function $f(\cdot|\X_{i})$ is Lipschitz continuous, it is bounded. Also by Condition (C2) that $H(x)$ is bounded with support on $[-1,1]$, we have
\begin{eqnarray}
	&&\ep_{*}\Big{[}H\Big(\dfrac{\epsilon_i-\widehat{\X}_i'\Delta(\al)}{h}\Big)-I\{\epsilon_i\geq 0\}\Big{]}^{2}\cr
	&&=h\int_{-\infty}^{\infty}\Big{[}H(x)-I\{x\geq -\hat{\X}'_{i}\Delta(\al)/h\}\Big{]}^{2}f(hx+\hat{\X}'_{i}\Delta(\al)|\X_{i})dx\cr
	&&\leq Ch(|\hat{\X}'_{i}\Delta(\al)/h|+1)\label{eq:e_hh_sq}
\end{eqnarray}
and
\begin{eqnarray}
	&&\ep_{*}\Big{[}\dfrac{\epsilon_i}{h}H'\Big(\dfrac{\epsilon_i-\hat{\X}_i'\Delta(\al)}{h}\Big)\Big{]}^{2}\cr
	&&=h\int_{-\infty}^{\infty}\{(x+\hat{\X}'_{i}\Delta(\al)/h)H'(x)\}^{2}f(hx+\hat{\X}'_{i}\Delta(\al)|\X_{i})dx\cr
	&&\leq Ch(|\hat{\X}'_{i}\Delta(\al)/h|^{2}+1).\label{eq:e_h'_sq}
\end{eqnarray}
By \eqref{eq:e_hh_sq}, \eqref{eq:e_h'_sq} and the definition of $\widehat{H}_h(\al)$ in \eqref{eq:hha}, we have\label{page:ineq}
\begin{eqnarray*} \ep(\v_{j}'\hat{\X}_i\hat{H}_{h}(\al))^{2}&\leq&2\ep(\v_{j}'\hat{\X}_i)^2\bigg[H\Big(\dfrac{\epsilon_i-\hat{\X}_i'\Delta(\al)}{h}\Big)-I\{\epsilon_i\geq 0\}\bigg]^2\cr
	&&+2\ep(\v_{j}'\hat{\X}_i)^2\bigg[\dfrac{\epsilon_i}{h}H'\Big(\dfrac{\epsilon_i-\hat{\X}_i'\Delta(\al)}{h}\Big)\bigg]^2\cr
	&\leq& Ch\Big{(}\sup_{\|\thh\|_2=1}\ep(\thh'\hat{\X}_{i})^{2}+\sup_{\|\thh\|_2=1}\ep|\thh'\hat{\X}_{i}|^{3}\|\al-\be(\tau)\|_2/h\cr
	&&\qquad+\sup_{\|\thh\|_2=1}\ep(\thh'\hat{\X}_{i})^{4}\|\al-\be(\tau)\|_2^{2}/h^{2}\Big{)}\cr
	&\leq&Ch\Big{(}1+\|\al-\be(\tau)\|_2/h+\|\al-\be(\tau)\|_2^{2}/h^{2}\Big{)},
\end{eqnarray*}
where we used the inequalities that
\begin{eqnarray*}
	\sup_{\|\v\|=1,\|\u\|=1}\ep\{(\v^{'}\hat{\X}_{i})^{2}|\u^{'}\hat{\X}_{i}|\}\leq \sup_{\|\thh\|=1}\ep|\thh^{'}\hat{\X}_{i}|^{3}\leq C,\cr
	\sup_{\|\v\|=1,\|\u\|=1}\ep\{(\v^{'}\hat{\X}_{i})^{2}(\u^{'}\hat{\X}_{i})^{2}\}\leq \sup_{\|\thh\|=1}\ep|\thh^{'}\hat{\X}_{i}|^{4}\leq C.
\end{eqnarray*}
By (\ref{pr2}), noting that $H$ and $xH'(x)$ is bounded, we can obtain that
\begin{eqnarray*}
	|\v_{j}'\hat{\X}_i\hat{H}_{h}(\al)|&\leq& C\|\hat{\X}_{i}\|_2(1+\|\hat{\X}_{i}\|_2\|\al-\be(\tau)\|_2/h)\cr
	&\leq& C\sqrt{p}n^{\kappa}+Cpn^{2\kappa}\|\al-\be(\tau)\|_2/h.
\end{eqnarray*}
By Condition (C3$^*$),
\begin{eqnarray*}
	p\log n=o\Big{(}\frac{\sqrt{nph\log n}}{pn^{2\kappa}}\Big{)}.
\end{eqnarray*}
By Bernstein's inequality, we can get for any $\gamma>0$, there exists a constant $C$ such that
\begin{eqnarray*}
	\sup_{j}\sup_{k}\pr\Big{(}|\hat{C}_{n,h,j}(\al_{k})-\ep \hat{C}_{n,h,j}(\al_{k})|\geq C\sqrt{\frac{p h\log n}{n}}\Big{)}=O(n^{-\gamma p}).
\end{eqnarray*}
This yields that
\begin{eqnarray}\label{pr3}
	\sup_{j}\sup_{k}|C_{n,h,j}(\al_{k})-\ep C_{n,h,j}(\al_{k})|=O_{\pr}\Big{(}\sqrt{\frac{p h\log n}{n}}\Big{)}.
\end{eqnarray}
It remains to give a bound for $\ep C_{n,h,j}(\al)$. Let $F(x|\X_{i})$ be the conditional distribution of $\epsilon_i$ given $\X_{i}$. We have
\begin{eqnarray*}
	\ep_{*}H\Big(\dfrac{\epsilon_i-\hat{\X}_i'\Delta(\al)}{h}\Big)&=&h\int_{-\infty}^{\infty}H(x)f(hx+\hat{\X}_i'\Delta(\al)|\X_{i})dx\cr
	&=&-\int_{-\infty}^{\infty}H(x)d(1-F(hx+\hat{\X}_i'\Delta(\al)|\X_{i}))\cr
	&=&\int_{-1}^{1}(1-F(hx+\hat{\X}_i'\Delta(\al)|\X_{i})) H'(x)dx\cr
	&=&1-\tau-f(0|\X_{i})\int_{-1}^{1}(hx+\hat{\X}_i'\Delta(\al))H'(x)dx\cr
	& &+O(1)\int_{-1}^{1}(hx+\hat{\X}_i'\Delta(\al))^{2}|H'(x)|dx\cr
	&=&1-\tau-f(0|\X_{i})\Big{(}\hat{\X}_i'\Delta(\al)+h\int_{-1}^{1}xH^{'}(x)dx\Big{)}\cr
	& &+O(h^{2}+(\hat{\X}_i'\Delta(\al))^{2}).
\end{eqnarray*}
Similarly,
{\small\begin{eqnarray*}
		\ep_{*} \Big{[}\dfrac{\epsilon_i}{h}H'\Big(\dfrac{\epsilon_i-\hat{\X}_i'\Delta(\al)}{h}\Big)\Big{]}&=&h\int_{-1}^{1}(x+\hat{\X}_i'\Delta(\al)/h)H'(x)f(hx+\hat{\X}_i'\Delta(\al)|\X_{i})dx\cr
		&=&f(0|\X_{i})\Big{(}\hat{\X}_i'\Delta(\al)+h\int_{-1}^{1}xH^{'}(x)dx\Big{)}\cr
		& &+O(1)\int_{-1}^{1}(hx+\hat{\X}_i'\Delta(\al))^{2}|H'(x)|dx\cr
		&=&f(0|\X_{i})\Big{(}\hat{\X}_i'\Delta(\al)+h\int_{-1}^{1}xH^{'}(x)dx\Big{)}\cr
		& &+O(h^{2}+(\hat{\X}_i'\Delta(\al))^{2}).
	\end{eqnarray*}
}%
This implies that uniformly in $\al$ and $j$,
\begin{eqnarray}
	|\ep C_{n,h,j}(\al)|\leq C(h^{2}+\|\al-\be(\tau)\|_2^{2}).\label{pr4}
\end{eqnarray}
Hence $\sup_{j}\sup_{k}|\ep C_{n,h,j}(\al_k)|\leq C(h^{2}+a^{2}_{n}).$ Combining that with \eqref{pr1}, \eqref{pr3}, this completes the proof of the proposition under Condition (C3$^{*}$).

We now prove the proposition under condition (C3).  To bound $C_{n,h,j}(\al_{k})-\ep C_{n,h,j}(\al_{k})$ under condition (C3), we introduce the following exponential inequality from \cite{cailiu2011}.
\begin{customlemma}{1 \citep{cailiu2011}}\label{lem:cailiu}
	Let $\xi_1, \dots , \xi_n$ be independent random variables with mean zero. Suppose that
	there exists some $t > 0$ and $\overline{B}_n$ such that $\sum_{k=1}^n\ep \xi_k^2e^{t|\xi_k|}\leq \overline{B}_n^2$. Then for $0<x<\overline{B}_n$,
	\[
	\pr\big(\sum_{k=1}^n \xi_k>C_t\overline{B}_nx\big)\leq\exp(-x^2),
	\]
	where $C_t=t+t^{-1}$.
\end{customlemma}
Note that, by $a_{n}=O(h)$ and (\ref{pr2}),
\begin{eqnarray*}
	|\v_{j}'\X_i H_{h}(\al)|\leq C|\v'_{j}\X_{i}|(1+|(\al-\be(\tau))'\X_{i}|/\|\al-\be(\tau)\|_{2})=:\xi_{i,j}
\end{eqnarray*}
which implies that
\begin{eqnarray*}
	\ep (|\v_{j}'\X_i H_{h}(\al)|)^{2}e^{\eta_{1}|\v_{j}'\X_i H_{h}(\al)|}&\leq& \ep (|\v_{j}'\X_i H_{h}(\al)|)^{2}e^{\eta_{1}\xi_{i,j}}\cr
	&\leq&Ch\ep \Big{[}(\v_{j}'\X_i)^{2}e^{\eta_{1}\xi_{i,j}}(1+|\X'_{i}\Delta(\al)/h|+|\X'_{i}\Delta(\al)/h|^{2})\Big{]}\cr
	&\leq& Ch.
\end{eqnarray*}

Let $\overline{B}_n=C\sqrt{nh}$ and $x=\sqrt{\gamma p\log n}$. By Lemma \ref{lem:cailiu} and the fact that $\sqrt{p\log n}=o(\sqrt{nh})$,  we have
for sufficiently large $C$,
\begin{eqnarray*}
	\sup_{j}\sup_{k}\pr\Big{(}|C_{n,h,j}(\al_{k})-\ep C_{n,h,j}(\al_{k})|\geq C\sqrt{\frac{p h\log n}{n}}\Big{)}=O(n^{-\gamma p}).
\end{eqnarray*}
Combining that with \eqref{pr1} and \eqref{pr4}, we complete the proof of the proposition under condition (C3).
\qed

\subsection{Proof of Proposition \ref{prop2}}
By the proof of Lemma 3 in \cite{cai2010optimal}, we have
\begin{eqnarray*}
	\|\D_{n,h}-\D\|\leq 5\sup_{j\leq b_{p}}|\v'_{j}(\D_{n,h}-\D)\v_{j}|,
\end{eqnarray*}
where $\v_{i}$, $1\leq i\leq b_{p}$, are some non-random vectors with $\|\v_{i}\|_2=1$ and $b_{p}\leq 5^{p}$. Now let
\begin{eqnarray*}
	D_{n,h,j}(\al)=\dfrac{1}{nh}\sum\limits_{i=1}^n (\v'_{j}\X_i)^{2}H'\left(\dfrac{Y_i-\X_i'\al}{h}\right).
\end{eqnarray*}
Therefore when $\|\hat{\be}_{0}-\be(\tau)\|_{2}\leq a_{n}$,
\begin{eqnarray*}
	\sup_{j\leq b_{p}}|\v'_{j}(\D_{n,h}-\D)\v_{j}|\leq \sup_{j\leq b_{p}}\sup_{|\al-\be(\tau)|\leq a_{n}}|\D_{n,h,j}(\al)-\v'_{j}\D \v_{j}|.
\end{eqnarray*}
Note that
\begin{eqnarray*}
	\Big{|}\dfrac{1}{h}H'\left(\dfrac{Y_i-\X_i'\al}{h}\right)-\dfrac{1}{h}H'\left(\dfrac{Y_i-\X_i'\al_{k}}{h}\right)\Big{|}
	\leq Ch^{-2}|\X'(\al-\al_{k})|.
\end{eqnarray*}
Therefore
\begin{eqnarray*}
	& &\sup_{j}\sup_{\|\al-\be(\tau)\|_{2}\leq a_{n}}|\D_{n,h,j}(\al)-\v'_{j}\D \v_{j}|-\sup_{j}\sup_{k}|\D_{n,h,j}(\al_{k})-\v'_{j}\D \v_{j}|\cr
	&\leq& \frac{C\sqrt{p}a_{n}}{n^{M+1}h^{2}}\sum_{i=1}^{n}\|\X_{i}\|_2^{3}.
\end{eqnarray*}
Since $\max_{i,j}\ep|X_{i,j}|^{3}<\infty$, by letting $M$ large enough, we have for any $\gamma>0$,
\begin{eqnarray}
	&&\sup_{j}\sup_{\|\al-\be\|_2\leq a_{n}}|\D_{n,h,j}(\al)-\v'_{j}\D \v_{j}|-\sup_{j}\sup_{k}|\D_{n,h,j}(\al_{k})-\v'_{j}\D \v_{j}|\cr
	\label{eq:d_1} &&\quad=O_{\pr}(n^{-\gamma}).
\end{eqnarray}
We now prove the proposition under under (C3$^*$).
Similarly as in the proof of Proposition \ref{prop1}, define $\hat{X}_{i,j}=X_{i,j}I\{|X_{i,j}|\leq n^{\kappa}\}$ and $\hat{\X}_{i}=(\hat{X}_{i,1},...,\hat{X}_{i,p})'$, and
\begin{eqnarray*}
	\hat{D}_{n,h,j}(\al)=\dfrac{1}{nh}\sum\limits_{i=1}^n (v'_{j}\hat{\X}_i)^{2}H'\Big(\dfrac{\epsilon_i-\hat{\X}_i'\Delta(\al)}{h}\Big).
\end{eqnarray*}
Therefore,
\begin{eqnarray*}
	&&\pr\Big{(}\sup_{j}\sup_{k}|\D_{n,h,j}(\al_{k})-v'_{j}\D v_{j}|\neq \sup_{j}\sup_{k}|\hat{\D}_{n,h,j}(\al_{k})-v'_{j}\D v_{j}|\Big{)}\cr
	&&\quad\leq \sum_{i=1}^{n}\sum_{j=1}^{p}\pr(|X_{i,j}|\geq n^{\kappa})=o(1).
\end{eqnarray*}
As the proof of \eqref{eq:e_h'_sq}, we have
\begin{eqnarray*}
	\sup_{\al}\ep_{*} \Big{[}H'\Big(\dfrac{\epsilon_i-\hat{\X}_i'\Delta(\al)}{h}\Big)\Big{]}^{2}=O(h).
\end{eqnarray*}
By condition (C3$^*$),
\begin{eqnarray*}
	p\log n=o\Big{(}\frac{\sqrt{nhp\log n}}{pn^{2\kappa}}\Big{)}.
\end{eqnarray*}
By Bernstein's inequality, we can show that, for any $\gamma>0$, there exists a constant $C$ such that
\begin{eqnarray}\label{eq:d_bern}
	\qquad\sup_{j}\sup_{k}\pr\Big{(}|\hat{D}_{n,h,j}(\al_{k})-\ep (\hat{D}_{n,h,j}(\al_{k}))|\geq C\sqrt{\frac{p\log n}{nh}}\Big{)}=O(n^{-\gamma p}).
\end{eqnarray}
Moreover,
\begin{eqnarray*}
	\ep_{*} \frac{1}{h}H'\Big(\dfrac{\epsilon_i-\hat{\X}_i'\Delta(\al)}{h}\Big)&=&\int_{-\infty}^{\infty}H'(x)f(hx+\hat{\X}'_{i}\Delta(\al)|\X_{i})dx\cr
	&=&f(0|\X_{i})+O(h+|\hat{\X}'_{i}\Delta(\al)|).
\end{eqnarray*}
Therefore, we have
\begin{eqnarray}
	&&|\ep (\hat{D}_{n,h,j}(\al))-\v'_{j}\D \v_{j}|\cr
	&&\leq C\max_{i} \ep |(\v'_{j}\X_{i})^{2}-(\v'_{j}\hat{\X}_{i})^{2}|+C(h+\|\al-\be(\tau)\|_2)\cr
	&&\leq C\max_{i}(\sum_{j=1}^{p}\ep X^{2}_{i,j}I\{|X_{i,j}|\geq n^{\kappa}\})^{1/2}+C\max_{i}\sum_{j=1}^{p}\ep X^{2}_{i,j}I\{|X_{i,j}|\geq n^{\kappa}\}\cr
	& &+C(h+\|\al-\be(\tau)\|_2)\cr
	&&\leq C(h+\|\al-\be(\tau)\|_2).\label{eq:d_3}
\end{eqnarray}
Combining \eqref{eq:d_3} with \eqref{eq:d_1} and \eqref{eq:d_bern}, we can get the desired inequality under under (C3$^{*}$).

We now prove the proposition under condition (C3).
Let
\begin{eqnarray*}
	\xi_{ij}=(\v'_{j}\X_i)^{2}H'\Big(\dfrac{\epsilon_i-\X_i'\Delta(\al)}{h}\Big).
\end{eqnarray*}
We have
\begin{eqnarray*}
	\ep(\xi_{ij})^{2}e^{\eta_{1}|\xi_{ij}|}&\leq& \ep (\xi_{ij})^{2}e^{C\eta_{1}(\v'_{j}\X_i)^{2}}\cr
	&=&\ep\Big{[}e^{C\eta_{1}(\v'_{j}\X_i)^{2}}(\v'_{j}\X_i)^{2}
	\ep_{*}\Big{[}H'\Big(\dfrac{\epsilon_i-\X_i'\Delta(\al)}{h}\Big)\Big{]}^{2}\Big{]}=O(h).
\end{eqnarray*}
Let $\overline{B}_n=C\sqrt{nh}$ and $x=\sqrt{ \gamma p\log n}$. By Lemma \ref{lem:cailiu} again, we can obtain that
\begin{eqnarray}\label{eq:d_bern2}
	\quad\quad\sup_{j}\sup_{k}\pr\Big{(}|D_{n,h,j}(\al_{k})-\ep (D_{n,h,j}(\al_{k}))|\geq C\sqrt{\frac{p\log n}{nh}}\Big{)}=O(n^{-\gamma p}).
\end{eqnarray}
Combining \eqref{eq:d_1}, \eqref{eq:d_3}, and \eqref{eq:d_bern2}, we can get the desired inequality under under (C3).
\qed

\subsection{Proof of Theorems \ref{thm:beta} and \ref{thm:clt}}

For independent random vectors $\{\X_{i},1\leq i\leq n\}$ with $\sup_{i,j}\ep |X_{ij}|^{3}=O(1)$, let
\begin{eqnarray*}
	\SS_{n}=\sum_{i=1}^{n}\X_{i}(I\{\epsilon_i\geq 0\}+\tau-1).
\end{eqnarray*}
To prove Theorems \ref{thm:beta}-\ref{th4.3}, we need the following lemma.
\begin{lemma}\label{lemma2.3} We have
	\begin{eqnarray}\label{le1}
		\left\|\frac{1}{n}\SS_{n}\right\|_2=O_{\pr}\Big{(}\sqrt{\frac{p}{n}}\Big{)}
	\end{eqnarray}
	and
	\begin{eqnarray}\label{le2}
		\ep\sup_{n\geq 1}\frac{1}{np\log\log n}\|\SS_{n}\|_2^{2}=O(1),
	\end{eqnarray}
	where $O(1)$ is uniformly in $p$.
\end{lemma}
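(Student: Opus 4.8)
The plan is to establish the two displays separately. Estimate \eqref{le1} follows from a one-line second-moment computation, while \eqref{le2} is a coordinatewise iterated-logarithm-type bound; the latter is the substantive part.

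For \eqref{le1}, I would write $\zeta_i:=I\{\epsilon_i\ge 0\}+\tau-1$, so that $\SS_n=\sum_{i=1}^n\X_i\zeta_i$. Conditionally on $\X_i$ the variable $\zeta_i$ is a centred Bernoulli variable with $\ep[\zeta_i\mid\X_i]=0$ and $\ep[\zeta_i^2\mid\X_i]=\tau(1-\tau)$, using $\pr(\epsilon_i\le 0\mid\X_i)=\tau$ and the absence of an atom of $\epsilon_i$ at $0$ (guaranteed by the Lipschitz conditional density in (C1)). Hence $\ep[\X_i\zeta_i]=\0$, all cross terms in $\ep\|\SS_n\|_2^2$ vanish by independence, and
\[
\ep\|\SS_n\|_2^2=\tau(1-\tau)\sum_{i=1}^n\ep\|\X_i\|_2^2=O(np),
\]
since $\X_i\in\R^{p+1}$ and $\ep X_{ij}^2\le(\ep|X_{ij}|^3)^{2/3}=O(1)$ uniformly in $i,j$. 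Markov's inequality then gives $\|\SS_n\|_2=O_{\pr}(\sqrt{np})$, equivalently $\|n^{-1}\SS_n\|_2=O_{\pr}(\sqrt{p/n})$, which is \eqref{le1}.

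For \eqref{le2}, the plan is to reduce to a single coordinate. Writing $\|\SS_n\|_2^2=\sum_{j=0}^p S_{n,j}^2$ with $S_{n,j}=\sum_{i=1}^n X_{ij}\zeta_i$ ($X_{i0}\equiv1$), for each $j$ the summands are independent, mean zero, with variance bounded by $\bar V<\infty$ and third absolute moment bounded by $\bar m_3<\infty$, both uniformly in $i,j$. Since $\sup_{n\ge1}\|\SS_n\|_2^2/(np\log\log n)\le p^{-1}\sum_{j=0}^p\sup_{n\ge1}S_{n,j}^2/(n\log\log n)$, it suffices to prove $\ep\sup_{n\ge1}S_{n,j}^2/(n\log\log n)\le C(\bar V,\bar m_3)$ uniformly in $j$ (reading $\log\log n$ as $\log\log\max(n,20)$, which alters only finitely many terms), and then average over the $p+1$ coordinates. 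To get this coordinatewise bound I would split $\{n\ge1\}$ into dyadic blocks $2^k\le n<2^{k+1}$: monotonicity of $\ell_n:=\log\log\max(n,20)$ together with $\ell_{2^k}\ge c_1\log(k+2)$ (absolute $c_1>0$) gives $\sup_n S_{n,j}^2/(n\ell_n)\le\sup_{k\ge0}M_{k,j}^2/(2^k\ell_{2^k})$ with $M_{k,j}:=\max_{m<2^{k+1}}|S_{m,j}|$; a maximal Fuk--Nagaev inequality then bounds $\pr(M_{k,j}>x)$ by $2\exp(-cx^2/(2^k\bar V))+C\,2^k\bar m_3 x^{-3}$. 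Evaluating at $x^2=t\,2^kc_1\log(k+2)$ and summing over $k$ yields a tail $\pr(\sup_{k}M_{k,j}^2/(2^k\ell_{2^k})>t)\le C_1 e^{-c_2 t}+C_3 t^{-3/2}$ for $t$ large, with constants depending only on $\bar V,\bar m_3$; integrating it over $t$ gives the claimed uniform bound. (Alternatively the coordinatewise bound can be quoted from Theorem 1 of \cite{siegmund1969}; under the stronger Condition (C3) the sub-Gaussian maximal inequality, equivalently Lemma \ref{lem:cailiu}, replaces Fuk--Nagaev.)

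The main obstacle is the iterated-logarithm normalization in \eqref{le2}: one must not bound $\ep\sup_k(\cdot)$ by $\sum_k\ep(\cdot)$, since a crude Doob/$L^2$ control of the $k$-th block contributes a term of order $1/\ell_{2^k}\asymp1/\log k$, and $\sum_k1/\log k=\infty$. The remedy is to integrate the union-bound tail instead, using that the exponential part of the maximal inequality beats the slowly growing normalization $\ell_{2^k}\asymp\log k$, while the residual polynomial part is integrable in $t$ precisely because the hypothesis supplies a \emph{third} moment ($\int^{\infty}t^{-3/2}\,dt<\infty$); this is where $\sup_{i,j}\ep|X_{ij}|^3=O(1)$ is genuinely used. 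Two further points need care: passing from the terminal sum $S_{N,j}$ to the running maximum $M_{k,j}$ in Fuk--Nagaev (a standard L\'evy--Ottaviani/symmetrization step costing only constants), and verifying that every constant produced is uniform in the coordinate index $j$ — which is exactly what makes the final bound in \eqref{le2} uniform in $p$.
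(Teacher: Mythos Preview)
Your proof of \eqref{le1} is exactly the paper's: compute $\ep\|\SS_n\|_2^2=O(np)$ and apply Markov. For \eqref{le2} both you and the paper reduce to the coordinatewise bound $\sup_j\ep\sup_{n\ge1}S_{n,j}^2/(n\log\log n)=O(1)$, but the routes to that bound differ. The paper does not run a dyadic/Fuk--Nagaev argument; instead it verifies the two truncation series hypotheses of Theorem~1 in \cite{siegmund1969},
\[
\sum_{i\ge1}\frac{1}{i\log\log i}\,\ep X_{ij}^2 I\{|X_{ij}|\ge\sqrt{i\log\log i}\}<\infty,\qquad
\sum_{i\ge1}\frac{1}{\sqrt{i\log\log i}}\,\ep |X_{ij}| I\{|X_{ij}|\ge\sqrt{i\log\log i}\}<\infty,
\]
each dominated by $C\sum_i(i\log\log i)^{-3/2}$ via the uniform third-moment bound, and then cites Siegmund's proof. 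Your Fuk--Nagaev route is a genuinely different, more self-contained derivation: the dyadic decomposition plus the maximal form of Fuk--Nagaev gives the tail $C_1e^{-c_2t}+C_3t^{-3/2}$, which integrates. Your parenthetical ``alternatively quote Siegmund (1969)'' is in fact precisely the paper's choice. The trade-off: the paper's route is shorter once one accepts Siegmund's theorem as a black box, while your argument makes transparent \emph{why} a third moment is the right threshold (the $t^{-3/2}$ tail is exactly integrable) and avoids tracking Siegmund's internal constants to confirm uniformity in $j$.
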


\begin{proof}[Proof of Lemma \ref{lemma2.3}]  We have $\ep \|\SS_{n}\|^{2}=O(np)$. This proves (\ref{le1}). To prove (\ref{le2}), we define
	\begin{eqnarray*}
		S_{nj}=\sum_{i=1}^{n}X_{ij}(I\{\epsilon_i\geq 0\}+\tau-1).
	\end{eqnarray*}
	It is enough to show that
	\begin{eqnarray*}
		\sup_{j}\ep\sup_{n\geq 1}\frac{1}{n\log\log n}|S_{nj}|^{2}=O(1).
	\end{eqnarray*}
	Now,  since $\sup_{i,j}\ep |X_{ij}|^{3}=O(1)$, we have
	\begin{eqnarray*}
		\sum_{i=1}^{\infty}\frac{1}{i\log\log i}\ep X^{2}_{ij}I\{|X_{ij}|\geq \sqrt{i\log\log i}\}\leq C\sum_{i=1}^{\infty} \frac{1}{(i\log\log i)^{3/2}}<\infty.
	\end{eqnarray*}
	Similarly,
	\begin{eqnarray*}
		\sum_{i=1}^{\infty}\frac{1}{\sqrt{i\log\log i}}\ep |X_{ij}|I\{|X_{ij}|\geq \sqrt{i\log\log i}\}\leq C\sum_{i=1}^{\infty} \frac{1}{(i\log\log i)^{3/2}}<\infty.
	\end{eqnarray*}
	The rest proof follows from that of Theorem 1 in \cite{siegmund1969}.
\end{proof}

Now let's prove Theorem \ref{thm:beta} and \ref{thm:clt}. First, by (\ref{le1}) and Propositions \ref{prop1}-\ref{prop2}, it is easy to show that  (\ref{eq:rn00}) holds.

For $q=1$,  note that we assume that $\|\widehat{\be}_{0}-\be(\tau)\|_{2}=O_{\pr}(\sqrt{p/m})$. Then let $a_{n}=\sqrt{p/m}$ and it is easy to see Theorem \ref{thm:beta} holds. Now suppose the theorem holds for $q=g-1$ with some $g\geq 2$.  Noting that $p=O(m/(\log n)^{2})$, we have
$\sqrt{ph_{g-1}(\log n)/n}=O(\sqrt{p/n})$. Then for $q=g$ with initial estimator $\widehat{\be}_{0}=\widehat{\be}^{(g-1)}$, we have $a_{n}=\max(\sqrt{p/n},(p/m)^{2^{ g-2}})=h_{g}$.
By (\ref{eq:rn00}), we have proved Theorem \ref{thm:beta}. Theorem \ref{thm:clt} follows directly from Theorem \ref{thm:beta} and the Lindeberg-Feller central limit theorem. \qed

{
	\subsection{Proof of Propsition \ref{thm:beta'}}
	The proof is identically same as proof of Theorem \ref{thm:beta} with replacing the $\D$ in Proposition \ref{prop2} by $\D=\frac1n\sum_{i=1}^n\ep\X_i\X_i'f_i(0|\X_i)$.
	
	\subsection{Proof of Theorem \ref{th4.3}}
	\begin{proof}
		Define $\lfloor m^{a_{-1}}\rfloor=-m$. Assume that $\|\hat{\be}(\lfloor m^{a_{l-1}}\rfloor)-\be(\tau)\|_{2}=O_{\pr}(\sqrt{p/(m+m^{a_{l-1}})})$.  By Lemma \ref{lem:cailiu} and $p=o(m/(\log m)^{2})$, we have
		uniformly in $k,j$ and $\|\al-\be(\tau)\|_{2}\leq C\sqrt{p/(m+m^{a_{l-1}})}$,
		{\small\begin{eqnarray*}
				\pr\Big{(}\Big{|}\sum_{i=k}^{k+j}v'[\X_i H_{h_{l}}(\al)-\ep \X_i H_{h_{l}}(\al)]\Big{|}\geq C\sqrt{ph_{l}\max(j,m+m^{a_{l-1}})\log m}\Big{)}=O(m^{-\gamma p})
			\end{eqnarray*}
		}%
		for any fixed $l\geq 1$ and uniformly in $\lfloor m^{a_{l-1}}\rfloor+1\leq j\leq \lfloor m^{a_{l}}\rfloor$. Therefore,
		{\small\begin{eqnarray*}
				\Big{\|}\frac{1}{j-\lfloor m^{a_{l-1}}\rfloor }(\U(j)-\V(j)\be(\tau))-\frac{1}{j-\lfloor m^{a_{l-1}}\rfloor}\sum_{i=\lfloor m^{a_{l-1}} \rfloor+1}^{j}\X_{i}\Big{(}I\{\epsilon_{i}\geq 0\}+\tau-1\Big{)}\Big{\|}_{2}\cr
				=O_{\pr}(1)\Big{(}\frac{\sqrt{ph_{l}\max(j,m^{a_{l-1}})\log m}}{j-\lfloor m^{a_{l-1}}\rfloor}+\frac{p}{(m+m^{a_{l-1}})}\Big{)},
			\end{eqnarray*}
		}%
		Therefore, uniformly in $\lfloor m^{a_{l-1}}\rfloor +1\leq j\leq \lfloor m^{a_{l}}\rfloor$,
		\begin{eqnarray*}
			&&\Big{\|}\frac{1}{j-\lfloor m^{a_{l-2}}\rfloor}(\U(\lfloor m^{a_{l-1}}\rfloor)+\U(j)-[\V(\lfloor m^{a_{l-1}}\rfloor)+\V(j)]\be(\tau))\cr
			&&\quad-\frac{1}{j-\lfloor m^{a_{l-2}}\rfloor}\sum_{i=\lfloor m^{a_{l-2}}\rfloor+1}^{j}\X_{i}\Big{(}I\{\epsilon_{i}\geq 0\}+\tau-1\Big{)}\Big{\|}_{2}\cr
			&&\quad=O_{\pr}(1)\Big{(}\sqrt{\frac{ph_{l-1}\log m}{j-\lfloor m^{a_{l-2}}\rfloor}}+\frac{p(\lfloor m^{a_{l-1}} \rfloor-\lfloor m^{a_{l-2}}\rfloor)}{(m+|\lfloor m^{a_{l-2}}\rfloor|)(j-\lfloor m^{a_{l-2}}\rfloor)}+\frac{p}{(m+m^{a_{l-1}})}\Big{)}.
		\end{eqnarray*}
		Similarly, it is easy to show that, uniformly in $\lfloor m^{a_{l-1}} \rfloor+1\leq j\leq \lfloor m^{a_{l}}\rfloor$,
		{\small\begin{eqnarray*}
				&&\Big{\|}\frac{1}{j-\lfloor m^{a_{l-2}}\rfloor}(\V(\lfloor m^{a_{l-1}} \rfloor)+\V(j))-\D\Big{\|}\cr
				&&\quad=O_{\pr}(1)\Big{(}\sqrt{\frac{p\log m}{(j-\lfloor m^{a_{l-2}}\rfloor)h_z{l}}}+\frac{\sqrt{p}(\lfloor m^{a_{l-1}} \rfloor-\lfloor m^{a_{l-2}}\rfloor)}{\sqrt{m+|\lfloor m^{a_{l-2}}\rfloor|}(j-\lfloor m^{a_{l-2}}\rfloor)}+\sqrt{\frac{p}{m+\lfloor m^{a_{l-1}} \rfloor}}\Big{)}.
			\end{eqnarray*}
		}%
		By the second argument in Lemma \ref{lemma2.3}, uniformly in $\lfloor m^{a_{l-1}}\rfloor+1\leq j\leq \lfloor m^{a_{l}}\rfloor$,
		\begin{eqnarray*}
			\frac{\D^{-1}}{j-\lfloor m^{a_{l-2}}\rfloor}\Big{\|}\sum_{i=\lfloor m^{a_{l-2}}\rfloor+1}^{j}\X_{i}\Big{(}I\{\epsilon_{i}\geq 0\}+\tau-1\Big{)}\Big{\|}_{2}=O_{\pr}\Big{(}\sqrt{\frac{p\log m}{j-\lfloor m^{a_{l-2}}\rfloor}}\Big{)}.
		\end{eqnarray*}
		This implies that, uniformly in $\lfloor m^{a_{l-1}}\rfloor+1\leq j\leq \lfloor m^{a_{l}}\rfloor$,
		\begin{eqnarray*}
			&&\Big{\|}\hat{\be}(j)-\be(\tau)-\frac{\D^{-1}}{j-\lfloor m^{a_{l-2}}\rfloor}\sum_{i=\lfloor m^{a_{l-2}}\rfloor+1}^{j}\X_{i}\Big{(}I\{\epsilon_{i}\geq 0\}+\tau-1\Big{)}\Big{\|}_{2}\cr
			&&=O_{\pr}(1)\Big{(}\sqrt{\frac{ph_{l-1} \log m  }{j-\lfloor m^{a_{l-2}}\rfloor}}+\frac{p(\lfloor m^{a_{l-1}}\rfloor-\lfloor m^{a_{l-2}}\rfloor) }{(m+|\lfloor m^{a_{l-2}}\rfloor|)(j-\lfloor m^{a_{l-2}}\rfloor)}+\frac{p \log m }{(m+m^{a_{l-1}})}\Big{)}.
		\end{eqnarray*}
		By the definition of $\{a_{l}\}$, it is easy to show that
		\begin{eqnarray*}
			\frac{\lfloor m^{a_{l-1}}\rfloor-\lfloor m^{a_{l-2}}\rfloor}{(m+|\lfloor m^{a_{l-2}}\rfloor|)(j-\lfloor m^{a_{l-2}}\rfloor)^{1/2}}+\frac{\sqrt{j+m}\log m}{(m+m^{a_{l-1}})}=O(m^{-1/4})
		\end{eqnarray*}
		for $l\geq 1$.
		This completes the proof.
	\end{proof}

	\section{Conjugate gradient method for solving linear systems}
	\label{sec:linear_sys}
	In our simulation study, we use the conjugate gradient method  to efficiently solve the linear system in the construction of LEQR $\widehat{\be}$ in \eqref{eq:beta_0}. To simplify our notation, we rewrite \eqref{eq:beta_0} as $\widehat\be=\V^{-1}\U$, where
	\begin{eqnarray*}
		\V&=&\frac1n\sum\limits_{i=1}^n \X_i\X_i' \dfrac1h  H'\Big(\dfrac{Y_i-\X'_i\widehat{\be}_0}{h}\Big),\cr
		\U&=& \frac1n\sum\limits_{i=1}^n \X_i\Big\{H\Big(\dfrac{Y_i-\X'_i\widehat{\be}_0}{h}\Big)+\tau-1+\dfrac{Y_i}{h}H'\Big(\dfrac{Y_i-\X'_i\widehat{\be}_0}{h}\Big)\Big\}.
	\end{eqnarray*}
	Despite the form of $\widehat\be=\V^{-1}\U$, there is no need to explicitly compute the matrix inversion. Instead, it essentially solves a linear equation system $\V\hat\be=\U$.  
	The \textit{Conjugate gradient method} \citep{hestenes1952methods} is one of the most popular iterative methods for solving a  linear system. 
	For the purpose of completeness, we present the detailed algorithm in Algorithm \ref{algo:cg}. Please refer to Chapter 5 in \cite{Nocedal:06} for more explanations of the algorithm. The time complexity of the conjugate gradient method is $O(s_{\V}\sqrt{\kappa})$ where $s_{\V}$ is the number of non-zero entries of $\V$ and $\kappa$ is its condition number. Since $\V$ is a $p\times p$ matrix so the complexity is at most $O(p^2\sqrt{\kappa})$. In our simulations for the large $p$ case (see Section \ref{sec:large_p} below), we use the R-package \cite{pcg} that implements the conjugate gradient to compute $\hat\be=\V^{-1}\U$ in LEQR. For reference, we test the computation time for solving a linear system using the conjugate gradient method on a standard desktop. A typical run only takes less than $0.05$ seconds when $p=1000$, less than $1$ seconds when $p=5000$, and less than $5$ seconds when $p=10,000$, respectively.
	
	\begin{algorithm}[!t]
		\caption{{\small Conjugate Gradient Method for solving  $\V\be=\U$}}
		\label{algo:cg}
		\hspace{-4.8cm} \textbf{Input}: The matrix $\V\in\R^{p\times p}$ and the vector $\U\in\R^{p}$.
		
		\begin{algorithmic}[1]
			\STATE Initialize $\be_0$. 
			\STATE Set initial residual $\r_0=\U-\V\be_0$.
			\STATE Set initial direction $\boldsymbol{d}_0=\r_0$.
			\STATE Set $k=0$.
			\REPEAT 
			\STATE Update $k=k+1$.
			\STATE Calculate $\alpha_k=\dfrac{\r_{k-1}'\r_{k-1}}{\boldsymbol{d}_{k-1}'\V\boldsymbol{d}_{k-1}}$.
			\STATE Update solution $\be_k=\be_{k-1}+\alpha_k\boldsymbol{d}_{k-1}$.
			\STATE Update residual $\r_k=\r_{k-1}-\alpha_k\V\boldsymbol{d}_{k-1}$.
			\STATE Calculate $\gamma_k=\dfrac{\r_{k}'\r_{k}}{\r_{k-1}'\r_{k-1}}$.
			\STATE Update direction $\boldsymbol{d}_k=\r_k+\gamma_k\boldsymbol{d}_{k-1}$.
			\UNTIL{$\|r_k\|_2$ is  sufficiently small.}
		\end{algorithmic}
		
		\hspace{-8cm} \textbf{Output:}  The final solution $\be_{k}$. 
	\end{algorithm}

	\section{A more general heterogeneous setting}\label{sec:het}
	
	Our algorithm can also be applied to the case that $\{\X_i,\epsilon_i\}$'s are independent, but not identically distributed from  model \eqref{eq:linear_model}.  For example, in sensor networks, sensors are placed at the different locations. Therefore, the covariates $\X_i$'s (e.g., climate factors) can be differently distributed across different sensors.  Note that the underlying true coefficient vector $\be(\tau)$ must be identical across different machines since otherwise the distributed inference problem  is ill-defined.
	In order to obtain the Bahadur representation in Theorem \ref{thm:beta}, we need to revise the conditions (C1), (C2), (C3) and (C3$^*$) as the following,
	\begin{enumerate}
		\item[(HC1)] For each data $1\leq i \leq n$, the conditional density function $f_i(\cdot|\X_i)$ is Lipschitz continuous with the Lipschitz constant $L_i$ bounded uniformly by some positive constant $L$ for all $i$. Let $\D_n=\frac1n\sum_{i=1}^n\ep[\X_i\X_i'f_i(0|\X_i)]$. We assume that $0<c_{1}\leq \lambda_{\min}(\D_n)\leq\lambda_{\max}(\D_n)\leq c_{2}<\infty$ for some constants $c_{1},c_{2}$  uniformly in $n$.
		
		\item[(HC2)] The same as (C2).
		\item[(HC3)] Assume that $p=o\left(nh/(\log n)\right)$ and $\sup_i\sup_{\|\thh\|_{2}=1}\ep e^{\eta(\thh'\X_i)^{2}}<\infty$ for some $\eta>0$.
		
		\item[(HC3$^{*}$)]  Assume for some $\kappa>0$, $p=o\left((n^{1-4\kappa}h/\log n)^{1/3}\right)$. Suppose that $\sup_{i,j}\ep |X_{i,j}|^{a}<\infty$
		for some $a\geq 2/\kappa$ and $\sup_i\sup_{\|\thh\|_2=1}\ep (\thh'\X_i)^{4}<\infty$.
	\end{enumerate}
	The conditions (HC1), (HC3) and (HC3$^{*}$) are similar to those in Section \ref{sec:DC_leqr} except that the conditions on covariates $\X$ are changed to uniform bounds. Under these conditions, we present the Bahadur representation of $\hat{\be}^{(q)}$.

	
	\begin{proposition}\label{thm:beta'}
		Assume the initial estimator $\hat{\be}_{0}$  in (\ref{ag0})  satisfies $\|\hat{\be}_{0}-\be(\tau)\|_2=O_{\pr}(\sqrt{p/m})$. Let
		$h_{g}=\max(\sqrt{p/n},(p/m)^{2^{g-2}})$ for $1\leq g\leq q$. Assuming that conditions (HC1), (HC2), and (HC3) (or (HC3$^*$)) hold  with $h=h_q$ and $p=O(m/(\log n)^2)$, we have that,
		\begin{eqnarray*}
			\hat{\be}^{(q)}-\be(\tau)=\frac{\D^{-1}}{n}\sum_{i=1}^{n}\X_{i}(I\{\epsilon_{i}\geq 0\}+\tau-1)+\r_{n}
		\end{eqnarray*}
		with
		\begin{eqnarray*}
			\|\r_{n}\|_2=O_{\pr}\Big{(}\sqrt{\frac{ph_{q}\log n}{n}}+\Big{(}\frac{p}{m}\Big{)}^{2^{q-1}}\Big{)}.
		\end{eqnarray*}
	\end{proposition}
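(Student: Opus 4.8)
The plan is to follow the proof of Theorem~\ref{thm:beta} essentially verbatim, the only structural change being that the matrix $\D$ in Propositions~\ref{prop1}--\ref{prop2} is replaced throughout by $\D_n=\frac1n\sum_{i=1}^{n}\ep[\X_i\X_i' f_i(0|\X_i)]$. Concretely, I would first re-derive the heterogeneous analogues of those two propositions, then iterate the resulting one-step expansion $q$ times along the schedule $h_g=\max(\sqrt{p/n},(p/m)^{2^{g-2}})$.

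\emph{Heterogeneous Propositions~\ref{prop1}--\ref{prop2}.} Inspecting those proofs, the i.i.d.\ hypothesis enters only through (i) moments of the design ($\ep(\thh'\X)^{4}$, $\ep|\thh'\X|^{3}$, $\ep e^{\eta(\thh'\X)^{2}}$, and the truncation bounds $\ep|X_{i,j}|^{2/\kappa}$, $\ep X_{i,j}^{2}I\{|X_{i,j}|\ge n^{\kappa}\}$), and (ii) the conditional-moment expansions of $H((\epsilon_i-\hat{\X}_i'\Delta(\al))/h)-I\{\epsilon_i\ge 0\}$ and $(\epsilon_i/h)H'((\epsilon_i-\hat{\X}_i'\Delta(\al))/h)$, which involve $f(0|\X)$. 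Under (HC1) together with (HC3) or (HC3$^{*}$), every quantity in (i) is replaced by its supremum over $i$, which is finite by assumption; since the $\X_i$ remain \emph{independent}, the $1/2$-net/discretization step, Bernstein's inequality and Lemma~\ref{lem:cailiu} apply unchanged, so $\sup_j\sup_k|C_{n,h,j}(\al_k)-\ep C_{n,h,j}(\al_k)|=O_\pr(\sqrt{ph\log n/n})$ and the analogous bound for $D_{n,h,j}$ survive. The only genuinely different point is the expectation: carrying $f_i(0|\X_i)$ through the expansions gives $\|\ep\C_{n,h}\|_2=O(h^{2}+\|\al-\be(\tau)\|_2^{2})$ uniformly and $\ep\D_{n,h}=\frac1n\sum_{i=1}^{n}\ep[\X_i\X_i' f_i(0|\X_i)]+O(h+\|\al-\be(\tau)\|_2)=\D_n+O(h+a_n)$. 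Since (HC1) keeps $0<c_1\le\lambda_{\min}(\D_n)\le\lambda_{\max}(\D_n)\le c_2$ uniformly in $n$, $\D_{n,h}^{-1}$ exists with probability tending to one and $\|\D_{n,h}^{-1}\|=O_\pr(1)$; then \eqref{eq:key} yields, exactly as in \eqref{eq:rn00}--\eqref{eq:rn0} and for any initial $\hat{\be}_0$ with $\|\hat{\be}_0-\be(\tau)\|_2=O_\pr(a_n)$, $a_n=O(h)$,
\[
\hat{\be}-\be(\tau)=\frac{\D_n^{-1}}{n}\sum_{i=1}^{n}\X_i\bigl(I\{\epsilon_i\ge 0\}+\tau-1\bigr)+\r_n,\qquad \|\r_n\|_2=O_\pr\Bigl(\sqrt{\frac{p^{2}\log n}{n^{2}h}}+\sqrt{\frac{ph\log n}{n}}+a_n^{2}+h^{2}\Bigr).
\]

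\emph{Recursion.} Starting from $\|\hat{\be}_0-\be(\tau)\|_2=O_\pr(\sqrt{p/m})=O_\pr(h_1)$ and using Lemma~\ref{lemma2.3} for $\|\frac1n\SS_n\|_2=O_\pr(\sqrt{p/n})$, an induction on $g$ shows $\|\hat{\be}^{(g)}-\be(\tau)\|_2=O_\pr\bigl(\sqrt{p/n}+(p/m)^{2^{g-1}}\bigr)=O_\pr(h_{g+1})$, because one aggregation squares the bias term while $p=O(m/(\log n)^{2})$ forces $\sqrt{ph_g\log n/n}=O(\sqrt{p/n})$ (the $h_g$ are non-increasing with $h_g\le h_1=\sqrt{p/m}=O(1/\log n)$). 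Feeding $\hat{\be}_0=\hat{\be}^{(q-1)}$ with $a_n=h_q$ into the displayed one-step expansion, and writing $\D$ for $\D_n$ as in the statement, produces the claimed representation with $\|\r_n\|_2=O_\pr\bigl(\sqrt{ph_q\log n/n}+(p/m)^{2^{q-1}}\bigr)$. I expect the main obstacle to be purely bookkeeping rather than conceptual: one must check line by line in the proofs of Propositions~\ref{prop1}--\ref{prop2} that each appeal to an identically-distributed moment can be upgraded to a supremum over $i$ with constants still uniform in $n$, and that the bias expansion aggregates to precisely $\D_n$ (so the leading term is $\D_n^{-1}\frac1n\SS_n$ rather than $\D^{-1}\frac1n\SS_n$); no new probabilistic ingredient is needed, which is exactly why the argument is ``identical'' to that of Theorem~\ref{thm:beta}.
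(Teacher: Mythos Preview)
Your proposal is correct and follows essentially the same approach as the paper, which states in one line that the proof is identical to that of Theorem~\ref{thm:beta} with $\D$ in Proposition~\ref{prop2} replaced by $\D_n=\frac1n\sum_{i=1}^{n}\ep[\X_i\X_i' f_i(0|\X_i)]$. Your write-up actually supplies more detail than the paper's own proof, correctly identifying where the i.i.d.\ assumption enters (only via moment bounds on the design and the conditional density, all of which are now controlled uniformly in $i$ by (HC1), (HC3)/(HC3$^{*}$)) and why independence alone suffices for the concentration and recursion steps.
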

	The proof of Proposition \ref{thm:beta'} is quite similar to the proof of Theorem \ref{thm:beta} (since the proofs of our Propositions \ref{prop1}-\ref{prop2} do not use identically distributed property), and we mainly need to replace $\D$ with $\frac1n\sum_{i=1}^n\ep[\X_i\X_i'f_i(0|\X_i)]$. Furthermore, a similar central limit theorem holds as in Theorem \ref{thm:clt}.
	

	
	\section{Limitation of  the na\"ive-DC approach}
	\label{sec:naiveDC}
	
	In this section, we demonstrate the limitation of the na\"ive-DC approach. In particular, on a special case of quantile regression problem with $p=0$ (i.e., inference for quantile), we show that the na\"ive-DC estimator fails when $n \geq cm^2$.

	In particular,  let $Y_1, \ldots, Y_n$ be $n$ \emph{i.i.d.} samples from a population $Y$, which has the density function $f(y)$. Further, let $\beta(\tau)$ be the $\tau$-th quantile of $Y$, i.e., $\pr(Y\leq \beta(\tau))=\tau$. Denote by $\widehat{\beta}_k$ the $\tau$-th  sample quantile computed on $\mathcal{D}_k$. There are different ways in the literature to define the $\tau$-th  sample quantiles (see \cite{hyndman1996sample}), while most of them have the following form
	\[
	\widehat{\beta}_k= (1 - \gamma) Y_{\mathcal{H}_k,(j)} + \gamma\cdot Y_{\mathcal{H}_k,(j+1)},
	\]
	where index $j$ and $\gamma$ depend on $\tau$. Here $Y_{\mathcal{H}_k,(j)}, j=1,2,\dots,m$ are the order statistics of $\mathcal{D}_k=\{Y_i\}_{i\in \mathcal{H}_k}$. We use a popular choice of $j$ and $\gamma$ that $j=\lfloor\tau(m+1)\rfloor$ and $\gamma=\tau(m+1)-j$.
	
	Denote by $\widehat{\beta}=\frac{1}{N}\sum_{k=1}^{N}\widehat{\beta}_{k}$ the na\"ive-DC estimator of $\tau$-th quantile.  We have the following theorem for the asymptotic distribution of $\widehat{\beta}$.
	
	\begin{theorem} \label{thm:sample_quantile}
		Suppose the density function $f(\cdot)$ is positive and has bounded third derivatives. We have, for $n\leq m^{A}$ with some constant $A>0$,
		\begin{eqnarray*}
			\widehat{\beta}-\beta(\tau)&=&\frac{1}{nf(\beta(\tau))}\sum_{i=1}^{n}\Big{(}\tau-I\{Y_{i}\leq \beta(\tau)\}\Big{)}-\frac{\tau(1-\tau)f'(\beta(\tau))}{2mf^{3}(\beta(\tau))} \cr &&+O_{\pr}\Big{(}\frac{\log m}{\sqrt{n}m^{1/4}}+\dfrac{(\log m)^{2}}{m^{5/4}}\Big{)}.
		\end{eqnarray*}
		In particular:
		
		\begin{enumerate}
			\item[(1)] If $n/m^2\rightarrow 0$, then
			\begin{eqnarray*}
				\sqrt{n}(\widehat{\beta}-\beta(\tau))\Rightarrow N\Big(0,\dfrac{\tau(1-\tau)}{[f(\beta(\tau))]^2}\Big),\quad\text{as}\quad n\rightarrow \infty.
			\end{eqnarray*}
			\item[(2)] If $n/m^2\rightarrow c$ for some $c>0$, then
			\begin{eqnarray*}
				\sqrt{n}(\widehat{\beta}-\beta(\tau))-\sqrt{c}b\Rightarrow N\Big(0,\dfrac{\tau(1-\tau)}{[f(\beta(\tau))]^2}\Big),\quad\text{as}\quad n\rightarrow \infty,
			\end{eqnarray*}
			where $b=-\frac{\tau(1-\tau)f'(\beta(\tau))}{2f^{3}(\beta(\tau))}$.
			\item[(3)] If $n/m^{2}\rightarrow \infty$, $n\leq m^{A}$ for some $A>0$, and $b\neq 0$,  then we have
			\begin{eqnarray*}
				\frac{m(\widehat{\beta}-\beta(\tau))}{b}\rightarrow 1
			\end{eqnarray*}
			in probability.
		\end{enumerate}
	\end{theorem}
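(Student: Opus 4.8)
The plan is to obtain a high-order Bahadur-type expansion for the local sample quantile $\widehat{\beta}_k$ on each batch $\mathcal{D}_k$ of size $m$, carefully tracking the deterministic $O(1/m)$ bias term, and then average over the $N=n/m$ batches. The starting point is the classical higher-order Bahadur representation for sample quantiles: for a batch of $m$ i.i.d.\ observations, writing $\widehat{F}_k$ for the empirical CDF on $\mathcal{D}_k$, one has
\begin{eqnarray*}
\widehat{\beta}_k-\beta(\tau)=\frac{\tau-\widehat{F}_k(\beta(\tau))}{f(\beta(\tau))}+R_k,
\end{eqnarray*}
and the main work is to split $R_k$ into its expectation (a deterministic bias of exact order $1/m$, with leading constant $-\tau(1-\tau)f'(\beta(\tau))/(2f^3(\beta(\tau)))$, which is standard from Edgeworth-type expansions for order statistics, cf.\ the references on sample-quantile definitions already cited) and a stochastic remainder. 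The interpolation form $\widehat{\beta}_k=(1-\gamma)Y_{\mathcal{H}_k,(j)}+\gamma Y_{\mathcal{H}_k,(j+1)}$ with $j=\lfloor\tau(m+1)\rfloor$ contributes only an $O(1/m)$ correction to the location of the order statistic, which I would absorb into the bias bookkeeping.

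Next I would control the stochastic part. Averaging the linear terms gives $\frac{1}{N}\sum_k \frac{\tau-\widehat{F}_k(\beta(\tau))}{f(\beta(\tau))}=\frac{1}{nf(\beta(\tau))}\sum_{i=1}^n(\tau-I\{Y_i\le\beta(\tau)\})$, which is exactly the announced leading term. For the remainders, I would use the known fact that, after subtracting its mean, the Bahadur remainder for a single batch is of size $O_{\pr}(m^{-3/4}(\log m)^{1/2})$ (this is the $n^{-3/4}$ rate for quantiles referenced via \cite{koenker2005quantile}); since the batches are independent, the average of the $N$ centered remainders has standard deviation $O(N^{-1/2}m^{-3/4}(\log m)^{1/2})=O(n^{-1/2}m^{-1/4}(\log m)^{1/2})$, matching the first error term in the statement. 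The residual term $O((\log m)^2/m^{5/4})$ comes from the next-order contribution in the per-batch expansion (the fluctuation of the $1/m$-order correction around its mean), again of order $m^{-5/4}$ up to logarithmic factors, and it does \emph{not} shrink under averaging because it is a deterministic-order quantity. To make the $\log m$ factors rigorous one needs a uniform-in-$n\le m^A$ control, which is where a maximal/exponential inequality for empirical processes on a neighborhood of $\beta(\tau)$ enters; the positivity and bounded third derivative of $f$ guarantee the requisite Lipschitz/smoothness bounds.

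The three corollaries then follow by inspecting which term dominates. Part (1): when $n/m^2\to 0$, the bias $-\tau(1-\tau)f'(\beta(\tau))/(2mf^3(\beta(\tau)))=o(1/\sqrt n)$ and both error terms are $o(1/\sqrt n)$, so $\sqrt n(\widehat\beta_{ndc}-\beta(\tau))$ is asymptotically the normalized sum of i.i.d.\ terms, giving the stated normal limit by the Lindeberg--L\'evy CLT with variance $\tau(1-\tau)/f(\beta(\tau))^2$. Part (2): when $n/m^2\to c$, the bias term equals $\sqrt{c}\,b/\sqrt n\,(1+o(1))$ with $b$ as defined, so $\sqrt n(\widehat\beta_{ndc}-\beta(\tau))-\sqrt c\,b\Rightarrow N(0,\tau(1-\tau)/f(\beta(\tau))^2)$. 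Part (3): when $n/m^2\to\infty$ and $b\neq 0$, multiply through by $m$; the $1/m$ bias term becomes exactly $b$, while the leading stochastic term is $O_{\pr}(m/\sqrt n)=o_{\pr}(1)$ and $m$ times the error terms are $o_{\pr}(1)$ as well (using $n\le m^A$ to keep $\log m$ harmless), so $m(\widehat\beta_{ndc}-\beta(\tau))/b\to 1$ in probability.

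The main obstacle is establishing the higher-order per-batch expansion with the \emph{uniform} error bounds $O_{\pr}(m^{-3/4}(\log m)^{1/2})$ and $O((\log m)^2 m^{-5/4})$ that hold simultaneously over all $N=n/m$ batches when $n$ is polynomially large in $m$; this requires a careful empirical-process argument (discretizing a shrinking neighborhood of $\beta(\tau)$ and using Bernstein's inequality together with the smoothness of $f$) rather than an off-the-shelf single-sample Bahadur theorem, because a naive union bound over $N$ batches must be shown not to inflate the rate beyond the claimed logarithmic factors.
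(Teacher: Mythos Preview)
Your proposal follows essentially the same route as the paper: obtain a per-batch higher-order Bahadur expansion, separate the deterministic $1/m$ bias from the stochastic remainder, average over the $N$ independent batches, and read off the three regimes. The paper's proof differs from your sketch mainly in \emph{how} the bias constant $-\tau(1-\tau)f'(\beta(\tau))/(2mf^{3}(\beta(\tau)))$ is obtained. Rather than appealing to Edgeworth expansions for order statistics, the paper Taylor-expands $F$ to write
\[
Y_{\mathcal{H}_k,(j)}-\beta=\frac{F(Y_{\mathcal{H}_k,(j)})-\tau}{f(\beta)}-\frac{f'(\beta)}{2f(\beta)}\bigl(Y_{\mathcal{H}_k,(j)}-\beta\bigr)^{2}+O\bigl(|Y_{\mathcal{H}_k,(j)}-\beta|^{3}\bigr),
\]
then uses two elementary facts: (i) the probability integral transform gives $\ep F(Y_{\mathcal{H}_k,(j)})=j/(m+1)$ \emph{exactly}, so the mean of $C_{1}^{(k)}:=F(Y_{\mathcal{H}_k,(j)})-F(\beta)-[\tau-F_{k}(\beta)]$ is $j/(m+1)-\tau$; and (ii) squaring the first-order Bahadur representation yields $\frac{1}{N}\sum_k(Y_{\mathcal{H}_k,(j)}-\beta)^{2}=\tau(1-\tau)/(mf^{2}(\beta))+O_{\pr}((\log m)^{2}/m^{5/4})$. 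The specific interpolation weights $(1-\gamma,\gamma)$ with $\gamma=\tau(m+1)-j$ then cancel the $j/(m+1)-\tau$ piece exactly, rather than merely being absorbed as an $O(1/m)$ correction. This gives a fully self-contained derivation of the bias without invoking external expansion results.

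Two small points where your description is slightly off. First, to pass from a per-batch $O_{\pr}(m^{-3/4}\log m)$ remainder to an averaged $O_{\pr}(n^{-1/2}m^{-1/4}\log m)$ you need a second-moment (or tail) bound, not just an in-probability bound; the paper secures this via the exponential inequality from Bahadur's Lemma~1, giving $\ep (C_1^{(k)})^{2}=O((\log m)^{2}/m^{3/2})$. You allude to this at the end, but it is the correct mechanism rather than the ``standard deviation of an average'' heuristic stated earlier. Second, the $(\log m)^{2}/m^{5/4}$ term is not a ``deterministic-order quantity''; it is the uniform-in-$k$ $O_{\pr}$ bound on the cross/remainder terms when squaring the Bahadur representation, and it survives averaging simply because the paper does not attempt a variance reduction for that piece (nor is one needed for the stated rate).
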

	
	Theorem \ref{thm:sample_quantile} shows that when $n/m^{2} \rightarrow \infty$, the estimation accuracy of $\widehat{\beta}$ only depends on $m$ and  increasing the sample size $n$ in the na\"ive-DC method does not improve the estimation accuracy. Also, applying the na\"ive-DC method to streaming data requires $\Omega(\sqrt{n})$ space to have the asymptotic normality and optimal convergence rate.
	
	\subsection{Proof of Theorem \ref{thm:sample_quantile}}
	\label{sec:proof_sample_quantile}
	Let $F$ and $F_k$ be the distribution of $Y$ and the empirical distribution on $\mathcal{D}_k$.
	It follows directly from  the proof in Bahadur (1966) that, uniformly for $1\leq k\leq N$,
	\begin{equation}\label{eq:bahadur}
		Y_{\mathcal{H}_{k},(j)}-\beta(\tau)=\dfrac{1}{mf(\beta(\tau))}\sum\limits_{i\in \mathcal{H}_k}\Big(\tau-I\{Y_i\leq \beta(\tau)\}\Big)+O_{\pr}\left(\frac{\log m}{m^{3/4}}\right),
	\end{equation}
	where $j=\lfloor{\tau(m+1)}\rfloor$,
	which in turn implies that $\max_{1\leq k\leq N}|\hat{\beta}_{k}-\beta(\tau)|=O_{\pr}(\sqrt{(\log m)/m})$.
	
	For the simplicity of the proof, we denote by $\beta=\beta(\tau)$.
	Since $F$ has bounded third derivatives, for any $k$,
	\[
	F(Y_{\mathcal{H}_{k},(j)})-F(\beta)=f(\beta)(Y_{\mathcal{H}_{k},(j)}-\beta)+\frac{1}{2}f'(\beta)(Y_{\mathcal{H}_{k},(j)}-\beta)^2+R,
	\]
	where $R\leq C|Y_{\mathcal{H}_{k},(j)}-\beta|^3$ for some constant $C$.
	Hence
	\begin{equation}\label{eq:sq_err}
		Y_{\mathcal{H}_{k},(j)}-\beta=\dfrac{F(Y_{\mathcal{H}_{k},(j)})-F(\beta)}{f(\beta)}-\frac12\dfrac{f'(\beta)}{f(\beta)}(Y_{\mathcal{H}_{k},(j)}-\beta)^2-\dfrac{R}{f(\beta)}.
	\end{equation}
	Denote by $C_1^{(k)}=F(Y_{\mathcal{H}_{k},(j)})-F(\beta)-\left[\tau-F_k(\beta)\right]$.
	By the proof of Lemma 1 in Bahadur (1966), we have, for any $c_{1}>0$, there exists a sufficiently large $c_{1}>0$ such that
	\begin{eqnarray*}
		\pr\Big{(}|C_1^{(k)}|\geq c_{1}\dfrac{\log m}{m^{3/4}}\Big{)}=O(m^{-c_{2}}).
	\end{eqnarray*}
	This implies that $\ep (A^{k}_{1})^{2}=O(\frac{(\log m)^{2}}{m^{3/2}})$.
	Note that $F(x)$ is an nondecreasing function and $F(Y)$ is uniform in $[0,1]$.
	We have $\ep F(Y_{\mathcal{H}_{k},(j)})=j/(m+1)$. Therefore,
	\begin{eqnarray*}
		\frac{1}{N}\sum_{k=1}^{N}C_1^{(k)}&=&\frac{1}{N}\sum_{k=1}^{N}(C_1^{(k)}-\ep C_1^{(k)})+j/(m+1)-\tau\cr
		&=&O_{\pr}\Big{(}\frac{\log m}{\sqrt{n}m^{1/4}}\Big{)}+j/(m+1)-\tau.
	\end{eqnarray*}
	By (\ref{eq:bahadur}), it is easy to see
	\[
	(Y_{\mathcal{H}_{k},(j)}-\beta)^2=\dfrac{1}{m^{2}f^{2}(\beta)}\Big{[}\sum\limits_{i\in \mathcal{H}_k}\Big(\tau-I\{Y_i\leq \beta\}\Big)\Big{]}^{2}+O_{\mathrm{p}}\left(\dfrac{(\log m)^{2}}{m^{5/4}}\right)
	\]
	uniformly for $1\leq k\leq N$. Therefore
	\begin{eqnarray*}
		\frac{1}{N}\sum_{k=1}^{N}(Y_{\mathcal{H}_{k},(j)}-\beta)^2=\frac{\tau(1-\tau)}{mf^{2}(\beta(\tau))}+O_{\mathrm{p}}\left(\dfrac{(\log m)^{2}}{m^{5/4}}\right).
	\end{eqnarray*}
	Combining the above arguments, we obtain
	\begin{eqnarray*}
		Y_{\mathcal{H}_{k},(j)}-\beta&=&\frac{1}{n}\sum_{i=1}^{n}(\tau-I\{Y_{i}\leq \beta\})-\frac{\tau(1-\tau)f'(\beta)}{2mf^{3}(\beta)}+j/(m+1)-\tau\cr
		& &+O_{\pr}\Big{(}\frac{\log m}{\sqrt{n}m^{1/4}}+\dfrac{(\log m)^{2}}{m^{5/4}}\Big{)}.
	\end{eqnarray*}
	It is easy to see that the above equation also holds for $Y_{\mathcal{H}_{k},(j+1)}$. The proof is complete.
	
	\section{Additional simulations}
	\label{sec:supp_exp}

	
	
	\subsection{Coverage rates, bias and variance analysis when $p=3$}
	\label{sec:p_15}
	In Tables \ref{table:bias_4}--\ref{table:bias_6}, we report the performance of our proposed DC LEQR, na\"ive-DC, and QR All when $p=3$. The settings of batch size $m$, the total sample size $n$, the quantile level $\tau$, and the distribution of the noise remain the same as in Tables \ref{table:bias_1}--\ref{table:bias_3}. From \eqref{eq:q}, it is easy to see that now we need the number of aggregations $q\geq3$. Consequently we report the performance of our DC LEQR when $q=3$ or $q=4$. As one can see, na\"ive-DC still fails when $n$ is large,  while the coverage rates of our DC LEQR remain at the nominal level (or similar to QR All) when  $q=4$.
	
	\subsection{Sensitivity analysis for other noise types}
	\label{sec:supp_sens}
	In this section, we reproduce Tables \ref{table:adaptive}--\ref{table:sesd_adaptive} with the other two noise types (e.g., heteroscedastic normal and exponential). The results are reported in Tables \ref{table:adaptive_2}--\ref{table:sesd_adaptive_3}.  Similar to the case of homoscedastic normal, as long as the number of iterations $q \geq 3$, all different choices of scaling constants lead to coverage rates close to the nominal level of 95\%.
	
	
	\subsection{DC LEQR for large $p$}
	\label{sec:large_p}
	We generate the data from the model in \eqref{eq:simu_model} and adopt the same setting described in Section \ref{sec:simu}. We consider two types of noises: homoscedastic normal, $\epsilon_i\sim N(0,1)$; and heteroscedastic normal, $\epsilon_i\sim N(0,(1+0.3X_{i1})^2)$. The dimension $p$ is set to $1000$. We vary the sample size $n$ from $10^6$ to $2\times 10^6$ and the batch size $m$ from $10,000$ to $20,000$. We set the nominal level to be $95\%$ and the quantile level $\tau = 0.1$.
	In Tables \ref{table:large_1}--\ref{table:large_2}, we report the coverage rates, bias and variance of the proposed DC LEQR with number of iterations $q =4$ and $q=8$. For comparison, we also report the performance of QR All (i.e., the QR estimator in \eqref{eq:QR} computed on all data points) and na\"ive-DC. All results reported are the average of 1000 independent runs.
	
	From the results in Tables \ref{table:large_1}--\ref{table:large_2}, the coverage rates of the proposed DC LEQR are close to the nominal level 95\% after $q=4$ iterations. In contrast, the coverage rates of the na\"ive-DC estimator  are lower than the nominal level when the number of machines is large (i.e., either the batch size $m$ is small for a fixed sample size $n$ or the sample size $n$ is large for a fixed batch size $m$). Similar to the bias and variance analysis in Tables \ref{table:bias_1}--\ref{table:bias_3}, the bias of na\"ive-DC is much larger than the proposed DC LEQR and QR All, which explains the reason of the failure of na\"ive-DC in terms of  inference. The variance of the proposed DC LEQR decreases when the number of iterations $q$ increases. When $q=8$, the variance of DC LEQR is very close to that of the na\"ive-DC as well as QR All. In summary, when the dimension $p$ is large, the proposed DC LEQR algorithm still achieves desirable performance with a small number of iterations.


	\subsection{Comparison with ADMM}\label{sec:admm}
	
	Recently, researchers have developed new optimization techniques based on  alternating direction method of multiplier (ADMM)  for solving QR problems (see, e.g., \cite{yu2017parallel,gu2017admm}). We refer the readers to \cite{boyd2011distributed} for more details on ADMM. While their work mostly focuses on optimizing a high-dimensional QR objective with penalization, the proposed method can also be modified to optimizing un-regularized QR problems.
	
	We provide a comparison between DC LEQR and ADMM-based QR estimator in Table \ref{table:ADMM}. We conduct experiments for different $m$ and $n$ with $\tau=0.1$, $p=15$, and $\epsilon_i\sim N(0,1)$. We report empirical coverage rates, mean bias and variance of $\v_0'\be$ and computation time as an average of 1000 independent runs of the simulations.
	
	Due to our limited computing resource, we implement the both methods on a single machine but record the time as in a parallel setting. In particular, we separate the data to $N$ subsets. For DC LEQR, we first compute the initial estimator and record the computation time $T_0$. Next, for each round $g=1,2,\dots,q$, we compute the local statistic for each batch of data $k$ for $k=1,2,\dots, N$ and record the local computation time $T_{1,k}^{(g)}$. We take the maximum of the local computation time for different batches of data as an estimated time for this step (i.e., $\sum_{g=1}^q\max_kT_{1,k}^{(g)}$).      Then, we aggregate the local statistics by solving a linear system in \eqref{ag1} and record the aggregation time $T_2^{(g)}$. The total computation time for LEQR is calculated as $T_0+\sum_{g=1}^q\max_kT_{1,k}^{(g)}+\sum_{g=1}^qT_2^{(g)}$. For ADMM, denote by $\r=\y-\X\be$ where $\y=(y_1,y_2,\dots, y_n)'$ and $\X=(\X_1,\X_2,\dots, \X_n)'$. The primal update of $\be$ can be divided into two steps that fit a parallel computing scheme as suggested in \cite{yu2017parallel}. Following \cite{yu2017parallel}, in the $g$-th iteration, each of the primal $\be$-step, $\r$-step, and the dual step can all be updated separately for each batch of data by allowing the communication of the estimator of the $(g-1)$-th iteration $\widehat{\be}^{(g-1)}$. For each iteration $g$ and data batch $k$, we record the local computation time $\widetilde{T}_{1,k}^{(g)}$ and aggregation time $\widetilde{T}_2^{(g)}$. The total computation time for ADMM is calculated as $\sum_{g=1}^{q_1}\max_k \widetilde{T}_{1,k}^{(g)}+\sum_{g=1}^{q_1} \widetilde{T}_2^{(g)}$ where $q_1$ is the number of iterations.
	
	The recorded computation time of the both methods is analogous to the distributed setting despite that the communication time is neglected.  On the other hand, the communication time could vary a lot from one setting to another depending on the communication protocol (e.g., communication within a single machine, within a local area network, or on the internet). Therefore, we will investigate the theoretical communication cost in the next paragraph. Minimizing the communication time in practice under different communication protocols is not the main focus of the paper and we would like to leave it for future investigation.
	

	In terms of total communication cost, our DC LEQR requires $O(qp^2(n/m))$ bits, where $q$ is the number of iterations and $n/m$ is the number of data batches. Correspondingly, ADMM requires $O(q_1p(n/m))$ bits, where $q_1$ is the number of iterations of ADMM, which depends on the condition number of the Hessian matrix (and may depend on $p$).  Although the number of iterations $q$ has an explicit formula with dimension $p$ (see Eq. \eqref{eq:q}),  we can not find a precise relation between $q_{1}$ and $p$ in the existing literature on quantile regression. Hence it is a bit hard to compare the communication cost rigorously. In our simulation setup with $p=15$, the ADMM algorithm often stops within around 50 to 100 iterations. For example, a typical trail of ADMM stops at $84$, $88$, $55$, $57$ iterations when $(m,n)=(100, 10^6)$, $(500,10^6)$, $(500,10^6)$, $(1000,10^6)$, respectively.


	As one can see from Table \ref{table:ADMM}, on one hand, since ADMM optimizes the QR objective on all the samples, the coverage rates of the obtained solution are close to the nominal level of 95\%. On the other hand, our method is still  faster than ADMM-based method.

	\subsection{Online LEQR}
	\label{sec:online_exp}
	In addition, in Figure \ref{fig:online}, we investigate the performance of online LEQR. We fix the memory size $m=500$ and choose the dimensionality from $p\in\{3,15\}$.  We consider the homoscedastic normal noises as described in the beginning of Section \ref{sec:simu}. The size of online streaming data $n$ (i.e., $m+j$ in Theorem \ref{th4.3}) varies from $n=10^4$ to $n=10^6$. From Theorem \ref{th4.3}, an oracle $(1- \alpha_0)$-th confidence interval for $\v_0'\be(\tau)$ is given by
	\begin{equation}
		\v_0'\widehat{\be}[n-m]\pm n^{-1/2}\sqrt{\tau(1-\tau)\v_0'\D^{-1}\ep\left[\vec{X}\vec{X}'\right]\D^{-1}\v_0}z_{\alpha_0/2},
	\end{equation}
	where $z_{\alpha_0/2}$ is the $(1-\alpha_0/2)$-quantile of the standard normal distribution. Similarly as the DC LEQR, $\D$ and $\ep\left[\vec{X}\vec{X}'\right]$ are estimated by $\frac{1}{n-r_{l-2}}(\V(r_{l-1})+\V(n))$ and $\frac{1}{n} \sum_{i=1}^n \X_i \X_i'$, respectively.  We report the empirical coverage rate as an average of 1000 independent runs of the simulations.   For all the settings, the coverage rates are close to the nominal level $95\%$ when the sample size $n$ is reasonably large (e.g., $n \geq 10^5$). Note that for many emerging internet applications involving online streaming data, the data sequence is long.
	
	\begin{table}[!t]
		\centering
		\caption{Coverage rates, bias, and variance of DC LEQR v.s. QR All (i.e., the quantile regression estimator in \eqref{eq:QR} computed on all data points) and na\"ive-DC when $n$ varies from $m^{1.6}$ to $m^{3}$. Noises $\epsilon_i$'s are generated from homoscedastic normal distribution. Dimension $p=3$. Batch size $m=100$. Quantile level $\tau \in \{0.1,0.5,0.9\}$.}
		\label{table:bias_4}\label{table:coverage}
		\begin{tabular}{ll|crr|crr}
			\hline
			&$\log_m(n)$&Coverage&Bias ~~~  & Var ~~~~~ &Coverage&Bias~~~  & Var ~~~~~\\
			&&Rate&($\times 10^{-2}$) & ($\times 10^{-4}$)&Rate&($\times 10^{-2}$) & ($\times 10^{-4}$)\\ \hline
			&          & \multicolumn{3}{c|}{DC LEQR $q=3$} &   \multicolumn{3}{c}{DC LEQR $q=4$} \\
			\hline
			\multicolumn{2}{l|}{$\tau=0.1$}  &&&&&& \\
			& 1.6 & 0.932  & 0.21  & 32.32 & 0.935    & 0.07  & 31.46 \\
			& 2.0   & 0.948  & 0.14  & 4.47  & 0.950    & 0.14  & 4.44  \\
			& 2.4 & 0.964  & 0.01  & 0.65  & 0.967    & 0.00  & 0.63  \\
			& 3.0   & 0.939  & 0.02  & 0.05  & 0.958    & 0.00  & 0.04  \\
			\multicolumn{2}{l|}{$\tau=0.5$}  &&&&&& \\
			& 1.6 & 0.941  & 0.07  & 15.60 & 0.942    & 0.07  & 15.63 \\
			& 2.0   & 0.946  & -0.05 & 2.25  & 0.945    & -0.05 & 2.25  \\
			& 2.4 & 0.955  & -0.01 & 0.38  & 0.955    & -0.01 & 0.38  \\
			& 3.0   & 0.947  & -0.01 & 0.02  & 0.947    & -0.01 & 0.02  \\
			\multicolumn{2}{l|}{$\tau=0.9$} &&&&&& \\
			& 1.6 & 0.938  & -0.38 & 30.70 & 0.936    & -0.26 & 30.53 \\
			& 2.0   & 0.939  & 0.03  & 4.55  & 0.943    & 0.04  & 4.49  \\
			& 2.4 & 0.948  & 0.01  & 0.71  & 0.949    & 0.02  & 0.69  \\
			& 3.0   & 0.924  & -0.03 & 0.07  & 0.956    & -0.01 & 0.04  \\
			\hline
			&&\multicolumn{3}{c|}{QR All} &\multicolumn{3}{c}{Na\"ive-DC}     \\
			\hline
			\multicolumn{2}{l|}{$\tau=0.1$}  &&&&&& \\
			& 1.6 & 0.944  & 0.06  & 29.80 & 0.933    & 1.20  & 30.35 \\
			& 2.0   & 0.940  & -0.05 & 4.48  & 0.913    & 1.19  & 4.39  \\
			& 2.4 & 0.951  & 0.02  & 0.71  & 0.687    & 1.20  & 0.71  \\
			& 3.0   & 0.948  & -0.01 & 0.04  & 0.000    & 1.20  & 0.04  \\
			\multicolumn{2}{l|}{$\tau=0.5$}  &&&&&& \\
			& 1.6 & 0.939  & -0.04 & 16.44 & 0.947    & -0.10 & 16.91 \\
			& 2.0   & 0.960  & 0.01  & 2.42  & 0.954    & 0.01  & 2.43  \\
			& 2.4 & 0.943  & -0.03 & 0.39  & 0.933    & -0.01 & 0.39  \\
			& 3.0   & 0.937  & 0.01  & 0.03  & 0.946    & 0.01  & 0.03  \\
			\multicolumn{2}{l|}{$\tau=0.9$}  &&&&&& \\
			& 1.6 & 0.952  & -0.17 & 30.36 & 0.948    & -0.83 & 30.39 \\
			& 2.0   & 0.958  & -0.01 & 4.06  & 0.938    & -0.69 & 4.24  \\
			& 2.4 & 0.955  & -0.05 & 0.72  & 0.871    & -0.71 & 0.67  \\
			& 3.0   & 0.940  & 0.00  & 0.05  & 0.077    & -0.70 & 0.04 \\
			\hline
		\end{tabular}
	\end{table}

	\begin{table}[!t]
		\centering
		\caption{Coverage rates, bias, and variance of DC LEQR v.s. QR All (i.e., the QR estimator in \eqref{eq:QR} computed on all data points) and na\"ive-DC when $n$ varies from $m^{1.6}$ to $m^{3}$. Noises $\epsilon_i$'s are generated from heteroscedastic normal distribution. Dimension $p=3$. Batch size $m=100$. Quantile level $\tau \in \{0.1,0.5,0.9\}$. 
		}
		\label{table:bias_5}
		\begin{tabular}{ll|crr|crr}
			\hline
			&$\log_m(n)$&Coverage&Bias ~~~  & Var ~~~~~ &Coverage&Bias~~~  & Var ~~~~~\\
			&&Rate&($\times 10^{-2}$) & ($\times 10^{-4}$)&Rate&($\times 10^{-2}$) & ($\times 10^{-4}$)\\ \hline
			&          & \multicolumn{3}{c|}{DC LEQR $q=3$} &   \multicolumn{3}{c}{DC LEQR $q=4$} \\
			\hline
			\multicolumn{2}{l|}{$\tau=0.1$}  &&&&&& \\
			& 1.6 & 0.940  & 0.54  & 44.18 & 0.938    & 0.34  & 43.47 \\
			& 2.0 & 0.957  & 0.08  & 6.37  & 0.955    & 0.07  & 6.33  \\
			& 2.4 & 0.945  & 0.01  & 1.10  & 0.948    & 0.00  & 1.02  \\
			& 3.0 & 0.942  & 0.02  & 0.08  & 0.959    & 0.01  & 0.06  \\
			\multicolumn{2}{l|}{$\tau=0.5$}  &&&&&& \\
			& 1.6 & 0.951  & 0.12  & 21.01 & 0.950    & 0.12  & 20.99 \\
			& 2.0 & 0.958  & 0.01  & 3.06  & 0.956    & 0.01  & 3.06  \\
			& 2.4 & 0.955  & -0.02 & 0.54  & 0.955    & -0.02 & 0.54  \\
			& 3.0 & 0.954  & 0.01  & 0.03  & 0.955    & 0.01  & 0.03  \\
			
			\multicolumn{2}{l|}{$\tau=0.9$} &&&&&&  \\
			& 1.6 & 0.937  & -0.33 & 41.89 & 0.939    & -0.11 & 42.03 \\
			& 2.0 & 0.944  & -0.01 & 6.32  & 0.946    & 0.01  & 6.13  \\
			& 2.4 & 0.950  & 0.00  & 1.05  & 0.956    & 0.01  & 0.97  \\
			& 3.0 & 0.934  & -0.03 & 0.09  & 0.951    & -0.01 & 0.06  \\
			\hline
			&&\multicolumn{3}{c|}{QR All} &\multicolumn{3}{c}{Na\"ive-DC}     \\
			\hline
			\multicolumn{2}{l|}{$\tau=0.1$}  &&&&&& \\
			& 1.6 & 0.937  & 0.27  & 45.19 & 0.921    & 2.40  & 40.71 \\
			& 2.0 & 0.953  & 0.15  & 6.44  & 0.834    & 2.47  & 6.21  \\
			& 2.4 & 0.952  & 0.03  & 0.96  & 0.345    & 2.37  & 0.93  \\
			& 3.0 & 0.938  & 0.00  & 0.06  & 0.000    & 2.34  & 0.06  \\
			\multicolumn{2}{l|}{$\tau=0.5$}  &&&&&& \\
			& 1.6 & 0.957  & -0.03 & 22.72 & 0.941    & 0.04  & 23.23 \\
			& 2.0 & 0.945  & -0.01 & 3.59  & 0.946    & -0.02 & 3.58  \\
			& 2.4 & 0.946  & 0.02  & 0.53  & 0.943    & 0.01  & 0.54  \\
			& 3.0 & 0.946  & 0.00  & 0.03  & 0.944    & 0.00  & 0.03  \\
			\multicolumn{2}{l|}{$\tau=0.9$}  &&&&&& \\
			& 1.6 & 0.947  & -0.22 & 42.49 & 0.961    & -1.46 & 39.35 \\
			& 2.0 & 0.934  & -0.04 & 6.28  & 0.911    & -1.37 & 6.46  \\
			& 2.4 & 0.960  & 0.02  & 0.90  & 0.752    & -1.30 & 0.95  \\
			& 3.0 & 0.957  & 0.01  & 0.06  & 0.001    & -1.32 & 0.06  \\
			\hline
		\end{tabular}
	\end{table}
	
	\begin{table}[!t]
		\centering
		\caption{Coverage rates, bias, and variance of DC LEQR v.s. QR All (i.e., the QR estimator in \eqref{eq:QR} computed on all data points) and na\"ive-DC when $n$ varies from $m^{1.6}$ to $m^{3}$. Noises $\epsilon_i$'s are generated from exponential distribution. Dimension $p=3$. Batch size $m=100$. Quantile level $\tau \in \{0.1,0.5,0.9\}$.}
		\label{table:bias_6}
		\begin{tabular}{ll|crr|crr}
			\hline
			&$\log_m(n)$&Coverage&Bias ~~~  & Var ~~~~~ &Coverage&Bias~~~  & Var ~~~~~\\
			&&Rate&($\times 10^{-2}$) & ($\times 10^{-4}$)&Rate&($\times 10^{-2}$) & ($\times 10^{-4}$)\\ \hline
			&          & \multicolumn{3}{c|}{DC LEQR $q=3$} &   \multicolumn{3}{c}{DC LEQR $q=4$} \\
			\hline
			\multicolumn{2}{l|}{$\tau=0.1$}  &&&&&& \\
			& 1.6 & 0.951  & 0.24  & 0.86  & 0.950    & 0.15  & 0.96  \\
			& 2.0 & 0.958  & -0.01 & 0.17  & 0.956    & -0.01 & 0.17  \\
			& 2.4 & 0.955  & -0.01 & 0.03  & 0.944    & 0.00  & 0.03  \\
			& 3.0 & 0.954  & 0.00  & 0.00  & 0.941    & 0.00  & 0.00  \\
			\multicolumn{2}{l|}{$\tau=0.5$}  &&&&&& \\
			& 1.6 & 0.941  & -0.09 & 9.80  & 0.955    & -0.06 & 9.84  \\
			& 2.0 & 0.936  & 0.02  & 1.64  & 0.943    & 0.02  & 1.64  \\
			& 2.4 & 0.949  & -0.01 & 0.24  & 0.944    & -0.01 & 0.24  \\
			& 3.0 & 0.950  & 0.00  & 0.02  & 0.953    & 0.00  & 0.02  \\
			\multicolumn{2}{l|}{$\tau=0.9$}  &&&&&& \\
			& 1.6 & 0.941  & -0.62 & 92.70 & 0.940    & -0.25 & 85.62 \\
			& 2.0 & 0.936  & 0.04  & 14.32 & 0.937    & 0.11  & 13.44 \\
			& 2.4 & 0.949  & -0.06 & 3.15  & 0.949    & -0.01 & 2.30  \\
			& 3.0 & 0.950  & -0.18 & 2.24  & 0.953    & -0.01 & 0.30  \\
			\hline
			&&\multicolumn{3}{c|}{QR All} &\multicolumn{3}{c}{Na\"ive-DC}     \\
			\hline
			\multicolumn{2}{l|}{$\tau=0.1$}  &&&&&& \\
			& 1.6 & 0.947  & 0.09  & 1.10  & 0.835    & 1.00  & 1.28  \\
			& 2.0 & 0.956  & 0.02  & 0.16  & 0.324    & 1.00  & 0.19  \\
			& 2.4 & 0.941  & 0.00  & 0.03  & 0.001    & 0.98  & 0.03  \\
			& 3.0 & 0.951  & 0.00  & 0.00  & 0.000    & 0.99  & 0.00  \\
			\multicolumn{2}{l|}{$\tau=0.5$}  &&&&&& \\
			& 1.6 & 0.937  & 0.14  & 11.10 & 0.930    & 1.12  & 11.00 \\
			& 2.0 & 0.952  & 0.04  & 1.45  & 0.856    & 1.04  & 1.59  \\
			& 2.4 & 0.950  & -0.01 & 0.23  & 0.463    & 1.01  & 0.23  \\
			& 3.0 & 0.955  & 0.00  & 0.02  & 0.000    & 1.03  & 0.02  \\
			\multicolumn{2}{l|}{$\tau=0.9$}  &&&&&& \\
			& 1.6 & 0.953  & -0.27 & 86.11 & 0.945    & 1.79  & 83.42 \\
			& 2.0 & 0.958  & -0.06 & 12.91 & 0.938    & 1.68  & 12.05 \\
			& 2.4 & 0.946  & -0.04 & 2.22  & 0.766    & 1.75  & 2.12  \\
			& 3.0 & 0.958  & -0.01 & 0.14  & 0.002    & 1.79  & 0.14 \\
			\hline
		\end{tabular}
	\end{table}

	\begin{table}[!t]
		\centering
		\caption{Coverage rates for DC LEQR with iterations $q=1,2,3,4,5$ for different choices of the scaling constant $c$. Noises $\epsilon_i$'s are generated from heteroscedastic normal distribution. Dimension $p=15$. Batch size $m=100$. Sample size $n$ varies from $m^{1.6}$ to $m^{3}$. Quantile level $\tau=0.1$.}
		\label{table:adaptive_2}
		\begin{tabular}{lc|ccccc}
			\hline
			& $\log_m(n)$ & $q=1$ & $q=2$ & $q=3$ & $q=4$ &$q=5$ \\
			\hline
			$c=1$    &     &          &          &  &        &          \\
			& 1.6 & 0.477 & 0.851 & 0.902 & 0.941 &0.940 \\
			& 2.0 & 0.157 & 0.896 & 0.949 & 0.948 &0.948\\
			& 2.4 & 0.014 & 0.614 & 0.715 & 0.947 &0.947\\
			& 3.0 & 0.000 & 0.112 & 0.431 & 0.952 &0.950\\
			\hline
			$c=3$    &     &          &          & &         &          \\
			& 1.6 & 0.562 & 0.910 & 0.939 & 0.944 &0.946\\
			& 2.0 & 0.249 & 0.932 & 0.951 & 0.952 &0.949\\
			& 2.4 & 0.003 & 0.681 & 0.751 & 0.947 &0.947\\
			& 3.0 & 0.000 & 0.171 & 0.529 & 0.934 &0.942\\
			\hline
			$c=5$    &     &          &          &          &          \\
			& 1.6 & 0.421 & 0.754 & 0.882 & 0.943 &0.946\\
			& 2.0 & 0.134 & 0.648 & 0.845 & 0.946 &0.948\\
			& 2.4 & 0.006 & 0.311 & 0.676 & 0.936 &0.944\\
			& 3.0 & 0.000 & 0.101 & 0.359 & 0.907 &0.941\\
			\hline
			$c=10$   &     &          &          &          &          \\
			& 1.6 & 0.348 & 0.719 & 0.842 & 0.946 &0.951\\
			& 2.0 & 0.106 & 0.684 & 0.820 & 0.955 &0.952\\
			& 2.4 & 0.000 & 0.218 & 0.421 & 0.954 &0.950\\
			& 3.0 & 0.000 & 0.008 & 0.328 & 0.884 &0.919\\
			\hline
			data-adaptive &     &          &          &          &          \\
			& 1.6 & 0.615 & 0.945 & 0.948 & 0.949 &0.949\\
			& 2.0 & 0.357 & 0.913 & 0.951 & 0.952 &0.953\\
			& 2.4 & 0.049 & 0.649 & 0.794 & 0.949 &0.951\\
			& 3.0 & 0.010 & 0.217 & 0.534 & 0.935 &0.946\\
			\hline
		\end{tabular}
	\end{table}
	
	\begin{table}[!t]
		\centering
		\caption{Coverage rates for DC LEQR with iterations $q=1,2,3,4,5$ for different choices of the scaling constant $c$. Noises $\epsilon_i$'s are generated from exponential distribution. Dimension $p=15$. Batch size $m=100$. Sample size $n$ varies from $m^{1.6}$ to $m^{3}$. Quantile level $\tau=0.1$.}
		\label{table:adaptive_3}
		\begin{tabular}{lc|ccccc}
			\hline
			& $\log_m(n)$ & $q=1$ & $q=2$ & $q=3$ & $q=4$ & $q=5$ \\
			\hline
			$c=1$    &     &          &          &          &        &  \\
			& 1.6 & 0.313 & 0.745 & 0.832 & 0.880&0.916 \\
			& 2.0 & 0.116 & 0.678 & 0.846 & 0.920&0.942 \\
			& 2.4 & 0.014 & 0.514 & 0.726 & 0.915&0.931 \\
			& 3.0 & 0.000 & 0.127 & 0.429 & 0.907&0.931 \\
			\hline
			$c=3$    &     &          &          &          &          \\
			& 1.6 & 0.363 & 0.754 & 0.864 & 0.921 &0.944\\
			& 2.0 & 0.145 & 0.715 & 0.869 & 0.907&0.929 \\
			& 2.4 & 0.079 & 0.528 & 0.758 & 0.912&0.943 \\
			& 3.0 & 0.000 & 0.184 & 0.541 & 0.938 &0.946\\
			\hline
			$c=5$    &     &          &          &          &          \\
			& 1.6 & 0.296 & 0.749 & 0.898 & 0.871 &0.922\\
			& 2.0 & 0.101 & 0.654 & 0.865 & 0.904&0.936\\
			& 2.4 & 0.012 & 0.544 & 0.731 & 0.891&0.914 \\
			& 3.0 & 0.000 & 0.131 & 0.467 & 0.842&0.876 \\
			\hline
			$c=10$   &     &          &          &          &          \\
			& 1.6 & 0.216 & 0.698 & 0.812 & 0.896 &0.917\\
			& 2.0 & 0.031 & 0.610 & 0.764 & 0.889 &0.914\\
			& 2.4 & 0.002 & 0.419 & 0.713 & 0.876 &0.897\\
			& 3.0 & 0.000 & 0.006 & 0.308 & 0.812 &0.844\\
			\hline
			data-adaptive &     &          &          &          &          \\
			& 1.6 & 0.377 & 0.798& 0.899 & 0.917 &0.949\\
			& 2.0 & 0.209 & 0.745 & 0.894 & 0.902 &0.941\\
			& 2.4 & 0.121 & 0.605 & 0.814 & 0.924 &0.936\\
			& 3.0 & 0.000 & 0.218 & 0.699 & 0.900 &0.913\\
			\hline
		\end{tabular}
	\end{table}
	\begin{table}[!t]
		\centering
		\caption{Square root of the ratio of the estimated variance and the true limiting variance  using  different choices of the scaling constant $c$ in bandwidths. Noises $\epsilon_i$'s are generated from heteroscedastic normal distribution. Dimension $p=15$. Batch size $m=100$. Sample size $n$ varies from $m^{1.6}$ to $m^{3}$. Quantile level $\tau=0.1$.}
		\label{table:sesd_adaptive_2}
		\begin{tabular}{lc|ccccc}
			\hline
			& $\log_m(n)$ & $q=1$ & $q=2$ & $q=3$ & $q=4$ &$q=5$ \\
			\hline
			$c=1$    &     &          &          &          &          \\
			& 1.6 & 1.15 & 1.10 & 1.08 & 1.06 &1.02\\
			& 2.0 & 1.42 & 1.19 & 1.12 & 1.08 &1.03\\
			& 2.4 & 1.23 & 1.14 & 1.06 & 1.04 &1.01\\
			& 3.0 & 1.06 & 1.04 & 0.99 & 0.98 &0.97\\
			\hline
			$c=3$    &     &          &          &          &          \\
			& 1.6 & 1.10 & 1.05 & 1.02 & 1.03 &1.01\\
			& 2.0 & 1.31 & 0.11 & 1.07 & 1.02 &1.00\\
			& 2.4 & 1.17 & 1.04 & 1.03 & 1.03 &1.03\\
			& 3.0 & 1.04 & 1.01 & 1.00 & 1.00  &0.99\\
			\hline
			$c=5$    &     &          &          &          &          \\
			& 1.6 & 1.12 & 1.01 & 1.03 & 1.03 &1.03\\
			& 2.0 & 1.37 & 1.21 & 1.10 & 1.07 &1.01\\
			& 2.4 & 1.22 & 1.15 & 1.05 & 1.04 &1.04\\
			& 3.0 & 1.14 & 1.09 & 1.02 & 1.02 &1.01\\
			\hline
			$c=10$   &     &          &          &          &          \\
			& 1.6 & 1.19 & 1.09 & 0.98 & 0.96 &0.94\\
			& 2.0 & 1.47 & 1.17 & 1.19 & 1.14 &1.10\\
			& 2.4 & 1.25 & 1.20 & 1.11 & 1.12 &1.07\\
			& 3.0 & 1.11 & 1.04 & 1.04 & 1.03 &1.03\\
			\hline
			data-adaptive &     &          &          &          &          \\
			& 1.6 & 1.04 & 1.07 & 0.99 & 1.03 &1.05\\
			& 2.0 & 1.06 & 1.05 & 1.03 & 1.02 &1.02\\
			& 2.4 & 1.07 & 1.04 & 1.00 & 1.01 &1.03\\
			& 3.0 & 1.06 & 1.01 & 0.94 & 0.97 &0.98\\
			\hline
		\end{tabular}
	\end{table}
	\begin{table}[!t]
		\centering
		\caption{Square root of the ratio of the estimated variance and the true limiting variance  using  different choices of the scaling constant $c$ in bandwidths. Noises $\epsilon_i$'s are generated from exponential distribution. Dimension $p=15$. Batch size $m=100$. Sample size $n$ varies from $m^{1.6}$ to $m^{3}$. Quantile level $\tau=0.1$.}
		\label{table:sesd_adaptive_3}
		\begin{tabular}{lc|ccccc}
			\hline
			& $\log_m(n)$ & $q=1$ & $q=2$ & $q=3$ & $q=4$&$q=5$ \\
			\hline
			$c=1$    &     &          &          &          &   &       \\
			& 1.6 & 1.29 & 1.18 & 1.06 & 1.04 &1.04\\
			& 2.0 & 1.16 & 1.08 & 1.02 & 1.00 &0.01\\
			& 2.4 & 1.14 & 1.12 & 1.07 & 1.06 &1.04\\
			& 3.0 & 1.06 & 1.04 & 1.02 & 1.02 &1.02\\
			\hline
			$c=3$    &     &          &          &  &        &          \\
			& 1.6 & 1.17 & 1.13 & 1.01 & 0.97 &0.98\\
			& 2.0 & 1.08 & 1.04 & 1.00 & 0.99 &1.00\\
			& 2.4 & 1.04 & 0.99 & 0.99 & 0.99 &1.00\\
			& 3.0 & 1.02 & 0.94 & 0.97 & 1.01 &1.03\\
			\hline
			$c=5$    &     &          &          &          &          \\
			& 1.6 & 1.24 & 1.11 & 1.09 & 1.11&1.09 \\
			& 2.0 & 1.15 & 1.01 & 1.08 & 1.07 &1.07\\
			& 2.4 & 1.06 & 1.04 & 1.04 & 1.01 &1.01\\
			& 3.0 & 1.03 & 1.01 & 1.02 & 1.01 &1.01\\
			\hline
			$c=10$   &     &          &          &          &          \\
			& 1.6 & 1.44 & 1.26 & 1.10 & 1.07 &1.07\\
			& 2.0 & 1.31 & 0.77 & 0.96 & 0.97 &0.98\\
			& 2.4 & 1.08 & 0.89 & 0.96 & 0.99 &1.00\\
			& 3.0 & 1.17 & 1.09 & 1.07 & 1.05 &1.04\\
			\hline
			data-adaptive &     &          &          &          &          \\
			& 1.6 & 1.12 & 1.04 & 1.03 & 1.00 &1.01\\
			& 2.0 & 1.04 & 0.96 & 0.97 & 1.01&1.02 \\
			& 2.4 & 1.03 & 0.99 & 1.00 & 1.00 &1.01\\
			& 3.0 & 1.01 & 0.96 & 0.99 & 1.00 &0.98\\
			\hline
		\end{tabular}
	\end{table}

	\begin{table}[!t]
		\centering
		\caption{Coverage rates, bias, and variance of DC LEQR v.s. QR All (i.e., the QR estimator in \eqref{eq:QR} computed on all data points) and na\"ive-DC when dimension $p=1000$. Noises $\epsilon_i$'s are generated from homoscedastic normal distribution.  The batch size $m \in \{10,000,20,000\}$ and the sample size $n \in \{10^6, 1.5\times 10^6, 2\times 10^6\}$.  Note that the method QR All does not depend on the batch size $m$ and thus the reported numbers are the same for $m=10,000$ and $20,000$. The quantile level is $\tau = 0.1$, and the nominal level is $95\%$.}
		\vspace{0.3cm}
		\label{table:large_1}
		\setlength{\tabcolsep}{3pt}
		\begin{tabular}{ll|ccc|ccc}
			\hline
			& $n$    & Coverage & Bias& Variance & Coverage   & Bias& Variance \\
			&   & Rate& $(\times 10^{-2})$ & $(\times 10^{-4})$ & Rate  & $(\times 10^{-2})$ & $(\times 10^{-4})$ \\
			\hline
			&   & \multicolumn{3}{c|}{DC LEQR $q=4$} &\multicolumn{3}{c}{DC LEQR $q=8$}\\
			\hline
			\multicolumn{2}{l|}{$m=10,000$\qquad~}   &&&&  &&\\
			\qquad~& $n=10^6$           & 0.950  & ~0.21  & 68.65 & 0.955    & ~0.52  & 37.95 \\
			& $n=1.5\times 10^6$ & 0.964  & -0.80 & 32.50 & 0.958    & -0.27 & 18.17 \\
			& $n=2\times 10^6$   & 0.967  & -0.22 & 12.59 & 0.955    & -0.34 & 11.73 \\
			\multicolumn{2}{l|}{$m=20,000$\qquad~}   &&&&  &&\\
			& $n=10^6$           & 0.952  & ~0.08  & 59.57 & 0.953    & -0.25 & 25.34 \\
			& $n=1.5\times 10^6$ & 0.961  & -0.04 & 18.66 & 0.951    & ~0.13  & 17.70 \\
			& $n=2\times 10^6$   & 0.964  & -0.25 & 11.86 & 0.956    & -0.15 & 12.46 \\
			\hline
			&   & \multicolumn{3}{c|}{QR All} &\multicolumn{3}{c}{Na\"ive-DC} \\
			\hline
			\multicolumn{2}{l|}{$m=10,000$\qquad~}   &&&&  &&\\
			& $n=10^6$           & 0.948  & ~0.59  & 21.66 & 0.797    & ~5.13  & 36.90 \\
			& $n=1.5\times 10^6$ & 0.952  & -0.16 & 18.39 & 0.739    & ~5.40  & 17.42 \\
			& $n=2\times 10^6$   & 0.946  & -0.11 & \,\;9.36  & 0.768    & ~4.48  & 14.98 \\
			\multicolumn{2}{l|}{$m=20,000$\qquad~}   &&&&  &&\\
			& $n=10^6$           & 0.948  & ~0.59  & 21.66  & 0.957    & ~2.17  & 22.91 \\
			& $n=1.5\times 10^6$ & 0.952  & -0.16 & 18.39  & 0.913    & ~1.83  & 18.84 \\
			& $n=2\times 10^6$   & 0.946  & -0.11 &  \,\;9.36   & 0.899    & ~1.94  & 14.58\\
			\hline
		\end{tabular}
		
	\end{table}
	\begin{table}[!t]
		\centering
		\caption{Coverage rates, bias, and variance of DC LEQR v.s. QR All (i.e., the QR estimator in \eqref{eq:QR} computed on all data points) and na\"ive-DC when dimension $p=1000$. Noises $\epsilon_i$'s are generated from heteroscedastic normal distribution. The batch size $m \in \{10,000,20,000\}$ and the sample size $n \in \{10^6, 1.5\times 10^6, 2\times 10^6\}$.  Note that the method QR All does not depend on the batch size $m$ and thus the reported numbers are the same for $m=10,000$ and $20,000$. The quantile level is $\tau = 0.1$, and the nominal level is $95\%$.}
		\vspace{0.3cm}
		\label{table:large_2}
		\setlength{\tabcolsep}{3pt}
		\begin{tabular}{ll|ccc|ccc}
			\hline
			& $n$    & Coverage & Bias& Variance & Coverage   & Bias& Variance \\
			&   & Rate& $(\times 10^{-2})$ & $(\times 10^{-4})$ & Rate  & $(\times 10^{-2})$ & $(\times 10^{-4})$ \\
			\hline
			&   & \multicolumn{3}{c|}{DC LEQR $q=4$} &\multicolumn{3}{c}{DC LEQR $q=8$}\\
			\hline
			\multicolumn{2}{l|}{$m=10,000$\qquad~}   &&&&  &&\\
			\qquad~& $n=10^6$ & 0.895    & ~0.88& 61.30   & 0.928 & -0.22    & 38.55   \\
			& $n=1.5\times 10^6$ & 0.964    & ~0.41& 38.07    & 0.957 & -0.51    & 23.24    \\
			& $n=2\times 10^6$   & 0.961    & -0.47    & 17.40    & 0.954 & -0.58    & 15.61    \\
			\multicolumn{2}{l|}{$m=20,000$\qquad~}   &&&&  &&\\
			& $n=10^6$ & 0.920    & -1.21    & 54.29   & 0.926 & ~0.95& 35.07   \\
			& $n=1.5\times 10^6$ & 0.960    & ~0.08& 32.54    & 0.953 & ~0.08& 21.67    \\
			& $n=2\times 10^6$   & 0.966    & -0.43    & 19.38    & 0.955 & -0.48    & 18.20    \\
			\hline
			&   & \multicolumn{3}{c|}{QR All} &\multicolumn{3}{c}{Na\"ive-DC} \\
			\hline
			\multicolumn{2}{l|}{$m=10,000$\qquad~}   &&&&  &&\\
			& $n=10^6$ & 0.954    & ~0.61& 31.14    & 0.959 & ~3.44& 37.35    \\
			& $n=1.5\times 10^6$ & 0.954    & -0.34    & 21.80    & 0.945 & ~3.41& 21.72    \\
			& $n=2\times 10^6$   & 0.947    & ~0.31& 14.71    & 0.808 & ~3.30& 22.53    \\
			\multicolumn{2}{l|}{$m=20,000$\qquad~}   &&&&  &&\\
			& $n=10^6$ &0.954    & ~0.61& 31.14 & 0.952 & ~1.21& 32.90    \\
			& $n=1.5\times 10^6$ & 0.954    & -0.34    & 21.80 & 0.945 & ~1.26& 26.15    \\
			& $n=2\times 10^6$   &0.947    & ~0.31& 14.71  & 0.904 & ~0.50& 18.97  \\
			\hline
		\end{tabular}
	\end{table}
	
	\begin{table}[!t]
		\centering
		\caption{Coverage rates (nominal level $95\%$), bias ($\times 10^{-2})$, variance ($\times 10^{-4})$,  and computation time (in seconds) of DC LEQR for different $q$ versus the ADMM-based QR.  Noises $\epsilon_i$'s are generated from homoscedastic normal distribution. The quantile level is $\tau = 0.1$. Dimension $p=15$.}
		\label{table:ADMM}
		\begin{tabular}{llrrrrr}
			\hline
			&& \multicolumn{4}{c}{DC LEQR}   & ADMM\\
			&&$q=1$&$q=2$   &$q=3$  & $q=4$   &  QR  \\
			\hline
			
			$m=100, n=10^6$  &&&    &    &    &\\
			\quad\quad\quad  &   Bias &6.112&0.865&0.038&-0.020&-0.021\\
			& Variance &35.464&1.637&0.037&0.027&0.029\\
			& Coverage &0.001&0.314&0.950&0.951&0.949\\
			& Time& 0.379    & 0.632   & 0.861   & 1.102  & 1.596 \\
			\hline
			$m=500, n=10^6$  &&&    &    &    &\\
			& Bias&1.334&0.029&-0.008&-0.010&-0.011\\
			& Variance &0.642&0.042&0.029&0.029&0.027\\
			& Coverage &0.087&0.914&0.947&0.951&0.951\\
			& Time& 1.216    & 2.354   & 3.499   & 4.618  & 5.542 \\
			\hline
			$m=500, n=10^7$  &&&    &    &    &\\
			& Bias&1.171&0.046&-0.005&-0.005&-0.005\\
			& Variance &0.885&0.016&0.002&0.002&0.002\\
			& Coverage &0.024&0.642&0.943&0.948&0.946\\
			& Time& 1.429   & 2.604  & 3.767 & 4.891 & 5.779 \\
			\hline
			$m=1000, n=10^7$ &&&    &    &    &\\
			& Bias&0.786&0.016&-0.007&-0.007&-0.007\\
			& Variance &0.167&0.002&0.003&0.003&0.003\\
			& Coverage &0.014&0.897&0.952&0.947&0.948\\
			& Time& 3.454   & 5.233  & 7.119 & 9.344 & 11.423 \\
			\hline
		\end{tabular}
	\end{table}

	\begin{figure}[!t]
		\centering
		\subfigure[$p=3$, $\epsilon_i\sim N(0,1)$]{\includegraphics[width = 0.48\textwidth]{./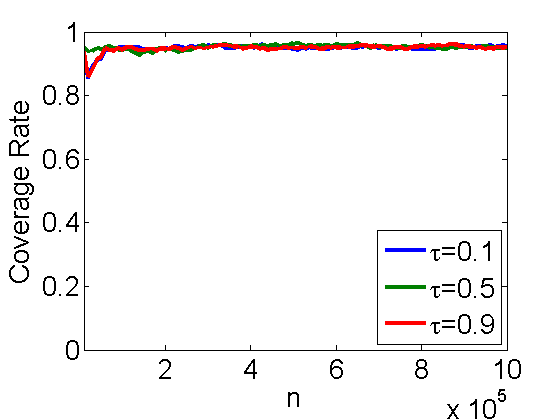}\label{fig:online_1}}
		\subfigure[$p=3$, $\epsilon_i\sim \exp(1)$]{\includegraphics[width = 0.48\textwidth]{./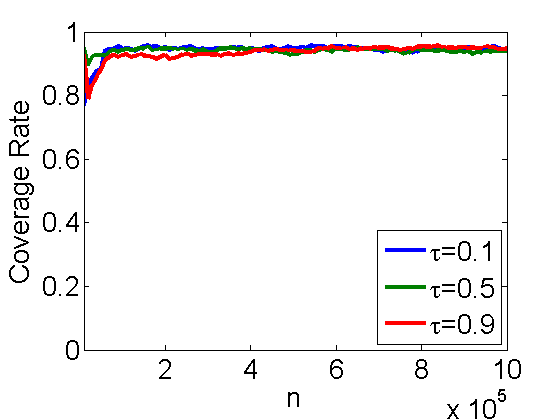}\label{fig:online_2}}
		\subfigure[$p=15$, $\epsilon_i\sim N(0,1)$]{\includegraphics[width = 0.48\textwidth]{./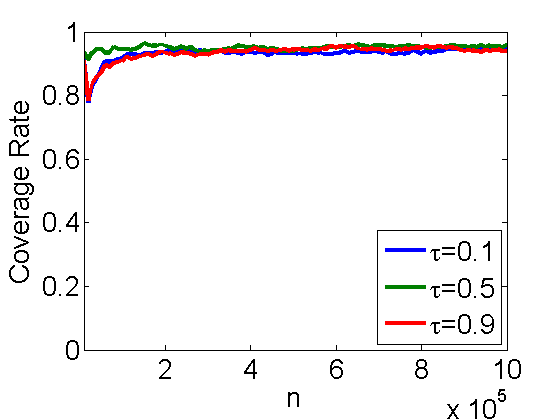}\label{fig:online_3}}
		\subfigure[$p=15$, $\epsilon_i\sim \exp(1)$]{\includegraphics[width = 0.48\textwidth]{./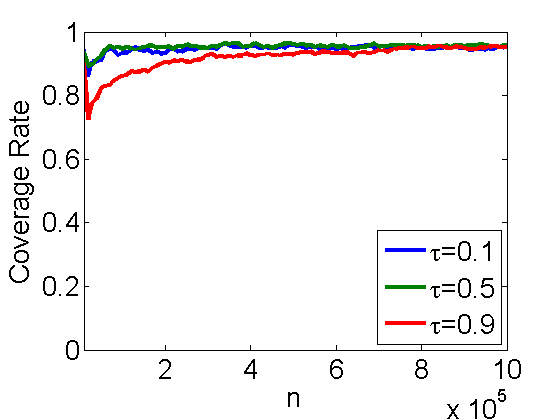}\label{fig:online_4}}
		\caption{Coverage rates of online LEQR when $m=500$.}
		\label{fig:online}
	\end{figure}

\end{document}